\newcommand{\blind}{0}
\def\spacingset#1{\renewcommand{\baselinestretch}%
	{#1}\small\normalsize} \spacingset{1}
\theoremstyle{remark}
\newtheorem{rem}{Remark}
\theoremstyle{definition}
\newtheorem{definition}{Definition}
\theoremstyle{plain}
\newtheorem{assumption}{Assumption}
\newenvironment{customass}[1]
{\innercustomass}
{\endinnercustomass}
\newtheorem{theorem}{Theorem}[section]
\DeclarePairedDelimiter{\abs}{\lvert}{\rvert}
\DeclarePairedDelimiter{\norm}{\lVert}{\rVert}
\newcommand\inverse{^{-1}}
\newcommand{\R}{\mathbb{R}}
\newcommand{\E}{\mathbb{E}}
\newcommand{\V}{\mathbb{V}}
\renewcommand{\Pr}{\mathbb{P}}
\newcommand*{\addFileDependency}[1]{% argument=file name and extension
\typeout{(#1)}% latexmk will find this if $recorder=0
% however, in that case, it will ignore #1 if it is a .aux or 
% .pdf file etc and it exists! If it doesn't exist, it will appear 
% in the list of dependents regardless)
%
% Write the following if you want it to appear in \listfiles 
% --- although not really necessary and latexmk doesn't use this
%
\@addtofilelist{#1}
%
% latexmk will find this message if #1 doesn't exist (yet)
\IfFileExists{#1}{}{\typeout{No file #1.}}
}\makeatother
\begin{document}

%\title{Testing many restrictions under heteroskedasticity
%%	\thanks{We thank Bruce Hansen, Michael Jansson, and Jack Porter for helpful comments.}
%}
%\author{\textsc{Stanislav Anatolyev}\thanks{CERGE-EI, Politick\'{y}ch v\v{e}z\v{n}\r{u} 7,
%11121 Prague 1, Czech Republic. E-mail: stanislav.anatolyev@cerge-ei.cz.
%This research was supported by the grant 20-28055S from the Czech Science
%Foundation.}, \ \textsc{Mikkel S\o lvsten}\thanks{Department of Economics and Business Economics, Aarhus University, Fuglsangs All\'{e} 4, DK-8210 Aarhus V, Denmark. E-mail: miso@econ.au.dk.}}
%\textsc{Mikkel S\o lvsten}\thanks{Department of Economics, University of Wisconsin, 1180 Observatory Drive, Madison, WI 53706, United States. E-mail: soelvsten@wisc.edu.}

%EndAName
%CERGE-EI, Czech Republic and NES, Russia
%University of Wisconsin-Madison
\date{January 2023}

\if0\blind
{
	\title{Testing Many Restrictions Under Heteroskedasticity%
	\thanks{We thank the Editor, Associate Editors, and a few diligent referees for suggestions that greatly improved the presentation. We are grateful to Bruce Hansen, Michael Jansson, Alexei Onatski, Jack Porter, and Tiemen Woutersen for useful discussions. We also appreciate helpful comments from seminar participants at UCL, Georgetown University, Erasmus University Rotterdam, and University of Arizona, as well as conference audiences at the 2022 Cowles Conference on Econometrics, 2021 Econometric Society Winter Meeting, ASSET 2021, MSA 2021, NordStat 2020, and LinStat 2020.}	
	}
	\author{\textsc{Stanislav Anatolyev}\thanks{CERGE-EI, Politick\'{y}ch v\v{e}z\v{n}\r{u} 7,
			11121 Prague 1, Czech Republic. E-mail: stanislav.anatolyev@cerge-ei.cz.
			This research was supported by the grant 20-28055S from the Czech Science
			Foundation.}, \
\textsc{Mikkel S\o lvsten}\thanks{Department of Economics and Business Economics, Aarhus University, Fuglsangs All\'{e} 4, DK-8210 Aarhus V, Denmark. E-mail: miso@econ.au.dk.}
%\textsc{Mikkel S\o lvsten}\thanks{Department of Economics, University of Wisconsin, 1180 Observatory Drive, Madison, WI 53706, United States. E-mail: soelvsten@wisc.edu.}
}
} \fi

\if1\blind
{
	\title{Testing Many Restrictions Under Heteroskedasticity
		\thanks{ We thank Bruce Hansen, Michael Jansson, and Jack Porter along with seminar participants at UCL, Georgetown, Erasmus University Rotterdam the 2021 Econometric Society Winter Meetings, for helpful comments.}
	}
	\author{}
} \fi

\maketitle

\begin{abstract}
\noindent
We propose a hypothesis test that allows for many tested restrictions in a heteroskedastic linear regression model.
The test compares the conventional F statistic to a critical value that corrects for many restrictions and conditional heteroskedasticity.
This correction uses leave-one-out estimation to correctly center the critical value and leave-three-out estimation to appropriately scale it.
The large sample properties of the test are established in an asymptotic framework where the number of tested restrictions may be fixed or may grow with the sample size, and can even be proportional to the number of observations.
We show that the test is asymptotically valid and has non-trivial asymptotic power against the same local alternatives as the exact F test when the latter is valid.
Simulations corroborate these theoretical findings and suggest excellent size control in moderately small samples, even under strong heteroskedasticity.

\bigskip

\noindent \textsc{Keywords:} linear regression, ordinary least squares, many regressors, leave-out estimation, hypothesis testing, high-dimensional models.\medskip

\noindent \textsc{JEL codes:} C12, C13, C21
\end{abstract}

%\QTP{Body Math}
\thispagestyle{empty}\newpage

\section{Introduction}

One of the central tenets in modern economic research is to consider models that allow for flexible specifications of heterogeneity and to establish whether {meaningful heterogeneity is present or absent in a particular empirical setting}. For example, \cite{abowd1999high} study whether there is firm-specific heterogeneity in a linear model for individual log-wages, \cite{card2016bargaining,card2016firms} ask if this heterogeneity varies by the individual's {gender or} education, and \cite{lachowska2019firm} investigate whether the firm-specific heterogeneity is constant over time. Other work relies on similarly flexible models to investigate the presence of heterogeneity in health economics \citep{finkelstein2016sources} and to study neighborhood effects \citep{chetty2018impacts}. In all these examples, the absence of a particular dimension of heterogeneity corresponds to a hypothesis that imposes hundreds or thousands of restrictions on the model of interest. The present paper provides a tool to conduct a test of such hypotheses. {In contemporary work, \cite{kline2022systemic} apply our proposed test in a study of discrimination among U.S. employers.}

We develop a test for hypotheses that impose multiple restrictions and establish its asymptotic validity in a heteroskedastic linear regression model where the number of tested restrictions may be fixed or increasing with the sample size. In particular, we allow for the number of restrictions and the sample size to be proportional. The exact F test{, which compares the F statistic to a quantile of the F distribution,} fails to control size in this environment. Instead, our proposed test  rejects the null hypothesis if the F statistic exceeds a critical value that corrects for many restrictions and conditional heteroskedasticity. This critical value is a recentered and rescaled quantile of what is naturally called the {\it F-bar distribution} as it describes the distribution of a chi-bar-squared random variable divided by an independent chi-squared random variable over its degrees of freedom.\footnote{{Chi-bar-squared is a standard name used to describe a mixture of chi-squared distributions. See, e.g., \cite{chibar2} who studies asymptotic properties of chi-bar-squared distributions.}} This family of distributions can approximate both the finite sample properties of the F statistic under homoskedastic normal errors and---after recentering and rescaling---the asymptotic distribution of the F statistic in the presence of conditional heteroskedasticity and few or many restrictions.

The large sample validity of our proposed test holds uniformly in the number of regressors and tested restrictions. In combination with the F-bar distribution, the key to this uniformity is our proposed location and variance estimators that are used to recenter and rescale the critical value. {The location estimator utilizes unbiased leave-one-out estimators for individual error variances, while the variance estimator utilizes unbiased leave-\textit{three}-out estimators for \textit{products} of these variances. While the product of leave-one-out estimators is biased for the product of variances because of mutual dependence, dropping three observations in successive fashion breaks the dependence between estimators of individual error variances, and hence provides unbiasedness of their products.}
%\footnote{Section \ref{sec:asybilin} contains a detailed discussion of the asymptotic behavior of the statistic that lies at the heart of our test, the challenges one faces when obtaining its null distribution, and connections to related tests in the literature.}
%While leave-one-out estimation is used in many econometric contexts, 
The use of leave-three-out estimation to implement this idea is novel in the literature. Because the essential elements of the test are built on \textit{l}eave-\textit{o}ut machinery, we will at times and for brevity refer to the proposed test using the acronym LO.

The LO test has exact asymptotic size when the regression design has full rank after leaving any combination of three observations out of the sample. This condition is satisfied in models with many continuous regressors and only a few discrete ones. However, the condition can fail when many discretely valued regressors are included, as occurs for models with {fixed individual or group effects.} { With group effects, {in particular,} leave-three-out {may not exist} when group sizes are two or three.} To handle such cases, the proposed test uses estimators for the products of individual error variances that are intentionally biased upward when the unbiased leave-three-out estimators do not exist. This construction ensures {large sample validity} but can potentially {lead to a slightly conservative test} when {a large fraction} of the leave-three-out estimators do not exist.

Using both theoretical arguments and simulations, \cite{huber1973robust} and \cite{berndt1977conflict} have highlighted the importance of allowing the number of regressors and potentially the number of tested restrictions to increase with sample size when studying asymptotic properties of inference procedures. {The latter paper specifically documents cases where asymptotically equivalent classical tests yield opposite outcomes when the number of tested restrictions is somewhat large.} Despite these early cautionary tales, most inference procedures that allow for proportionality between the number of regressors, sample size, and potentially the number of restrictions, are of a more recent vintage. Here, we survey the ones most relevant to the current paper and refer to \cite{anatolyev2019survey} for a more extensive review of the literature.\looseness=-1\footnote{In analysis of variance contexts, which are special cases of linear regression, \cite{akritas2004anova} and \cite{zhou2017hidim} propose heteroskedasticity robust tests for equality of means that are, however, specific to their models. An expanding literature considers (outlier) robust estimation of linear high-dimensional regressions \citep[e.g., ][]{elkaroui2013hidim} but does not provide valid tests of many restrictions.}

In homoskedastic regression models, \cite{anatolyev2012inference} and \cite{calhoun2011many} propose various corrections to classical tests that restore asymptotic validity in the presence of many restrictions. In heteroskedastic regressions with one tested restriction and many regressors, \cite{cattaneo2017inference} show that the use of conventional Eicker-White standard errors and their ``almost-unbiased'' variations \citep[see][]{mackinnon2012hetero} does not yield asymptotic validity. This failure may be viewed as a manifestation of the incidental parameters problem. To overcome this problem, \cite{cattaneo2017inference} and subsequently \cite{anatolyev2018almostunbiased} propose new versions of the Eicker-White standard errors, which restore size control in large samples. However, these proposals rely on the inversion of $n$-by-$n$ matrices ($n$ denotes sample size) that may fail to be invertible in examples of practical interest \citep{horn1975estimating,verdier2016estimation}. \cite{rao1970heterovar}'s unbiased estimator for individual error variances is closely related to \cite{cattaneo2017inference}'s proposal and suffers from the same existence issue.\footnote{{{A recent use of \cite{rao1970heterovar}'s MINQUE estimator is \cite{juhl2014hetero}, where it is incorporated into a test of heterogeneity in short panels with fixed effects. Here, the MINQUE estimator is applied to each cross-sectional unit, and its use therefore imposes a growing number of invertibility requirements.}}}

In homoskedastic regression models, Anatolyev (2012) and Calhoun (2011) propose various corrections to classical tests that restore asymptotic validity in the presence of many restrictions. 
In heteroskedastic regressions with one tested restriction and many regressors, Cattaneo et al. (2018b) show that the use of conventional Eicker-White standard errors and their “almost-unbiased” variations does not yield asymptotic validity. This failure may be viewed as a manifestation of the incidental parameters problem. To overcome this problem, Cattaneo et al. (2018b) and subsequently Anatolyev (2018) propose new versions of the Eicker-White standard errors, which restore size control in large samples. However, these proposals rely on the inversion of n-by-n matrices (n denotes sample size) that may fail to be invertible in examples of practical interest. Rao (1970)’s unbiased estimator for individual error variances is closely related to Cattaneo et al. (2018b)’s proposal and suffers from the same existence issue.

\cite*{kline2018leave} propose instead a version of the Eicker-White standard errors that relies only on leave-one-out estimators of individual error variances and show that its use leads to asymptotic size control when testing a single restriction.\footnote{\cite{jochmans2018variance} additionally uses simulations to investigate the behavior of this variance estimator.} While this conclusion extends to hypotheses that involve a fixed and small number of restrictions through the use of a heteroskedasticity-robust Wald test, it fails to hold in cases of many restrictions. When testing many coefficients equal to zero, \cite{kline2018leave} note that those leave-one-out individual variance estimators can be used to center the conventional F statistic\footnote{The use of leave-one-out estimation has a long tradition in the literature on instrumental variables \citep[see, e.g.,][]{phillips1977bias}, and our test shares an algebraic representation with the adjusted J test analyzed in \cite{chao2014testing} \citep[see][for a discussion]{kline2018leave}. An attractive feature of relying on leave-one-out is that challenging estimation of higher order error moments can be avoided, which is in contrast to the tests of \cite{calhoun2011many} and \cite{anatolyev2013manyexogenous}.} and propose a rescaling of the statistic that relies on successive sample splitting {\citep[][section 5]{kline2018leave} as a tool of breaking dependence among different estimates when error variances enter as pairwise products. However, first, sample splitting places restrictions on the data that will fail when the number of regressors is larger than half of the sample size.\footnote{This phenomenon is akin to the situation in \cite{cattaneo2017inference}, in which the `Hadamard square' of the orthogonal projection matrix may be non-invertible. \cite{cattaneo2017inference} rule out this possibility by imposing a sufficient condition such that the number of covariates is no larger than half of the sample size.} Second, sample splitting means that the error variances are estimated only from a part of the sample, which is clearly inefficient and undesirable for conditional objects that require the use of as many observations as possible.
Third, sample splitting may be undesirable because different ways of splitting the sample can lead to opposite conclusions.

We instead take another route and utilize information for estimation of error variances in the whole sample. In order to remove the dependence among individual estimates, we appeal to leave-three-out estimation instead of sample splitting. Importantly, leave-three-out estimation places much fewer restrictions on the number of regressors than sample splitting, exploits available sample information more efficiently, and does not require a researcher to choose a way to split the sample. Additionally, the robustified version of the LO test using the F-bar distribution enables asymptotic size control uniformly in the number of restrictions.}

We provide a theoretical study of the power properties under local and global alternatives.
Under local alternatives, the asymptotic power curve of the proposed LO test is parallel to that of the exact F test when the latter is valid, e.g., under homoskedastic normal errors. While the curves are parallel, the LO test tends to have power somewhat below the exact F test. This loss in power stems from the estimation of individual error variances and can be viewed as a cost of using a test that is robust to general heteroskedasticity. This cost is largely monotone in the number of tested restrictions and disappears when the number of restrictions is small relative to sample size.

We also conduct a simulation study that documents excellent performance of the LO test in small and moderately sized samples. We document that the LO test delivers nearly exact size control in samples as small as $100$ observations in both homoskedastic and heteroskedastic environments. On the other hand, conventional tools such as the Wald test and the exact F test can exhibit severe size distortions and reject a true null with near certainty for some configurations. These findings are documented using two simulation settings: one with continuous regressors only, and one with a mix of both continuous and discrete regressors. In the latter setting, roughly $7\%$ of observations cause a full rank failure when leaving up to three observations out, but the proposed test shows almost no conservatism even in this adverse environment. When both the LO and exact F tests are valid, the simulations document a power loss that varies between being negligible and up to roughly $15$ percentage points, depending on the type of deviation from the null and sample size. For many applications, this range of power losses is a small cost to incur for being robust to heteroskedasticity.

The paper is organized as follows. Section \ref{sec:model} introduces the setup and the proposed critical value in samples where all the leave-three-out estimators exist, while Section \ref{sec:asymptotia} analyzes the asymptotic size and power of the LO test for such samples. Section \ref{sec:cons} describes the critical value for use in samples where the design loses full rank after leaving certain triples of observations out. Section \ref{sec:sim} discusses the results of simulation experiments, and Section \ref{sec:conc} concludes. Proofs of theoretical results and some clarifying but technical details are collected in the online supplemental Appendix. An R package \citep{AScode} that implements the proposed test is available online.

\section{Leave-out test}\label{sec:model}

Consider a linear regression model
\begin{align}\label{eq:Model}
	y_{i} = \boldsymbol x_{i}'{\boldsymbol \beta} + \varepsilon_{i}, \quad \E[\varepsilon_i | \boldsymbol x_i] =0,
\end{align}
where an intercept is included in the regression function $\boldsymbol x_i'{\boldsymbol \beta}$ and the $n$ observed random vectors $\{(y_i,\boldsymbol x_i')'\}_{i=1}^n$ are independent across $i$. The dimension of the regressors $\boldsymbol x_i \in \R^m$ may be large relative to sample size {with $m < n$}, and there is conditional heteroskedasticity in the unobserved errors:
{
\begin{align}\label{eq:heterr}
	\E[\varepsilon_{i}^2 | \boldsymbol x_i]= \sigma^{2}(\boldsymbol x_i) \equiv \sigma_i^{2}.
\end{align}
The conditional variances are assumed to exist with no restrictions placed on the functional form, as in \cite{kline2018leave}.}

The hypothesis of interest involves $r \leq m$ linear restrictions
\begin{align}\label{eq:NullHypo}
	H_0 : \ \boldsymbol R{\boldsymbol \beta} =\boldsymbol q,
\end{align}
where the matrix $\boldsymbol R \in \R^{r\times m}$ has full row rank $r$, and $\boldsymbol q \in \R^r$. Both $\boldsymbol R$ and $\boldsymbol q$ are specified by the researcher. Specifically, they are assumed to be known and are allowed to depend on the observed regressors. The space of alternatives is $H_A : \ \boldsymbol R{\boldsymbol \beta} \neq \boldsymbol q$.

The attention of the paper is on settings where the design matrix $\boldsymbol S_{xx} = \sum_{i=1}^n \boldsymbol x_i \boldsymbol x_i'$ has full rank so that $\hat{\boldsymbol \beta}=\boldsymbol S_{xx}\inverse \sum_{i=1}^n \boldsymbol x_i y_i$, the ordinary least squares (OLS) estimator of $\boldsymbol \beta$, is defined. For compact reference, we define the degrees-of-freedom adjusted residual variance
\begin{align}\label{eq:residvar}
	\hat \sigma^2_{\varepsilon}=\frac{1}{n-m}\sum_{i=1}^n \big(y_i-\boldsymbol x_{i}'\hat{\boldsymbol \beta}\big)^2.
\end{align}

\begin{rem}
	{
	We maintain in this paper that observations are independent across $i$, as it facilitates simplicity when discussing some of our high-level conditions. {We conjecture that the results of the paper} continue to hold under the weaker assumption that the error terms are mean zero and independent across $i$ when conditioning on all the regressors $\{\boldsymbol{x}_i\}_{i=1}^n$. 
%The reason is that the proofs are conducted conditionally on the regressors and therefore do not rely on the dependence structure of the regressors.
	}
\end{rem}

\subsection{Test statistic}

Our proposed test rejects $H_0$ for large values of Fisher's F statistic,
\begin{align}\label{eq:F}
F=\frac{\big( \boldsymbol R\hat {\boldsymbol\beta}- \boldsymbol q \big)'\!\left( \boldsymbol R \boldsymbol S_{xx}\inverse \boldsymbol R' \right)\inverse \!\big(\boldsymbol R\hat {\boldsymbol \beta} - \boldsymbol q \big)}{r\hat \sigma^2_{\varepsilon}},
\end{align}
which is a monotone transformation of the likelihood ratio statistic when the  regression errors are homoskedastic normal. Since we do not impose normality, $F$ may be viewed as a quasi likelihood ratio statistic. {The behavior of the test is governed by the numerator of $F$, which we denote by ${\cal F}$:
\begin{align}\label{eq:calF}
{\cal F}=\big( \boldsymbol R\hat {\boldsymbol\beta}- \boldsymbol q \big)'\!\left( \boldsymbol R \boldsymbol S_{xx}\inverse \boldsymbol R' \right)\inverse \!\big(\boldsymbol R\hat {\boldsymbol \beta} - \boldsymbol q \big).
\end{align}

By taking this statistic as a point of departure, we are able to construct a critical value that ensures size control in the presence of heteroskedasticity and an arbitrary number of restrictions.\footnote{An alternative approach might have taken a heteroskedasticity-robust Wald statistic $\boldsymbol R \boldsymbol S_{xx}\inverse (\sum_{i=1}^n \boldsymbol x_i \boldsymbol x_i' \hat\varepsilon_i^2) \boldsymbol S_{xx}\inverse \boldsymbol R'$, where $\{\hat \varepsilon_{i}\}_{i=1}^n$ are OLS residuals, in a similar attempt to ensure validity when the number of restrictions is proportional to the sample size. However, in such environments, any heteroskedasticity-robust Wald statistic relies on the inverse of a high-dimensional covariance matrix estimator, a feature that presents substantial challenges when attempting to control size. 
Specifically, the randomness in the residuals induced into this $r \times r$-matrix persists in large samples and is therefore a threat to valid inference. Our conjecture is that some regularization of the covariance matrix may be helpful in mitigating the noise arising from this estimated covariance matrix.	In addition, it is not clear if the weighting behind the heteroskedasticity-robust Wald statistic and hence a potential test will preserve optimality under asymptotics where the number of restrictions is proportional to the sample size. We leave investigation of these difficult but interesting questions to future research.}
The proposed critical value yields} asymptotic validity under two asymptotic frameworks, one where the number of restrictions is fixed, and one where the number of restrictions may grow as fast as proportionally to the sample size. To achieve such uniformity with respect to the number of restrictions, we rely on an auxiliary distribution, the F-bar distribution, that helps unite these two frameworks.

\subsection{F-bar distribution}

Our test rejects $H_0$ if Fisher's F exceeds a linearly transformed quantile of a distribution, which we call the \textit{F-bar distribution}.
We define this family of distributions and discuss its role before we turn to a description of the linear transformation mentioned above.
\begin{definition}[F-bar distribution]\label{def:fbar}
	Let $\boldsymbol w=(w_1,\dots,w_r)$ be a collection of non-negative weights summing to one, and $df$ be a positive real number.	The F-bar distribution with weights $\boldsymbol w$ and degrees of freedom $df$, denoted by $\bar F_{{\boldsymbol w},df}$, is a distribution of
	\begin{align}\label{eq:Fbar}
		\frac{\sum_{\ell=1}^r w_\ell Z_\ell}{Z_{0}/df  },
	\end{align}
	where $Z_0,Z_1,\dots,Z_r$ are mutually independent random variables with $Z_0 \sim \chi^2_{df}$ and $Z_\ell \sim \chi^2_{1}$ for $ 1\le \ell \le r$. Here, $\chi^2_\kappa$ denotes a chi-squared distribution with $\kappa>0$ degrees of freedom.
\end{definition}

The name attached to this family originates from its close relationship to both the chi-bar-squared distribution and to Snedecor's F distribution, which we denote as $\bar \chi^2_{\boldsymbol w}$ and $F_{r,df}$, respectively.
{In the Appendix, we show the following three essential properties of this family to be used later.
First, Snedecor's F is a special case when the entries of $\boldsymbol w$ are all equal. Second, the limiting case of $\bar F_{{\boldsymbol w},df}$ when $df \rightarrow \infty$ is $\bar \chi^2_{\boldsymbol w}$. Third, the standard normal distribution, whose CDF is denoted as $\Phi$,} is also a limiting case since, as $df \rightarrow \infty$ and $\max_{1\le\ell \le r}  w_\ell \rightarrow 0$,
\begin{align}\label{eq:normlim}
\frac{q_\tau(\bar F_{{\boldsymbol w},df}) - 1}{\sqrt{2\sum_{\ell=1}^r w_\ell^2 +2/df}}  \rightarrow q_\tau(\Phi)
\end{align}
for $\tau \in (0,1),$ where $q_\tau(G)$ denotes the $\tau$-th quantile of the distribution $G$.
The centering and rescaling in \eqref{eq:normlim} are done according to the limiting mean and variance of the underlying random variable from Definition~\ref{def:fbar} following the $\bar F_{{\boldsymbol w},df}$ distribution, while asymptotic normality results from mixing over infinitely many independent chi-squared variables.

Our reliance on the F-bar distribution is tied to its three properties described  in the previous paragraph and three closely related observations about the F statistic. These observations are: (i) the F statistic is distributed as $F_{r,n-m}$ if the errors are homoskedastic normal, (ii) the F statistic converges in distribution (after rescaling) to a chi-bar-squared if the number of restrictions $r$ is fixed, and (iii) the F statistic converges in distribution (after centering and rescaling) to a standard normal as $r$ grows. Therefore, the class of F-bar distributions serves as a roof designed both to match the finite sample distribution of the F statistic in an important special case and to approximate each of the possible limiting distributions after a suitable linear transformation.

\subsection{Critical value}

The proposed critical value {for the F statistic $F$} at a nominal size $\alpha \in (0,1)$ is a linear transformation of $q_{1-\alpha}\big( \bar F_{\hat {\boldsymbol w}, n-m} \big)$ given by
\begin{align}\label{eq:cv}
\hat c _\alpha &= \frac{1}{r\hat\sigma_\varepsilon^2} \left( \hat E_{\cal F}+ \hat V_{\cal F}^{1/2} \frac{ q_{1-\alpha}(\bar F_{\hat{\boldsymbol w},n-m}) - 1}{\sqrt{2\sum_{\ell=1}^r \hat w_\ell^2 +2/(n-m)}} \right)\!.
\end{align}
{The hat over $c _\alpha$ emphasizes that this value is data-dependent. The quantities $\hat{E}_{\cal F}$ and $\hat{V}_{\cal F}$ are related to $\cal F$ in \eqref{eq:calF}, the numerator of the F statistic. From this point forward, all means and variances are conditional on the regressors $\{\boldsymbol x_i\}_{i=1}^n$, and those with a subscript $0$ are calculated under $H_0$. The quantity} $\hat{E}_{\cal F}$ is an unbiased estimator of the conditional mean $\E_0[{\cal F}]$, while $\hat{V}_{\cal F}$  is either an unbiased or positively biased estimator of the conditional variance $\mathbb{V}_0[{\cal F}-\hat{E}_{\cal F}]$ as explained further below. The estimated weights $\hat{\boldsymbol w} = (\hat w_1,\dots,\hat w_r)$ are constructed to be consistent for weights $\boldsymbol w_{\cal F}$ in those cases where ${\cal F}/\E_0[{\cal F}]$ converges in distribution to $\bar \chi^2_{\boldsymbol w_{\cal F}}$.

The critical value $\hat c_\alpha$ ensures asymptotic size control irrespective of whether $r$ is viewed as fixed or growing with the sample size $n$. To explain why $\hat c_\alpha$ provides such uniformity, we consider first the case where $r$ grows. In this case, it is illuminating to rewrite the rejection rule as an equivalent event
\begin{align}
	\hat V_{\cal F}^{-1/2}({\cal F} - \hat{E}_{\cal F}) > \frac{ q_{1-\alpha}(\bar F_{\hat{\boldsymbol w},n-m}) - 1}{\sqrt{2\sum_{\ell=1}^r \hat w_\ell^2 +2/(n-m)}}.
\end{align}
Since $\hat V_{\cal F}^{-1/2}({\cal F} - \hat{E}_{\cal F})$ is asymptotically normal under the null, the validity in large samples follows from the relationship between the F-bar and standard normal distributions given in~\eqref{eq:normlim}.\looseness=-1

When instead $r$ is viewed as asymptotically fixed, it is more informative to express the rejection region through the inequality
\begin{align}\label{eq:fixedr}
	\frac{\cal F  }{\hat{E}_{\cal F}} >  q_{1-\alpha}(\bar F_{\hat {\boldsymbol w},n-m}) + \!\left(q_{1-\alpha}(\bar F_{\hat {\boldsymbol w},n-m})-1\right)\!\left(\frac{\hat V_{\cal F}^{1/2}/\hat{E}_{\cal F}}{\sqrt{2\sum_{\ell=1}^r \hat w_\ell^2 +2/(n-m)}} -1 \right)\!.
\end{align}
Note that rejecting when ${\cal F  }/{\hat{E}_{\cal F}}$ exceeds the quantile $q_{1-\alpha}(\bar F_{\hat {\boldsymbol w},n-m})$ suffices for validity; for the case of a single restriction such an approach corresponds to the standard practice of comparing squares of a heteroskedasticity robust t statistic and the $(1-\alpha)$-th quantile of Student's t distribution with $n-m$ degrees of freedom.\footnote{When testing a single restriction, $\hat {\boldsymbol w}$ must equal unity so that $\bar F_{\hat {\boldsymbol w},n-m}= F_{1,n-m} = t_{n-m}^2$, and in this case ${\cal F  }/{\hat{E}_{\cal F}}$ is the square of the t statistic studied in \citet[][Theorem 1]{kline2018leave}.} The last term on the right hand side of \eqref{eq:fixedr} can then be viewed as a finite sample correction that adjusts the critical value up or down depending on the relative size of the variance estimator for the ratio ${\cal F  }/{\hat{E}_{\cal F}}$, which is $\hat V_{\cal F}/\hat{E}_{\cal F}^2$, and the variance of the approximating distribution $\bar F_{\hat {\boldsymbol w},n-m}$, which is roughly $2\sum_{\ell=1}^r \hat w_\ell^2 +2/(n-m)$. As the ratio of these variances converges to unity when the number of restrictions is fixed, this term does not affect first order asymptotic validity.

Finally, note that if one is willing to rest on the assumption that the restrictions are numerous and the few restriction framework is superfluous, one might use the following simplified critical value not robust to few restrictions:\footnote{Such settings occur, for example, if the null of interest involves thousands of restrictions, in which case the two critical values $\hat c_\alpha$ and $\check c_\alpha$ are essentially equivalent but $\check c_\alpha$ is computationally simpler to construct as it circumvents computation of $\hat{\boldsymbol w}$.}
\begin{align}\label{eq:checkcv}
\check c _\alpha &=  \frac{1}{r\hat \sigma^2_{\varepsilon}} \left(\hat{E}_{\cal F}+{\hat{V}_{\cal F}^{1/2}} \frac{q_{1-\alpha}(F_{r,n-m})-1}{\sqrt{2/r+2/(n-m)}} \right)\!.
\end{align}

%The critical value $\hat c_\alpha$ is motivated by the behavior of ${\cal F}$ in an asymptotic framework where the number of restrictions $r$ grows with sample size $n$. To better see this, note that the test rejects the null if the centered and scaled statistic $\hat V_{\cal F}^{-1/2}({\cal F} - \hat{E}_{\cal F})$ exceeds the $(1-\alpha)$-th quantile of an F distribution that has been centered and scaled in an analogous fashion. Since $\hat V_{\cal F}^{-1/2}({\cal F} - \hat{E}_{\cal F})$ is asymptotically normal under the null (see Section \ref{sec:asymptotia}), it would suffice for validity in large samples to consider quantiles from a standard normal distribution. However, the use of a centered and scaled F distribution will similarly deliver asymptotic size control \citep[see, e.g.,][]{anatolyev2012inference} and is furthermore motivated by the finite sample distribution of the F statistic under homoskedastic, normal errors.

To complete the description of the proposed critical value, definitions of the quantities $\hat{E}_{\cal F}$, $\hat{V}_{\cal F}$ and $\hat{\boldsymbol w}$ are needed. Section~\ref{sec:centeredF} describes how we rely on leave-one-out OLS estimators to construct $\hat{E}_{\cal F}$ and $\hat{\boldsymbol w}$. For $\hat{V}_{\cal F}$, Section~\ref{sec:varest} provides the corresponding definition when it is possible to rely on leave-three-out OLS estimators, while Section~\ref{sec:cons} introduces the form of $\hat{V}_{\cal F}$ for settings where some of the leave-three-out estimators cease to exist. In the former case, it is possible to ensure that $\hat{V}_{\cal F}$ is unbiased, while the latter introduces a (small) positive bias. We initially consider the former case, a framework where the design matrix has full rank when any three observations are left out of the sample, and relax this condition in Section~\ref{sec:cons}.

\begin{assumption}\label{ass:Leave3Out}
	$\sum_{\ell \neq i,j,k} \boldsymbol x_\ell \boldsymbol x_\ell'$ is invertible for every $i,j,k \in \{1,\dots,n\}$.
\end{assumption}

When $\boldsymbol x_i$ is identically and continuously distributed with unconditional second moment $\E[\boldsymbol x_i \boldsymbol x_i']$ of full rank, Assumption~\ref{ass:Leave3Out} holds with probability one whenever $n - m \ge 3$. The asymptotic framework considers a setting where $n-m$ diverges so that Assumption~\ref{ass:Leave3Out} must hold in sufficiently large samples with continuous regressors. This conclusion also applies when $\boldsymbol x_i$ includes a few discrete regressors and, in particular, an intercept. In settings with many discrete regressors, Assumption~\ref{ass:Leave3Out} may fail to hold, even in large samples. For that reason, Section \ref{sec:cons} introduces the version of $\hat{V}_{\cal F}$ for empirical settings where the full rank condition is satisfied when any one observation is left out, but not necessarily when leaving two or three observations out.

%\begin{rem}\label{rem:notwald}
%	The goal of this paper is to provide practitioners with a hypothesis test that yields size control in the presence of heteroskedasticity and many restrictions. To achieve this goal we take the conventional F-statistic as a point of departure. An alternative approach might have taken a heteroskedasticity-robust Wald statistic and attempted to construct a critical value that ensures validity also when $r$ is proportional to sample size. However, in such environments any heteroskedasticity-robust Wald statistic relies on the inverse of a high-dimensional variance estimator, which presents substantial challenges when attempting to control size. For this reason, we leave investigation of such Wald statistics to future research.
%\end{rem}

\subsection{Leave-out algebra}\label{sec:LO}

Before describing $\hat{E}_{\cal F}$, $\hat{V}_{\cal F}$, and $\hat{\boldsymbol w}$ in detail, we will reformulate Assumption~\ref{ass:Leave3Out} using leave-out algebra. That is, we will derive an equivalent way of expressing this assumption while introducing notation that is essential for the construction of the critical value and for stating the asymptotic regularity conditions.

When $\boldsymbol S_{xx}$ has full rank, a direct implication of the Sherman-Morrison-Woodbury identity \cite[][SMW]{sherman1950adjustment,woodbury1949stability} is that the leave-one-out design matrix $\sum_{j \neq i}\boldsymbol x_j \boldsymbol x_j'$ is invertible if and only if the statistical leverage of the $i$-th observation $P_{ii} = \boldsymbol x_i' \boldsymbol S_{xx}\inverse \boldsymbol x_i$ is less than one. Letting $M_{ij} = \mathbf{1}\{i=j\} - \boldsymbol x_i'\boldsymbol S_{xx}\inverse \boldsymbol x_j$ be elements of the residual projection matrix $\boldsymbol M$ associated with the regressor matrix, this condition on the leverage is equivalently stated as $M_{ii}$ being greater than zero. When $M_{ii}>0$ holds, we can additionally use SMW to represent the inverse of  $\sum_{j \neq i} \boldsymbol x_j \boldsymbol x_j'$ as
\begin{align}\label{eq:SMW}
	\left(\sum\nolimits_{j \neq i} \boldsymbol x_j \boldsymbol x_j'\right)\inverse = \boldsymbol S_{xx}\inverse + \frac{\boldsymbol S_{xx}\inverse \boldsymbol x_i \boldsymbol x_i' \boldsymbol S_{xx}\inverse}{M_{ii}},
\end{align}
which highlights the role of a non-zero $M_{ii}$.

The representation in \eqref{eq:SMW} can also be used to understand when the leave-two-out design matrix $\sum_{k \neq i,j}\boldsymbol x_k \boldsymbol x_k'$ has full rank, since \eqref{eq:SMW} can be used to compute leverages in a sample that excludes $i$. After leaving observation $i$ out, the leverage of a different observation $j$ is $\boldsymbol x_j'\big(\sum_{k \neq i} \boldsymbol x_k \boldsymbol x_k'\big)\inverse \boldsymbol x_j$. To see when this leverage is less than one, note that \eqref{eq:SMW} yields
\begin{align}
	1- \boldsymbol x_j'\!\left(\sum\nolimits_{k \neq i} \boldsymbol x_k \boldsymbol x_k'\right)\inverse \boldsymbol x_j = M_{jj} - \frac{M_{ij}^2}{M_{ii}},
\end{align}
so that a necessary and sufficient condition for a full rank of $\sum_{k \neq i,j} \boldsymbol x_k \boldsymbol x_k'$ is that $D_{ij} >0$, where
\begin{align}\label{eq:D2}
	D_{ij} = \begin{vmatrix} M_{ii} & M_{ij} \\ M_{ij} & M_{jj} \end{vmatrix} = M_{ii}M_{jj} - M_{ij}^2,
\end{align}
and $\abs{\cdot}$ denotes the determinant.

%{\color{red}Note that $D_{ij} \ge 0$ holds automatically for all pairs by the Cauchy--Schwarz inequality, while the equality holds when $x_{i}=x_{j}$ for all $i,j$, so the condition $D_{ij} >0$ imposes conditions on sufficient variability of regressors in the sample.}
Extending the previous argument to the case of leaving three observations out, we find that the invertibility of $\sum_{\ell \neq i,j,k}\boldsymbol x_\ell \boldsymbol x_\ell'$ for $i$, $j$, and $k$, all of which are different, is equivalent to $D_{ijk}>0$, where
\begin{align}\label{eq:D3}
	D_{ijk} &=  \begin{vmatrix} M_{ii} & M_{ij} & M_{ik} \\ M_{ij} & M_{jj} & M_{jk} \\ M_{ik} & M_{jk} & M_{kk} \end{vmatrix} = M_{ii} D_{jk} -\left( M_{jj} M_{ik}^2 + M_{kk} M_{ij}^2 - 2 M_{jk}M_{ij}M_{ik} \right).
\end{align}
%{\color{red}This condition strengthens the requirement of variability of regressors in the sample.}

This discussion reveals that Assumption~\ref{ass:Leave3Out} can equivalently be stated as requiring full rank of  $\boldsymbol S_{xx}$ and
\begin{align}\label{eq:L3O}
	D_{ijk} > 0 \text{ for every } i,j,k \in \{1,\dots,n\} \text{ with } i \neq j \neq k \neq i.
\end{align}
In addition to facilitating an algebraic description of Assumption~\ref{ass:Leave3Out}, the quantities $M_{ii}, \ D_{ij},$ and $D_{ijk}$ also play a role in the computation of the proposed critical value. Specifically, they can be used to avoid explicitly computing the OLS estimates after leaving one, two, or three observations out. Additionally, since construction of $\hat{E}_{\cal F}$, $\hat{V}_{\cal F}$, and $\hat{\boldsymbol w}$ relies on dividing by $M_{ii}, \ D_{ij},$ and $D_{ijk}$, the study of the asymptotic size of the proposed testing procedure imposes a slight strengthening of \eqref{eq:L3O}, which bounds the smallest $D_{ijk}$ away from zero.

\subsection{Location estimator}\label{sec:centeredF}

{
Recall that we recenter the numerator of the F statistic ($\cal F$ in \eqref{eq:calF}) by using $\hat{E}_{\cal F}$, which is an unbiased estimator of the conditional mean $\E_0[{\cal F}]$ under $H_0$. This mean equals}
%Recall that the numerator of the F statistic is
%\begin{align}\label{eq:Fnum}
%	{\cal F}=\big( \boldsymbol R\hat {\boldsymbol\beta}-\boldsymbol q \big)'\!\left( \boldsymbol R\boldsymbol S_{xx}\inverse \boldsymbol R' \right)\inverse \!\big(\boldsymbol R\hat {\boldsymbol\beta} - \boldsymbol q \big).
%\end{align}
%A virtue of this statistic is that its expectation is minimized under $H_0$, so that large values of the statistic can be taken as evidence against the hypothesized value of $\boldsymbol\beta$. However, the distribution of ${\cal F}$ depends on the unknown error variances $\{\sigma_i^2\}_{i=1}^n$, which complicates the construction of a critical value.
%Specifically, the  conditional mean of ${\cal F}$ under $H_0$ is
\begin{align}\label{eq:Bias}
	\E_0[{\cal F} ] = \sum_{i=1}^{n}B _{ii}\sigma _{i}^{2},
\end{align}
where the values $B_{ij} = \boldsymbol x_i'\boldsymbol S_{xx}\inverse \boldsymbol R'\!\left( \boldsymbol R\boldsymbol S_{xx}\inverse \boldsymbol R' \right)\inverse \! \boldsymbol R \boldsymbol S_{xx}\inverse \boldsymbol x_j$ are observed and satisfy $\sum_{i=1}^{n}B _{ii} =r$. Furthermore, the exact null distribution of ${\cal F}/\E_0[{\cal F} ]$, under the additional condition of normally distributed regression errors, is $\bar \chi^2_{\boldsymbol w_{\cal F}}$, with $\boldsymbol w_{\cal F}$ containing the eigenvalues of the matrix\footnote{
{
Under error normality, the exact null distribution of $\boldsymbol R\hat {\boldsymbol \beta} - \boldsymbol q $ is ${\cal N}\big( 0, \V\big[ \boldsymbol R \hat{\boldsymbol \beta} \big] \big)$, and it follows from Lemma 3.2 of \cite{vuong1989} that ${\cal F}/\E_0[{\cal F} ]$ is distributed as a weighted sum of chi-squares with weights that are eigenvalues of $( \boldsymbol R\boldsymbol S_{xx}\inverse \boldsymbol R' )\inverse\V\big[ \boldsymbol R \hat{\boldsymbol \beta} \big] /\E_0[{\cal F} ]=\varOmega \big(\sigma_1^2,\dots,\sigma_n^2 \big).$
}
Note that the eigenvalues of $\varOmega \big(\sigma_1^2,\dots,\sigma_n^2 \big)$ are all real and non-negative as they can be expressed as the eigenvalues of the symmetric and positive semidefinite matrix $( \boldsymbol R\boldsymbol S_{xx}\inverse \boldsymbol R' )^{-1/2} \V\big[ \boldsymbol R \hat{\boldsymbol \beta} \big] ( \boldsymbol R\boldsymbol S_{xx}\inverse \boldsymbol R' )^{-1/2} /\E_0[{\cal F} ]$.
Furthermore, the entries of $\boldsymbol w_{\cal F}$ sum to one as $\E_0[{\cal F} ]$ is the trace of $( \boldsymbol R\boldsymbol S_{xx}\inverse \boldsymbol R' )\inverse \V\big[ \boldsymbol R \hat{\boldsymbol \beta} \big]$.
}
\begin{align}
	\varOmega \big(\sigma_1^2,\dots,\sigma_n^2 \big) %&=  \frac{1}{\E_0[{\cal F} ]}	\!\left( \boldsymbol R\boldsymbol S_{xx}\inverse \boldsymbol R' \right)\inverse \V\!\big[ \boldsymbol R \hat{\boldsymbol \beta} \big]\! \\
	%\varOmega \!\left(\sigma_1^2,\dots,\sigma_n^2 \right) 	&
=  \frac{1}{\sum_{i=1}^{n}B _{ii}\sigma _{i}^{2}}	\!\left( \boldsymbol R\boldsymbol S_{xx}\inverse \boldsymbol R' \right)\inverse \boldsymbol R \boldsymbol S_{xx}\inverse \left(\sum_{i=1}^n \boldsymbol x_i \boldsymbol x_i' \sigma_i^2 \right)\boldsymbol S_{xx}\inverse \boldsymbol R' .
\end{align}
Both $\E_0[{\cal F} ]$ and $\boldsymbol w_{\cal F}$ are thus functions of $\{\sigma_i^2\}_{i=1}^n$, and the relevance of the vector $\boldsymbol w_{\cal F}$ for asymptotic size control transcends the normality assumption on the errors that we used in order to introduce it.

As shown in \cite{kline2018leave}, the individual specific error variances can be estimated without bias for any value of $\boldsymbol\beta$ using leave-one-out estimators. Let the leave-$i$-out OLS estimator of $\boldsymbol\beta$ be $\hat {\boldsymbol\beta}_{-i} = \big(\sum_{j \neq i}\boldsymbol x_j\boldsymbol x_{j}'\big)\inverse \sum_{j \neq i} \boldsymbol x_j y_j,$ and construct
\begin{align}\label{eq:LOvar}
	\hat \sigma_i^2 = y_i\big(y_i -\boldsymbol x_i'\hat {\boldsymbol\beta}_{-i}\big).
\end{align}
With these leave-one-out estimators, we can estimate the null mean of ${\cal F}$ using
\begin{align}
	\hat{E}_{\cal F} = \sum_{i=1}^{n}B _{ii} \hat \sigma _{i}^{2},
\end{align}
which ensures that the first moment of ${\cal F} - \hat{E}_{\cal F} $ is zero under the null. Since $\hat \sigma_i^2$ is unbiased for any value of $\boldsymbol\beta$, this centered statistic still has its expectation minimized under $H_0$, so that large values of the statistic can be taken as evidence against the null. Following the same approach, we can estimate $\boldsymbol w_{\cal F}$ using the sample analog $\check {\boldsymbol w} = (\check w_1,\dots,{ \check w_r})'$, where $\check w_\ell$ is the $\ell$-th eigenvalue of $\varOmega \!\left(\hat \sigma_1^2,\dots,\hat \sigma_n^2 \right)$. However, $\check {\boldsymbol w}$ may not have non-negative entries summing to one, so we ensure that these conditions hold by letting $\hat {\boldsymbol w} = (\hat w_1,\dots,{ \hat w_r})'$, where
\begin{align}
\hat w_\ell = \frac{\check w_\ell \vee 0}{\sum_{\ell=1}^r  (\check w_\ell \vee 0) }.
\end{align}

While our construction of $\hat{E}_{\cal F}$ implies that the first moment of ${\cal F} - \hat{E}_{\cal F} $ is known when $H_0$ holds, its second moment still depends heavily on unknown parameters. Under $H_0$,
\begin{align}\label{eq:NullVar}
	%{\V}_{0}\!\left[{\cal F} - \hat{E}_{\cal F} \right] = 2\sum_{i=1}^n \sum_{j \neq i} C_{ij}^2 \sigma_i^2 \sigma_j^2 + \sum_{i=1}^n \left( \sum_{j\neq i} \left(\tfrac{B_{jj}}{M_{jj}}- \tfrac{B_{ii}}{M_{ii}}\right) M_{ij} x_j'\beta \right)^2 \sigma_i^2,
	{\V}_{0}\!\left[{\cal F} - \hat{E}_{\cal F} \right] = \sum_{i=1}^n \sum_{j \neq i} U_{ij} \sigma_i^2 \sigma_j^2 + \sum_{i=1}^n \!\left( \sum\nolimits_{j\neq i} V_{ij}\boldsymbol x_j'{\boldsymbol\beta} \right)^2 \sigma_i^2,
\end{align}
%where $C_{ij} = B_{ij} - \tfrac{M_{ij}}{2} \left(\tfrac{B_{jj}}{M_{jj}}+ \tfrac{B_{ii}}{M_{ii}}\right)$.
where $U_{ij} = 2\!\left( B_{ij} - {M_{ij}} \!\left({B_{ii}}/{M_{ii}}+{B_{jj}}/{M_{jj}}\right)\!/2 \right)^2$ and $V_{ij}=M_{ij} \!\left( {B_{ii}}/{M_{ii}}-{B_{jj}}/{M_{jj}}\right)$ are known quantities. This representation of the null-variance stems from writing ${\cal F} - \hat{E}_{\cal F}$ as a second order $U$-statistic with squared kernel weights of $U_{ij}/2$ plus a linear term with weights $\sum\nolimits_{j\neq i} V_{ij}\boldsymbol x_j'\boldsymbol\beta$ (see the Appendix for details).

\subsection{Variance estimator}\label{sec:varest}

This subsection describes the construction of an unbiased estimator of the conditional variance ${\V}_{0}\big[{\cal F} - \hat{E}_{\cal F} \big]$. As is evident from the representation in \eqref{eq:NullVar}, this variance depends on products of second moments such as the product $\sigma_i^2 \sigma_j^2$. While $\hat \sigma_i^2$ and $\hat \sigma_j^2$ are unbiased for $\sigma_i^2$ and $\sigma_j^2$, their product is not unbiased, as the estimation error is correlated across the two estimators. Some of this dependence can be removed by leaving both $i$ and $j$ out, but a bias remains as the remaining sample is used in estimating both $\sigma_i^2$ and $\sigma_j^2$. We therefore propose a leave-\textit{three}-out estimator of the variance product $\sigma_i^2 \sigma_j^2$. The product $\boldsymbol x_j'{\boldsymbol\beta}\,\boldsymbol x_k'{\boldsymbol\beta}\, \sigma_i^2$ appearing in the second component of ${\V}_{0}\big[{\cal F} - \hat{E}_{\cal F} \big]$ can similarly be estimated without bias using leave-three-out estimators.

Towards this end, let $\hat {\boldsymbol\beta}_{-ijk} = \big(\sum_{\ell \neq i,j,k}\boldsymbol x_\ell\boldsymbol x_{\ell}'\big)\inverse \sum_{\ell \neq i,j,k}\boldsymbol x_\ell y_{\ell}$ denote the OLS estimator of $\boldsymbol\beta$ applied to the sample that leaves observations $i$, $j$, and $k$ out. Then, define a leave-three-out estimator of $\sigma_i^2$ as
\begin{align}
	\hat \sigma_{i,-jk}^2 = y_i\big(y_i - \boldsymbol x_i'\hat {\boldsymbol\beta}_{-ijk}\big).
	%\hat \sigma_{i,-jk}^2 = (y_i-\bar y)(y_i - x_i'\hat \beta_{-ijk}).
\end{align}
When $j$ and $k$ are identical, only two observations are left out, and we also write $\hat {\boldsymbol \beta}_{-ij}$ and $\hat \sigma_{i,-j}^2$. To construct an estimator of $\sigma_i^2 \sigma_j^2$, we first write the leave-two-out variance estimator $\hat \sigma_{i,-j}^2$ as a weighted sum (see Section \ref{sec:comp} for details)
\begin{align}\label{eq:L2OVar}
	\hat \sigma_{i,-j}^2 = y_i \sum_{k \neq j} \check M_{ik,-ij} y_k \qquad \text{where} \qquad \check M_{ik,-ij} = \frac{M_{jj} M_{ik} - M_{ij} M_{jk}}{D_{ij}}.
\end{align}
Then we multiply each summand above by a leave-three-out variance estimator $ \hat \sigma_{j,-ik}^2$, which leads to an unbiased estimator of $\sigma_i^2 \sigma_j^2$:
\begin{align}\label{eq:VarProd}
	\widehat{\sigma_i^2 \sigma_j^2} = y_i \sum_{k \neq j} \check M_{ik,-ij} y_k \cdot \hat \sigma_{j,-ik}^2.
	%\widehat{\sigma_i^2 \sigma_j^2} = (y_j-\bar y) \sum_{k \neq i} \frac{M_{ii} M_{jk} - M_{ij} M_{ik}}{D_{ij}} y_k \cdot \hat \sigma_{i,-jk}^2.
\end{align}
While this construction appears to treat $i$ and $j$ in an asymmetric fashion, we show to the contrary that \eqref{eq:VarProd} is invariant to a permutation of the indices; $\widehat{\sigma_i^2 \sigma_j^2}=\widehat{\sigma_j^2 \sigma_i^2}$.

To understand why this proposal is unbiased for $\sigma_i^2 \sigma_j^2$, it is useful to highlight that $\hat \sigma_{j,-ik}^2$ is conditionally independent of $(y_i,y_k)$ and unbiased for $\sigma_{j}^2$, which, when coupled with \eqref{eq:L2OVar}, leads to unbiasedness immediately:
\begin{align}
	\E\big[\, \widehat{\sigma_i^2 \sigma_j^2} \,\big] &= \sum_{k \neq j} \E\!\left[ y_i \check M_{ik,-ij} y_k\right]\! \cdot \E\!\left[\hat \sigma_{j,-ik}^2 \right]
	= \E\!\left[ \hat \sigma_{i,-j}^2 \right]\!\sigma_j^2  =  \sigma_i^2 \sigma_j^2.
\end{align}

An unbiased estimator of the variance expression in \eqref{eq:NullVar} that utilizes the variance product estimator in \eqref{eq:VarProd} is
\begin{align}\label{eq:VcalF}
\hat{V}_{\cal F}\,
&= \sum_{i=1}^n \sum_{j \neq i} \!\left( U_{ij} - V_{ij}^2\right)\! \cdot \widehat{\sigma_i^2 \sigma_j^2}
+ \sum_{i=1}^n \sum_{j\neq i} \sum_{k\neq i} V_{ij} y_j \cdot V_{ik} y_k \cdot \hat \sigma_{i,-jk}^2.
\end{align}
Note that the product of the $(j=k)$-th terms in the second component generate, for each $i$, a term not present in \eqref{eq:NullVar} and whose non-zero expectation contains $V_{ij}^2\sigma_i^2 \sigma_j^2$; hence the use of $U_{ij} - V_{ij}^2$ instead of $U_{ij}$ in the first component.

\begin{rem}\label{rem:neg}
	In the process of establishing the asymptotic validity of the proposed test, we show that the variance estimator $\hat{V}_{\cal F}$ is close to the null variance $\V_0 \big[{\cal F} - \hat{E}_{\cal F} \big]$. In particular, this property implies that the variance estimator is positive with probability approaching one in large samples. However, negative values may still emerge in small samples. In such cases, we propose to replace the variance estimator with an upward biased alternative that uses squared outcomes as estimators of \textit{all} the error variances. This replacement is guaranteed positive, as is detailed in the Appendix, and therefore ensures that the critical value is always defined. Relatedly, Section \ref{sec:cons} considers settings where the design matrix may turn rank deficient after leaving certain triples of observations out of the sample. There, we similarly propose to use squared outcomes as estimators of \textit{some} error variances, namely those whose observations cause rank deficiency when left out of the sample.
\end{rem}

	\begin{rem}
		Note that in finite samples, the proposed critical value $\hat c_\alpha$ is not invariant to the value of $\boldsymbol \beta$. In practice, this means that finite sample size and power may be influenced by the size of the regression coefficients.  In Section \ref{sec:sim} we analyze via simulations the impact of the average signal size on the finite sample size and power through the regression $R^2$.
	\end{rem}

{
	\begin{rem}\label{rem:restricted}
The null restrictions $\boldsymbol{R\beta }=\mathbf{q}$ can be imposed during the estimation of auxiliary quantities (i.e., $\sigma _{i}^{2},\ \sigma _{i}^{2}\sigma _{j}^{2},\dots$), and the restricted estimates may be used in place of our proposals that do not impose those restrictions (i.e., $\widehat{\sigma }_{i}^{2},\  \widehat{\sigma _{i}^{2}\sigma _{j}^{2}}, \dots)$.\footnote{{We thank an anonymous referee for raising this point.}} In the Appendix, we show how this idea can be implemented, and leave further investigation to future research. Incorporating restricted estimates may make the the procedure more complex and slow down computations, but it may also improve the efficiency of the auxiliary estimates.
	\end{rem}
}

%{\color{red}
%\begin{rem}
%Note that the individual variance estimates together with both $\hat{E}_{\cal F}$ and $\hat{V}_{\cal F}$ are not invariant to the value of ${\boldsymbol \beta}$ in finite samples, and neither is the proposed critical value $\hat c _\alpha$. We analyze in Section \ref{sec:sim} the impact of the average signal size on the finite sample size and power.
%\end{rem}
%}

%\begin{rem}\label{rem:boot}
%	Bootstrapping and closely related resampling methods are often advocated as automatic approaches for construction of critical values. However, in the context of linear regression with proportionality between the number of regressors and sample size, multiple papers \citep{bickel1983bootstrapping,elkaroui2018hidimboot,cattaneo2017inference} demonstrate invalidity of standard bootstrap schemes even when inferences are made on a single regression coefficient. Under an additional assumption of homoskedasticity and further restrictions on the design, \cite{elkaroui2018hidimboot} and \cite{richard2019manyboot} show that various (problem-specific) corrections to bootstrap methods can restore validity. We leave it to future research to determine whether bootstrap or other resampling methods can be corrected to ensure validity in our context of a heteroskedastic regression model with many regressors and tested restrictions.
%\end{rem}

\subsection{Computational remarks}\label{sec:comp}

While the previous subsections introduced the location estimator $\hat{E}_{\cal F}$, variance estimator $\hat{V}_{\cal F}$, and empirical weights $\hat{\boldsymbol w}$ using leave-out estimators of $\boldsymbol\beta$, we note here that direct computation of $\hat {\boldsymbol\beta}_{-i}$, $\hat {\boldsymbol\beta}_{-ij}$, and $\hat {\boldsymbol\beta}_{-ijk}$ can be avoided by using the Sherman-Morrison-Woodbury (SMW) identity. Specifically, \eqref{eq:SMW} implies that
\begin{align}
	y_i - \boldsymbol x_i'\hat {\boldsymbol\beta}_{-i} = \frac{y_i -\boldsymbol x_i'\hat {\boldsymbol\beta}}{M_{ii}},
\end{align}
so that computation of $\hat {\boldsymbol\beta}_{-i}$ can be avoided when constructing the leave-one-out variance estimator $\hat \sigma _{i}^{2} = {y_i(y_i -\boldsymbol x_i'\hat {\boldsymbol\beta}_{-i})}$. Similarly, it is possible to show that for $i$ and $j$ not equal,
\begin{align}
y_i -\boldsymbol x_i'\hat {\boldsymbol\beta}_{-ij} = \frac{M_{jj}(y_i -\boldsymbol x_i'\hat {\boldsymbol\beta}) - M_{ij}(y_j -\boldsymbol x_j'\hat {\boldsymbol\beta})}{D_{ij}},
\end{align}
which leads to \eqref{eq:L2OVar}, and, for $i$, $j$, and  $k$, all of which are different,
\begin{align}
y_i -\boldsymbol x_i'\hat {\boldsymbol\beta}_{-ijk} = \frac{(y_i -\boldsymbol x_i'\hat {\boldsymbol\beta}) - M_{ij}(y_j-\boldsymbol x_j'\hat {\boldsymbol\beta}_{-jk}) - M_{ik}(y_k-\boldsymbol x_k'\hat {\boldsymbol\beta}_{-jk})}{D_{ijk}/D_{jk}}.
\end{align}

These relationships allow for recursive computation of the leave-out residuals and therefore for simple construction of the variance estimators $\hat \sigma _{i}^{2}$, $\hat \sigma _{i,-j}^{2}$, and $\hat \sigma _{i,-jk}^{2}$ needed to compute the components of the critical value $c_{\alpha}$. In particular, the location estimator $\hat{E}_{\cal F}$ and empirical weights $\hat{\boldsymbol w}$, which require only the leave-one-out residuals, can be computed without explicit loops, by relying instead on elementary matrix operations applied to the matrices containing $M_{ij}$ and $B_{ij}$ as well as the data matrices. Similarly, all doubly indexed objects entering the variance estimator $\hat{V}_{\cal F}$ can be computed by elementary matrix operations. Those objects are $D_{ij}$, $V_{ij}$, $U_{ij}$, and the leave-two-out residuals. The remaining objects entering $\hat{V}_{\cal F} $ can be computed by a single loop across $i$ with matrices containing $D_{ijk}$ and leave-three-out residuals renewed at each iteration.

Additionally, the above representations of leave-out residuals demonstrate how $M_{ii}\inverse$, $D_{ij}\inverse$ and $D_{ijk}\inverse$ enter the critical value, and thus highlight the need for bounding $D_{ijk}$ away from zero when analyzing the large sample properties of the proposed test. %Programming-wise, comparisons with zero must be relaxed to comparisons with a small tolerance level such as $10^{-6}$ because of machine zero imperfections.

\begin{rem}
	The quantile $q_{1-\alpha}(\bar F_{\hat{\boldsymbol{w}},n-m})$ can easily be constructed by simulating the distribution of the random variable in \eqref{eq:Fbar} conditional on the realized value of $\hat{\boldsymbol{w}}$.
\end{rem}

\section{Asymptotic size and power}\label{sec:asymptotia}

This section studies the asymptotic properties of the proposed test. Specifically, we provide a set of regularity conditions under which the test has a correct asymptotic size and non-trivial power against local alternatives. All limits are taken as the sample size $n$ approaches infinity. In studying asymptotic size, we allow for the number of restrictions $r$ and/or number of regressors $m$ to be fixed or diverging with $n$, and show that the asymptotic size is controlled uniformly over the two situations by the test, which is therefore robust to the type of asymptotics. When studying asymptotic power, we focus on the case of many restrictions, i.e., $r$ diverging with $n$. The ordering $r \le m < n-3$ is maintained throughout.

{
\subsection{Assumptions}

To establish asymptotic validity of the proposed test, we impose some regularity conditions. We begin by outlining the assumptions for the sampling scheme.

\begin{assumption}\label{ass:iid}
 $\{(y_i,\boldsymbol x_i')'\}_{i=1}^n$ are i.i.d., $\E[\varepsilon_i | \boldsymbol x_i] =0$, $\max_{1\le i\le n} \big(\E[\varepsilon_i^4 |\boldsymbol x_i] + \sigma_i^{-2}\big) = O_p(1)$.
\end{assumption}

Assumption~\ref{ass:iid}} places restrictions on the error conditional moments: an upper bound on the conditional fourth moments and lower bound on the skedastic function. Such restrictions are typically required when heteroskedasticity is allowed \citep[see, e.g.,][]{chao2012asymptotic,cattaneo2017inference,kline2018leave}.

{Next, we impose regularity conditions on the regressors to ensure convergence of the centered statistic ${\cal F} - \hat{E}_{\cal F}$ to the normal distribution when the number of restrictions grow large. These conditions restrict the weights on the regression errors in the bilinear form of ${\cal F} - \hat{E}_{\cal F}$ \footnote{{See, e.g., expansion \eqref{eq:Fminus} below and its extended version \eqref{eq:Ustat} in the Appendix.}}. These weights contain various functions of regressors (in particular, potentially unbounded regression function values $\boldsymbol x_i'{\boldsymbol\beta}$), and the purpose of Assumption~\ref{ass:reg} is to restrict their asymptotic behavior.

%In order to describe the regularity conditions for asymptotic validity of the proposed test we introduce $\check x_i = \sum_{j=1}^n M_{ij} \frac{B_{jj}}{M_{jj}} x_j = - \sum_{j \neq i} V_{ij} x_j$ which enters the proposed test-statistic through the estimated centering $\hat{\E}_0[{\cal F}]$. Note that $\check x_i$ gives the residual from a regression of $\frac{B_{ii}}{M_{ii}} x_i$ on $x_i$. Therefore, $\check x_i$ = 0 when $B_{ii}/M_{ii}$ does not vary across $i$ as happens in analysis of variance with equal sized groups (see Section \ref{sec:examples} for further details). The regularity conditions below impose the weaker condition that the maximal $\check x_i'\beta$ is small relative to the number of restrictions being tested and  Section \ref{sec:examples} contain further discussion suggesting that this regularity condition is quite generic. In addition to the high-level condition imposed on $\check x_i'\beta$, we impose moment and boundedness restrictions on the data.

%\begin{assumption}\label{ass:reg}
%	$(i)$ $\max_i \left(\E[\varepsilon_i^4 |\boldsymbol x_i] + \sigma_i^{-2}\right) = O_p(1)$, $(ii)$ $\max_i\, (\boldsymbol x_i'{\boldsymbol\beta})^2 = O_p(1)$, and $(iii)$ %
%	%$\max_i r^{-\epsilon/2}\sum_{j \neq i} \abs{M_{ij}}^{2-\epsilon} = o_p(1)$ for some $\epsilon \in (0,1]$ .
%	%$\max_i \sum_{j \neq i} \abs{M_{ij}}^{2-\epsilon} = O_p(1)$ for some $\epsilon \in (0,1]$.
%	$\max_{i}\,(\sum_{j \neq i} V_{ij}\boldsymbol x_j'{\boldsymbol\beta} )^2/r = o_p(1)$.
%\end{assumption}

\begin{assumption}\label{ass:reg}
There exists a sequence $\epsilon_n \rightarrow 0$ such that $(i)$ $\epsilon_n^{1/3} \max_{1\le i\le n} (\boldsymbol x_i'{\boldsymbol\beta})^2 = O_p( 1 )$ and (ii) at least one of the following two conditions is satisfied:
	\begin{enumerate}
		\item[(a)] $\max_{1\le i\le n} B_{ii} = o_p( \epsilon_n )$,
		
		\item[(b)] $\max_{1\le i\le n}\,(\sum_{j \neq i} V_{ij}\boldsymbol x_j'{\boldsymbol\beta} )^2/r = o_p( 1 )$ and $\epsilon_n r \rightarrow \infty$.
	\end{enumerate}	
\end{assumption}
}

{Part $(i)$ of Assumption~\ref{ass:reg},} which places bounds on regression function values, is used to control the variance of the leave-out estimators $\hat \sigma_i^2$ and $\hat \sigma_{i,-jk}^2$. This condition places a rate bound on extreme outliers among the individual signals, and its role is primarily to control certain higher order terms. The condition $(i)$ is used to establish both size control and local power properties, so we stress that it pertains to the actual data generating process, not just the hypothesized value of $\boldsymbol\beta$.
Also note that {Assumption~\ref{ass:reg}$(i)$} differs from Assumption 1(iii) of \cite{kline2018leave} in that we allow $\boldsymbol x_i'\boldsymbol\beta$ to have an unbounded support so the maximum over $i$ may be slowly diverging with $n$. This relaxation is important, as it allows regressors with unbounded support and associated non-zero regression coefficients.

{Part $(ii)$ of Assumption~\ref{ass:reg} is an analogue of the Lindeberg condition in the central limit theorem for weighted sums of independent random variables, in that it also controls the collective asymptotic behavior of weights, but the weights in a bilinear form of independent regression errors.} When the number of tested restrictions is fixed, {this condition} implies that the estimator of the tested contrasts $\boldsymbol R \hat{\boldsymbol\beta} - \boldsymbol q$ is asymptotically normal. When the number of restrictions is growing, {this condition} is weaker and involves only a high-level transformation of the regressors $\sum_{j \neq i} V_{ij}\boldsymbol x_j'\boldsymbol\beta$, which enters ${\cal F} - \hat E_{\cal F}$ as a weight on the $i$-th error term $\varepsilon_i$. To ensure that the asymptotic distribution of ${\cal F} - \hat E_{\cal F}$ does not depend on the unknown distribution of any one error term, we therefore require that no squared $\sum_{j \neq i} V_{ij}\boldsymbol x_j'\boldsymbol\beta$ dominates the variance ${\V}_{0}\big[{\cal F} - \hat{E}_{\cal F} \big]$, which in turn is proportional to $r$.
{Assumption~\ref{ass:reg}$(ii)$} can be verified in particular applications of interest. For example, the Appendix shows that {part $(ii)(b)$ of Assumption~\ref{ass:reg}} holds in models characterized by group specific regressors.

The next assumption imposes the previously discussed regularity condition that the determinant $D_{ijk}$ is bounded away from zero for any $i$, $j$, and $k$, all of which are different. This condition will be relaxed in Section \ref{sec:cons}, where such a version of $\hat V_{\cal F}$ is introduced that exists even when leaving two or three observations out leads to rank deficiency of the design.
\begin{assumption}\label{ass:asympLeave3Out}
	$\max_{i \neq j \neq k \neq i} D_{ijk}\inverse = O_p(1)$.
\end{assumption}

\subsection{Asymptotic size}

Under the regularity conditions in Assumptions \ref{ass:reg} and \ref{ass:asympLeave3Out}, the following theorem establishes the asymptotic validity of the proposed testing procedure.

\begin{theorem}\label{thm:size}
	If Assumptions \ref{ass:Leave3Out}, \ref{ass:iid}, \ref{ass:reg}, and \ref{ass:asympLeave3Out} hold, then, under $H_0$,
	\begin{align}
		\lim_{n \rightarrow \infty} \Pr\left( F> \hat c_\alpha \right) = \alpha.
	\end{align}
\end{theorem}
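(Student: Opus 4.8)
The plan is to cancel the common positive factor $1/(r\hat\sigma^2_\varepsilon)$ from both $F$ and $c_\alpha$, which reduces the rejection event to the self-normalized inequality
\[
T_n := \frac{{\cal F} - \hat E_{\cal F}}{\hat V_{\cal F}^{1/2}} > \kappa_n, \qquad \kappa_n := \frac{q_{1-\alpha}(\bar F_{\hat{\boldsymbol w},n-m}) - 1}{\sqrt{2\sum_{\ell=1}^r \hat w_\ell^2 + 2/(n-m)}}.
\]
The residual variance $\hat\sigma^2_\varepsilon$ therefore plays no role in the limit. Assumption~\ref{ass:Leave3Out} guarantees that all leave-out objects entering $T_n$ and $\kappa_n$ are well defined, and by Remark~\ref{rem:neg} the event $\hat V_{\cal F}\le 0$, on which the upward-biased replacement is used, has vanishing probability and can be ignored for the limit. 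Observing that $\kappa_n$ is exactly the $(1-\alpha)$-quantile of the standardized F-bar variable $\bar T_n := (\bar F_{\hat{\boldsymbol w},n-m}-1)/\sqrt{2\sum_\ell \hat w_\ell^2 + 2/(n-m)}$ conditional on $\hat{\boldsymbol w}$, it suffices to show that $T_n$ and $\bar T_n$ share the same limiting law, so that $\Pr(T_n>\kappa_n)\to\alpha$.

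First I would assemble the moment inputs. Unbiasedness of $\hat\sigma_i^2$ for every $\boldsymbol\beta$ gives $\E_0[\hat E_{\cal F}]=\E_0[{\cal F}]$, and a variance bound controlled by Assumption~\ref{ass:reg}$(i)$--$(ii)$ yields $\hat E_{\cal F}/\E_0[{\cal F}]\xrightarrow{p}1$. In the same spirit, unbiasedness of the leave-three-out product estimator in \eqref{eq:VarProd} gives $\E_0[\hat V_{\cal F}]=\V_0[{\cal F}-\hat E_{\cal F}]$, and a more delicate variance bound delivers the ratio consistency $\hat V_{\cal F}/\V_0[{\cal F}-\hat E_{\cal F}]\xrightarrow{p}1$; this is where Assumption~\ref{ass:asympLeave3Out} is indispensable, since through the leave-out residual recursions $\hat V_{\cal F}$ divides by $M_{ii}$, $D_{ij}$, and $D_{ijk}$. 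Continuity of the eigenvalues of $\varOmega(\cdot)$ combined with the same variance consistency then gives that the empirical weights track the population weights, i.e. $\hat{\boldsymbol w}$ is close to $\boldsymbol w_{\cal F}$.

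The analytic core is a distributional approximation for $T_n$ using the representation of ${\cal F}-\hat E_{\cal F}$ from \eqref{eq:NullVar} as a degenerate second-order $U$-statistic in the errors with kernel weights built from $U_{ij}$, plus an independent linear term with weights $\sum_{j\neq i}V_{ij}\boldsymbol x_j'\boldsymbol\beta$. I would unify the fixed and growing regimes through a subsequence argument: given any subsequence, pass to a further one along which $\boldsymbol w_{\cal F}$ converges on the simplex to some $\boldsymbol w_\infty$ and $r$ either stays bounded or diverges. When $r\to\infty$ the Lindeberg condition of Assumption~\ref{ass:reg}$(iii)$ makes the weights spread out and a martingale/Lindeberg--Feller CLT for the quadratic-plus-linear form gives $T_n\Rightarrow N(0,1)$, while \eqref{eq:normlim} forces $\kappa_n\xrightarrow{p}q_{1-\alpha}(\Phi)$, so by Slutsky $\Pr(T_n>\kappa_n)\to 1-\Phi(q_{1-\alpha}(\Phi))=\alpha$. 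When $r$ stays bounded, $\max_\ell\hat w_\ell$ is bounded away from zero, asymptotic normality of $\boldsymbol R\hat{\boldsymbol\beta}-\boldsymbol q$ makes ${\cal F}/\E_0[{\cal F}]\Rightarrow\bar\chi^2_{\boldsymbol w_\infty}$, and since $n-m\to\infty$ the $\chi^2_{n-m}/(n-m)$ denominator of the F-bar concentrates at one; together with the ratio consistencies above, both $T_n$ and $\bar T_n$ converge to the same standardized $\bar\chi^2_{\boldsymbol w_\infty}$ law, whose continuity at its quantile again yields $\alpha$. As the limit equals $\alpha$ along every subsequence, $\Pr(F>c_\alpha)\to\alpha$.

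I expect the main obstacle to be the joint treatment of self-normalization and the growing-dimension central limit theorem. Establishing $\hat V_{\cal F}/\V_0[{\cal F}-\hat E_{\cal F}]\xrightarrow{p}1$ requires bounding a fourth-order object whose summands carry the inverse leverages $M_{ii}^{-1}$, $D_{ij}^{-1}$, and $D_{ijk}^{-1}$, so Assumptions~\ref{ass:reg} and \ref{ass:asympLeave3Out} must be combined carefully; this consistency must then be fed into a Lindeberg-type CLT for a degenerate $U$-statistic with $n$-dependent kernel weights, whose validity hinges on precisely the spreading of the weights that separates the normal regime from the chi-bar-squared regime. Making this transition seamless, so that the single data-driven threshold $\kappa_n$ is asymptotically correct uniformly over both regimes, is the delicate point that the subsequence device is designed to resolve.
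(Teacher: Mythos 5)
Your proposal is correct and follows essentially the same route as the paper's proof: reduce the rejection event to the self-normalized inequality, establish ratio consistency of $\hat E_{\cal F}$, $\hat V_{\cal F}$, and $\hat{\boldsymbol w}$, then handle the growing-$r$ regime via a CLT for the degenerate quadratic-plus-linear form (with $\kappa_n\to q_{1-\alpha}(\Phi)$ from \eqref{eq:normlim}) and the fixed-$r$ regime via a subsequence argument yielding the $\bar\chi^2_{\boldsymbol w_\infty}$ limit. The only cosmetic difference is that you package both regimes under a single subsequence device, whereas the paper treats the growing-$r$ case directly and reserves subsequences for fixed $r$.
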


{A discussion of the structure of the decision event $F > \hat c_\alpha$ may aid in understanding why size control occurs in the critical case of many restrictions, $r\to\infty$, %(the case of fixed $r$ is relatively trivial and explained below equation \eqref{eq:fixedr}),
and where the challenges come from. The critical value is then asymptotically close to $\check c _\alpha$ defined in \eqref{eq:checkcv}, and the decision event is equivalent to $\hat V_{\cal F}^{-1/2}({\cal F} - \hat{E}_{\cal F}) > \big(q_{1-\alpha}(F_{r,n-m})-1\big)/\sqrt{2/r+2/(n-m)}$. The right side is asymptotically standard normal, so the size control rests on asymptotic standard normality of $\hat V_{\cal F}^{-1/2}({\cal F} - \hat{E}_{\cal F})$, which, in turn, is shown using a central limit theorem for ${\cal F} - \hat{E}_{\cal F}$ and its asymptotically correct standardization by $\hat V_{\cal F}^{1/2}$.

The demeaned statistic ${\cal F} - \hat{E}_{\cal F}$ has a representation of} a bilinear form in independent, not necessarily identically distributed, random variables (see equation (3) in the Appendix):
\begin{align}\label{eq:Fminus}
\mathcal{F}-\hat{E}_{\mathcal{F}}=
\sum_{i=1}^{n}\sum_{j\neq i}C_{ij}u_{i}v_{j}+\sum_{j=1}^{n}c_{\boldsymbol \beta j}w_{j},
\end{align}
where, in our case, $u_{j}=v_{j}=w_{j}=\varepsilon _{j}$ are regression errors, and the
coefficients $c_{\boldsymbol \beta j}$ in the second, linear component, explicitly
depend on the unknown parameters $\boldsymbol \beta $. {Note that the quadratic term in \eqref{eq:Fminus} has a jackknife structure and lacks terms of the type $C_{ii}\varepsilon _{i}^2$, whose presence would introduce higher-order moments into the variance.\footnote{{See also \cite{calhoun2011many} for an example of a structure where higher-order moments do arise and need to be tediously estimated.}} This further highlights the importance of demeaning the original statistic ${\mathcal{F}}$. }

The econometric theory literature is populated by central limit theorems (CLTs) handling the
asymptotics of \eqref{eq:Fminus} under various assumptions, starting from an
early CLT in \cite{kelejian2001spatial} that was originated in a spatial
regression environment. More recent examples are the CLT of \cite{chao2012asymptotic}, formulated in the many-weak-instrument context, the CLT of \cite{kline2018leave} designed for many-regressor models, and the CLT of \cite{anatmik2021factor} targeting big factor models. \cite{cattaneo2016alternative} presented a unifying framework leading to the
description of asymptotical behavior of V-statistics that further generalize
bilinear forms like \eqref{eq:Fminus}. {Finally, \cite{kuersteiner2020} provide a CLT for expressions like \eqref{eq:Fminus}, allowing for data-dependent $C_{ij}$ and $c_{\boldsymbol \beta j}$, heteroskedasticity, and some forms of dependence across $j$.} In all these setups, the asymptotic normality of \eqref{eq:Fminus} eventually results in an asymptotically normal
test statistic when \eqref{eq:Fminus} is suitably standardized.

The next challenge is converting the asymptotic normality of \eqref{eq:Fminus} into an asymptotically valid test.\footnote{Often, the bilinear form \eqref{eq:Fminus} has a tighter structure, which simplifies the emergence of asymptotic normality and further pivotization, removing the challenges handled in the present paper. For example, simplification may come from a slow growth of incidental parameters' dimensionality \citep{hong1995nparmtest}, absence of the quadratic component in \eqref{eq:Fminus} \citep{breitung2016hetpanel}, or assumption of conditional homoskedasticity \citep{anatolyev2012inference}.
Likewise, in the many weak instrument literature, a variety of tests are also based on asymptotic normality of bilinear forms, with simplifying deviations from the general form \eqref{eq:Fminus}. In J type tests for validity of many instruments and thus many restrictions \citep{anatgosp2011test,leeokui2012hausmantest,chao2014testing}, the dependence on a parameter of asymptotically fixed dimensionality can be handled using its plug-in estimate. In Anderson-Rubin type tests for few parameter restrictions \citep{anatgosp2011test,crudu2021ar,miksun2021manyweak}, the number of parameter restrictions is asymptotically fixed, and one can use restricted null values of the parameters. In our situation, in contrast, both the restriction numerosity and parameter dimensionality are asymptotically increasing, at least when $r$ is asymptotically growing.} To obtain an asymptotically pivotal statistic, \eqref{eq:Fminus} needs to be standardized by a consistent estimate of its (rescaled) variance \eqref{eq:NullVar}. Constructing such an estimate is difficult because of explicit dependence of the coefficients $c_{\boldsymbol \beta j}$ and implicit dependence of the regression errors $\varepsilon _{i}$ on $\boldsymbol \beta ,$ a high-dimensional parameter when regressors are many. {The estimator \eqref{eq:VcalF} based on leave-three-out error variance estimates accomplishes this goal.}

As our treatment covers two asymptotically different frameworks under one roof, we use two CLTs in this paper.  One, formulated as Lemma B.1 in \cite{kline2018leave}, which builds on Lemmas A2.1 and A2.2 in \cite{solv2020robmany}, pertains to the case of asymptotically growing $r$. The other is the regular Lyapounov central limit theorem, which pertains to the case of asymptotically fixed $r$. %Both CLTs are applied directly to the normalized version of \eqref{eq:Fminus}; see the Appendix for details.}

\subsection{Asymptotic power}

To describe the power of the proposed test, we introduce a drifting sequence of local alternatives indexed by a deviation $\boldsymbol\delta$ from the null times $(\boldsymbol R\boldsymbol S_{xx}\inverse \boldsymbol R')^{1/2}$, which specifies the precision the tested linear restrictions can be estimated with in the given sample. Thus, we consider alternatives of the form
\begin{align}\label{DriftDGP}
	H_\delta \ : \ \boldsymbol R {\boldsymbol\beta} =\boldsymbol q + (\boldsymbol R\boldsymbol S_{xx}\inverse\boldsymbol R')^{1/2} \cdot \boldsymbol\delta,
\end{align}
for ${\boldsymbol\delta} \in \R^r$ satisfying the limiting condition
\begin{align}
	\lim_{n,r \rightarrow \infty} \frac{\norm{\boldsymbol\delta}}{r^{1/4}} = \Delta_{\delta} \in [0,\infty].
\end{align}
Below we show that the power of the test is monotone in $\Delta_\delta$, with power equal to size when $\Delta_\delta=0$ and power equal to one when $\Delta_\delta=\infty$.

The role of $(\boldsymbol R\boldsymbol S_{xx}\inverse\boldsymbol R')^{1/2}$ in indexing the local alternatives is analogous to that of $n^{-1/2}$ often used in parametric problems. However, in settings with many regressors some linear restrictions may be estimated at rates that are substantially lower than the standard parametric one. Therefore, we index the deviations from the null by the actual rate of $(\boldsymbol R\boldsymbol S_{xx}\inverse\boldsymbol R')^{1/2}$ instead of $n^{-1/2}$.

The alternative is additionally indexed by $\boldsymbol\delta$, which in standard parametric problems is typically fixed. However, fixed $\boldsymbol \delta$ is less natural here, as the dimension of $\boldsymbol\delta$ increases with sample size. Instead, we fix the limit of its Euclidean norm when scaled by $r^{1/4}$. This approach allows us to discuss different types of alternatives and how the numerosity of the tested restrictions affects the test's ability to detect deviations from the null. Specifically, note that when the deviation $\boldsymbol\delta$ is sparse, i.e., only a bounded number of its entries are non-zero, then the test has a non-trivial power against alternatives, {whose individual elements on average diverges} at a rate that is $r^{1/4}$ lower than when only a fixed number of restrictions is tested. This observation highlights the cost for the power of including many \textit{irrelevant} restrictions in the hypothesis. On the other hand, if $\boldsymbol\delta$ is dense, e.g., with all entries bounded away from zero, then the test can detect local deviations, {in which an individual element on average shrinks} at a rate that is $r^{1/4}$ greater than the usual. This means that if the tested restrictions can be estimated at the parametric rate and they are all \textit{relevant}, then the test can detect deviations from the null of order ${n^{-1/2} r^{-1/4}}$.

The following theorem states the asymptotic power under sequences of local alternatives of the form given in \eqref{DriftDGP} and discussed above.

\begin{theorem}\label{thm:power}
	If Assumptions \ref{ass:Leave3Out}, \ref{ass:iid}, \ref{ass:reg}, and \ref{ass:asympLeave3Out} hold, then, under $H_\delta$,
	\begin{align}
	\lim_{n,r \rightarrow \infty}\Pr\left( F> \hat c_\alpha \right)
	- \Phi \left( \Phi ^{-1}\left( \alpha \right)
	+ \Delta_{\delta}^2 \left( {\V_{0}\!\big[ \mathcal{F}-\hat{E}_{\cal F}\big]}/r\right)^{-1/2}
	\right) = 0 ,
	\end{align}
	where $\Phi $ denotes the cumulative distribution function of the standard normal and $\Phi(\infty)=1$.%, and $r\inverse {\V_{0}\big[ \mathcal{F}-\hat{E}_{\cal F}\big]}{} + r{\V_{0}\big[ \mathcal{F}-\hat{E}_{\cal F}\big]\inverse}{} = O_p(1)$.
\end{theorem}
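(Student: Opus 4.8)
The plan is to reuse the asymptotic normality and the variance‑estimator consistency already established for the size theorem, and to isolate the effect of the local alternative as an additive, essentially deterministic, shift of the standardized statistic. First I would rewrite the rejection event. Dividing through by $r\hat\sigma_\varepsilon^2$ in \eqref{eq:cv} shows that $\{F>c_\alpha\}$ coincides with
\begin{align*}
	\hat V_{\cal F}^{-1/2}\big({\cal F}-\hat E_{\cal F}\big) > z_n,\qquad z_n:=\frac{q_{1-\alpha}(\bar F_{\hat{\boldsymbol w},n-m})-1}{\sqrt{2\sum_{\ell=1}^r \hat w_\ell^2+2/(n-m)}}.
\end{align*}
Since the conditions guaranteeing asymptotic normality in the growing‑$r$ regime force $\max_\ell \hat w_\ell \to_p 0$ while $n-m\to\infty$, the normal‑limit property \eqref{eq:normlim} gives $z_n \to_p q_{1-\alpha}(\Phi)=\Phi^{-1}(1-\alpha)=-\Phi^{-1}(\alpha)$. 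It therefore suffices to analyze the law of $\hat V_{\cal F}^{-1/2}({\cal F}-\hat E_{\cal F})$ under $H_\delta$.

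Next I would decompose ${\cal F}$ under $H_\delta$. Writing $\boldsymbol\xi=(\boldsymbol R\boldsymbol S_{xx}\inverse\boldsymbol R')^{-1/2}\boldsymbol R(\hat{\boldsymbol\beta}-\boldsymbol\beta)$ for the standardized estimation error---a mean‑zero function of the errors alone---the form of $H_\delta$ in \eqref{DriftDGP} gives $(\boldsymbol R\boldsymbol S_{xx}\inverse\boldsymbol R')^{-1/2}(\boldsymbol R\hat{\boldsymbol\beta}-\boldsymbol q)=\boldsymbol\xi+\boldsymbol\delta$, so that
\begin{align*}
	{\cal F}=\norm{\boldsymbol\xi+\boldsymbol\delta}^2={\cal F}_0+2\boldsymbol\delta'\boldsymbol\xi+\norm{\boldsymbol\delta}^2,\qquad {\cal F}_0:=\norm{\boldsymbol\xi}^2.
\end{align*}
Here ${\cal F}_0-\hat E_{\cal F}$ is exactly the centered statistic analyzed for Theorem~\ref{thm:size}: because the leave‑one‑out construction makes $\hat E_{\cal F}$ unbiased for $\sum_i B_{ii}\sigma_i^2$ for \emph{every} value of $\boldsymbol\beta$, the restriction $\boldsymbol R\boldsymbol\beta=\boldsymbol q$ is never used in obtaining either the central limit theorem $s_n^{-1}({\cal F}_0-\hat E_{\cal F})\to_d N(0,1)$ or the consistency $\hat V_{\cal F}/s_n^2\to_p 1$, where $s_n^2:=\V_0[{\cal F}-\hat E_{\cal F}]$. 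I would invoke these two facts directly at the drifting truth of \eqref{DriftDGP}.

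It then remains to control the two extra pieces. For the deterministic shift, $\norm{\boldsymbol\delta}/r^{1/4}\to\Delta_\delta$ and $s_n^2\asymp r$ give $\norm{\boldsymbol\delta}^2/s_n=(\norm{\boldsymbol\delta}^2/r^{1/2})(\V_0/r)^{-1/2}$, which differs from $\Delta_\delta^2(\V_0/r)^{-1/2}$ by $o(1)$ and supplies the noncentrality in the statement. For the cross term, the bound $\sum_i\boldsymbol x_i\boldsymbol x_i'\sigma_i^2\preceq(\max_i\sigma_i^2)\boldsymbol S_{xx}$ yields $\V[\boldsymbol\xi]\preceq(\max_i\sigma_i^2)\boldsymbol I_r$, so $\V[\boldsymbol\delta'\boldsymbol\xi]\le(\max_i\sigma_i^2)\norm{\boldsymbol\delta}^2$ with $\max_i\sigma_i^2=O_p(1)$ from Assumption~\ref{ass:reg}$(i)$; hence $2\boldsymbol\delta'\boldsymbol\xi/s_n$ has standard deviation $O_p(\norm{\boldsymbol\delta}/s_n)$. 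When $\Delta_\delta<\infty$ this is $O_p(r^{-1/4})=o_p(1)$, so by Slutsky's theorem and continuity of $\Phi$ the rejection probability converges to $1-\Phi\big(\Phi^{-1}(1-\alpha)-\Delta_\delta^2(\V_0/r)^{-1/2}\big)=\Phi\big(\Phi^{-1}(\alpha)+\Delta_\delta^2(\V_0/r)^{-1/2}\big)$; when $\Delta_\delta=\infty$ the deterministic quadratic shift $\norm{\boldsymbol\delta}^2/s_n$ grows faster than the standard deviation $O_p(\norm{\boldsymbol\delta}/s_n)$ of the stochastic remainder, driving the standardized statistic to $+\infty$ in probability and the power to one, matching $\Phi(\infty)=1$.

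I expect the delicate point to be justifying that the Theorem~\ref{thm:size} machinery applies verbatim at the drifting truth of \eqref{DriftDGP}: in particular that the linear‑in‑errors component of ${\cal F}_0-\hat E_{\cal F}$, which carries the weights $\sum_{j\neq i}V_{ij}\boldsymbol x_j'\boldsymbol\beta$, continues to satisfy the Lindeberg condition of Assumption~\ref{ass:reg}$(iii)$ and that $s_n^2$ stays proportional to $r$ under the shifted $\boldsymbol\beta$. Since Assumption~\ref{ass:reg} is imposed on the actual data‑generating process rather than on the null, this should go through, but it must be verified rather than assumed; the remaining steps are bookkeeping with Slutsky's theorem and the two order bounds above.
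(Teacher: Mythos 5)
Your proposal is correct and follows essentially the same route as the paper's proof: the same decomposition of $\mathcal{F}-\hat{E}_{\cal F}$ under $H_\delta$ into the null-centered statistic, the deterministic shift $\boldsymbol\delta'\boldsymbol\delta$, and a cross term whose variance is bounded by $\max_i\sigma_i^2\,\norm{\boldsymbol\delta}^2$, together with the observation that the CLT and the consistency of $\hat V_{\cal F}$ from Theorem~\ref{thm:size} apply at the drifting truth because unbiasedness and the variance bounds never use $\boldsymbol R\boldsymbol\beta=\boldsymbol q$. The only cosmetic difference is that you keep $\hat E_{\cal F}$ evaluated at the actual data while the paper recenters at the counterfactual $\boldsymbol\beta_0$, which merely moves the negligible linear remainder $\sum_j\check{\boldsymbol x}_j'(\boldsymbol\beta-\boldsymbol\beta_0)\varepsilon_j$ between the null statistic and the error terms.
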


\begin{rem}\label{rem:power}
	It is instructive to compare the power curve documented in Theorem~\ref{thm:power} with the asymptotic power curve of the exact F test when both tests are valid. When the individual error terms are homoskedastic normal with variance $\sigma^2$, the asymptotic power of the exact F test is the limit of \citep{anatolyev2012inference}
	\begin{align}
		\Phi \left( \Phi ^{-1}\left( \alpha \right)
		+ \Delta_{\delta}^2 \left(2\sigma^4 + 2\sigma^4r/(n-m)\right)^{-1/2}
		\right).
	\end{align}
	Thus, the relative asymptotic power of the proposed LO test and the exact F test is determined by the limiting ratio of $r\inverse\V_{0}\big[ \mathcal{F}-\hat{E}_{\cal F}\big]$ to $2\sigma^4\!\left(1 + r/{(n-m)}\right)$. The Appendix shows that this ratio approaches one in large samples if the number of tested restriction is small relative to the sample size or if the limiting variability of  $B_{ii}/M_{ii}$ is small  \citep[][calls this a balanced design]{kline2018leave}. When neither of these conditions holds, the proposed test will, in general, have a slightly lower power than the exact F test, which we also document in the simulations in Section \ref{sec:sim}.	
\end{rem}

\begin{rem}\label{rem:multi}
	The order of the numerosity of alternatives that can be detected with the proposed test is optimal in the minimax sense when the alternatives are moderately sparse to dense, i.e., when $O(\sqrt{r})$ or more of the tested restrictions are violated \citep{arias2011global}. However, if the alternative is strongly sparse so that at most $o(\sqrt{r})$ tested restrictions are violated, a higher power can be achieved by tests that redirect their power towards those alternatives. Such tests typically focus their attention on a few largest t statistics (i.e., smallest p values) and are often described as multiple comparison procedures \citep{donoho2004higher,romano2010multiple}. While such tests can control size when the error terms are homoskedastic normal, it is not clear whether they can do so in the current semiparametric framework with an unspecified error distribution. The issue is that the size control for multiple comparisons relies on knowing the (normal or t) distributions of individual t statistics, but in the current framework with many regressors those distributions are not necessarily known (even asymptotically).
\end{rem}

\section{If leave-three-out fails}\label{sec:cons}

This section extends the definition of the critical value $c_{\alpha}$ to settings where the design matrix may turn rank deficient after leaving certain pairs or triples of observations out of the sample. When Assumption~\ref{ass:Leave3Out} fails in this way, $\hat E_{\cal F}$ is still an unbiased estimator of $\E_0[{\cal F}]$, but the unbiased variance estimator introduced in Section~\ref{sec:varest} does not exist. For this reason, we propose an adjustment to the variance estimator that introduces a positive bias for pairs of observations where we are unable to construct an unbiased estimator of the variance product $\sigma_i^2 \sigma_j^2$ and for triples of observations where we are unable to construct an unbiased estimator of $\boldsymbol x_j'{\boldsymbol\beta} \,\boldsymbol x_k'{\boldsymbol\beta}\, \sigma_i^2$. This introduction of a positive bias to the variance estimator ensures asymptotic size control, even when Assumption~\ref{ass:Leave3Out} fails.

%Additionally, the section introduce two minor adjustments that are motivated by small sample considerations.

Since this section considers a setup where Assumption~\ref{ass:Leave3Out} may fail, we introduce a weaker version of the assumption, which only imposes the full rank of the design matrix after dropping any one observation.

\begin{customass}{1'}\label{ass:Leave1Out}
	$\sum_{j \neq i}\boldsymbol x_j\boldsymbol x_j'$ is invertible for every $i \in \{1,\dots,n\}$.
\end{customass}

One can always satisfy this assumption by appropriately pruning the sample, the model, and the hypothesis of interest. For example, if $\boldsymbol S_{xx}$ does not have full rank, then one can remove unidentified parameters from both the model and hypothesis of interest, and proceed by testing the subset of restrictions in $H_0$ that are identified by the sample. Similarly, if $\sum_{j \neq i}\boldsymbol x_j\boldsymbol x_j'$ does not have full rank for some observation  $i$, then there is a parameter in the model which is identified only by this observation. Therefore, one can proceed as in the case of rank deficiency of $\boldsymbol S_{xx}$, by dropping observation $i$ from the sample and by removing the parameter that determines the mean of this observation from the model and null hypothesis. When doing this for any observation $i$ such that $\sum_{j \neq i}\boldsymbol x_j\boldsymbol x_j'$ is non-invertible, one obtains a sample that satisfies Assumption~\ref{ass:Leave1Out} and can be used to test the restrictions in $H_0$ that are identified by this \textit{leave-one-out sample}.

\subsection{Variance estimator}\label{sec:adjusted}

When Assumption~\ref{ass:Leave3Out} fails, some of the unbiased estimators $\hat \sigma^2_{i,-jk}$ and $\widehat{\sigma_i^2 \sigma_j^2}$ cease to exist. For such cases, the variance estimator $\hat{V}_{\cal F}$ utilizes replacements that are either also unbiased or positively biased, depending on the \textit{cause} of the failure. Assumption~\ref{ass:Leave3Out} fails if $D_{ijk} =0$ for some triple of observations, and we say that this failure of full rank is \textit{caused by $i$} if $D_{jk}>0$ or $D_{ij}D_{ik}=0$, i.e., if the design retains full rank when only observations $j$ and $k$ are left out or if leaving out observations $(i,j)$ or $(i,k)$ leads to rank deficiency. Our replacement for $\hat \sigma^2_{i,-jk}$ is biased when $i$ causes $D_{ijk} =0$, while the replacement for $\widehat{\sigma_i^2 \sigma_j^2}$ is biased when both $i$ and $j$ cause $D_{ijk} =0$ for some $k$.

To introduce the replacement for $\hat \sigma^2_{i,-jk}$, we consider the case when it does not exist, or equivalently, when $D_{ijk}=0$. If $i$ causes this leave-three-out failure, then our replacement is the upward biased estimator $y_i^2$. When this failure of leave-three-out is not caused by $i$, the leave-two-out estimators $\hat \sigma_{i,-j}^2$ and $\hat \sigma_{i,-k}^2$ are equal and independent of both $y_j$ and $y_k$ (as shown in the Appendix). These properties imply that $y_j y_k \hat \sigma_{i,-j}^2$ is an unbiased estimator of $\boldsymbol x_j'{\boldsymbol\beta}\,\boldsymbol x_k'{\boldsymbol\beta}\, \sigma_i^2$, and we therefore use $\hat \sigma_{i,-j}^2$ as a replacement for $\hat \sigma^2_{i,-jk}$. To summarize, we let
\begin{align}
	\bar \sigma_{i,-jk}^2 =
	\begin{cases}
	\hat \sigma^2_{i,-jk}, & \text{if }  D_{ijk}>0, \\
	\hat \sigma^2_{i,-j}, & \text{if }  D_{jk}=0 \text{ and } D_{ij} D_{ik}>0, \\
	y_i^2, & \text{otherwise} .
	\end{cases}
\end{align}
When $j$ is equal to $k$, we consider pairs of observations, and the definition only involves the last two lines since $D_{ijj}=0$. In this case, we also write $\bar \sigma_{i,-j}^2$ for $\bar \sigma_{i,-jj}^2$.

For the replacement of $\widehat{\sigma_i^2 \sigma_j^2} = y_i \sum_{k \neq j} \check M_{ik,-ij} y_k \cdot \hat \sigma_{j,-ik}^2$, we similarly consider the case where this estimator does not exist, i.e., where $D_{ijk}=0$ for a $k$ not equal to $i$ or $j$. When any such rank deficiency is caused by both $i$ and $j$, we rely on the upward biased replacement $y_i^2 \bar \sigma_{j,-i}^2$. When none of the leave-three-out failures are caused by both $i$ and $j$, the replacement uses $\bar \sigma_{i,-jk}^2$ in place of $\hat \sigma_{i,-jk}^2$. To summarize, we define
\begin{align}
\overline{\sigma_i^2 \sigma_j^2} =
\begin{cases}
y_i \sum_{k \neq j} \check M_{ik,-ij} y_k \cdot \bar \sigma_{j,-ik}^2, & \text{if } D_{ij}>0 \text{ and } (D_{ijk}>0 \text{ or } D_{ik} D_{jk}=0 \text{ for all } k), \\
y_i^2 \bar \sigma_{j,-i}^2, & \text{otherwise}.
\end{cases}
\end{align}

This estimator is unbiased for $\sigma_i^2 \sigma_j^2$ when none of the leave-three-out failures are caused by both $i$ and $j$, i.e., when the first line of the definition applies.  Unbiasedness holds because the presence of a bias in $\bar \sigma_{j,-ik}^2$ implies that $j$ is causing the leave-three-out failure. Therefore, $i$ cannot be the cause, which yields that $\hat \sigma^2_{i,-j}$ is independent of $y_k$, or equivalently, that $\check M_{ik,-ij} =0$.

%To construct a replacement for the unbiased variance product estimator $\widehat{\sigma_i^2 \sigma_j^2}$, we utilize the adjusted leave-three-out estimator introduced above. If $D_{ijk} > 0$ for all $k$, then the unbiased estimator $\widehat{\sigma_i^2 \sigma_j^2}$ exists and no replacement is needed. Therefore, we focus on the case where $D_{ijk} = 0$ for one or more $k$. If none of these leave-three-out failures stems from dropping both $i$'s and $j$'s observations, in the sense that $D_{ik}=0$ or $D_{jk}=0$ and still $D_{ij}>0$, then we can use $\bar \sigma_{i,-jk}^2$ in place of $\hat \sigma_{i,-jk}^2$ in the definition of the variance product estimator and retain unbiasedness. Here unbiasedness stems from observing that $D_{ik}=0$ or $D_{jk}=0$ implies that $\hat \sigma_{i,-j}^2$ or $\hat \sigma_{j,-i}^2$ is independent of $y_k$. If instead dropping both $i$'s and $j$'s observations cause some leave-three-out failures, then we rely on an upward biased estimator $y_i^2 \bar \sigma_{j,-i}^2$. To summarize, we define
%\begin{align}
%\overline{\sigma_i^2 \sigma_j^2} =
%\begin{cases}
%y_i \sum_{k \neq j} \check M_{ik,-ij} y_k \cdot \bar \sigma_{j,-ik}^2, & \text{if } D_{ijk}>0 \text{ or } (D_{ij}>0 \text{ and } D_{ik} D_{jk}=0) \text{ for all } k, \\
%y_i^2 \bar \sigma_{j,-i}^2, & \text{otherwise}.
%\end{cases}
%\end{align}

Now, we describe how these replacement estimators enter the variance estimator $\hat{V}_{\cal F}$. When $\overline{\sigma_i^2 \sigma_j^2}$ or $\bar \sigma_{i,-jk}^2$ are biased and would enter the variance estimator with a negative weight, we remove these terms, as they would otherwise introduce a negative bias. For $\overline{\sigma_i^2 \sigma_j^2}$, the weight is $U_{ij} - V_{ij}^2$, so a biased variance product estimator is removed when $U_{ij} - V_{ij}^2 < 0$. For $\bar \sigma_{i,-jk}^2$, the weight is $V_{ij}  y_j \cdot V_{ik}  y_k$, but $\bar \sigma_{i,-jk}^2$ does not depend on $j$ and $k$ when it is biased, so we sum these weights across all such $j$ and $k$, and we remove the term if this sum is negative.

%For $\bar \sigma_{i,-jk}^2$, the weight is $V_{ij}  y_j \cdot V_{ik}  y_k$. Since this weight is random, we sum these weights over all those $j$ and $k$ for which $\bar \sigma_{i,-jk}^2$ is biased, and drop the term if the sum is negative.

The following variance estimator extends the definition of $\hat{V}_{\cal F}$ to settings where leave-three-out may fail:
\begin{align}
	\hat{V}_{\cal F}\,
	&= \sum_{i=1}^n \sum_{j \neq i} \big( U_{ij} - V_{ij}^2\big) \cdot G_{ij} \cdot \overline{\sigma_i^2 \sigma_j^2}
	+ \sum_{i=1}^n \sum_{j\neq i} \sum_{k\neq i}  V_{ij}  y_j \cdot V_{ik}  y_k \cdot G_{i,-jk} \cdot \bar \sigma_{i,-jk}^2,
\end{align}
where the indicators $G_{ij}$ and $G_{i,-jk}$ remove biased estimators with negative weights:
\begin{align}
G_{ij} &= \begin{cases}
	0,	& \text{if } \overline{\sigma_i^2 \sigma_j^2} = y_i^2 \bar \sigma_{j,-i}^2 \text{ and } U_{ij} - V_{ij}^2 <0, \\
	1, & \text{otherwise},
\end{cases} \\
G_{i,-jk} &=\begin{cases}
	0, & \text{if } \bar \sigma_{i,-jk}^2 = y_i^2 \text{ and } \sum_{j \neq i} \sum_{k \neq i} V_{ij}  y_j \cdot V_{ik}  y_k \cdot \mathbf{1}\!\left\{ \bar \sigma_{i,-jk}^2 = y_i^2  \right\}< 0,  \\
	1, & \text{otherwise}.
\end{cases}
\end{align}

\subsection{Asymptotic size}\label{sec:adjsize}

In order to establish that the proposed test controls asymptotic size when there are some failures of leave-three-out, we replace the regularity condition in Assumption~\ref{ass:asympLeave3Out} with an analogous version that allows for some of the determinants $D_{ij}$ and $D_{ijk}$ to be zero. Otherwise, the role of Assumption~\ref{ass:asympLeave1Out} below is the same as Assumption~\ref{ass:asympLeave3Out} in that it rules out denominators that are arbitrarily close to zero.

\begin{customass}{3'}\label{ass:asympLeave1Out}
	(i) $\max\limits_{1 \le i \le n} M_{ii}\inverse = O_p(1)$, and (ii) $\max\limits_{i,j : D_{ij} \neq 0}  D_{ij}\inverse + \max\limits_{i,j,k : D_{ijk} \neq 0}  D_{ijk}\inverse = O_p(1)$.
\end{customass}

When computing $\hat{V}_{\cal F}$, one must account for machine zero imperfections while comparing $D_{ij}$ and $D_{ijk}$ with zero in the definitions of $\bar \sigma_{i,-jk}^2$ and $\overline{\sigma_i^2 \sigma_j^2}$. Such imperfections are typically of order $10^{-15}$; however, we propose to compare $D_{ij}$ to $10^{-4}$ and $D_{ijk}$ to $10^{-6}$. Doing so will replace any potential case of a small denominator with an upward biased alternative and ensures that Assumption~\ref{ass:asympLeave1Out}$(ii)$ is automatically satisfied.

The following theorem establishes the asymptotic validity of the proposed leave-out test in settings where Assumption~\ref{ass:Leave3Out} fails. The theorem pertains to a nominal size below $0.31$, as the upward biased variance estimator may not ensure validity in cases where a nominal size above $0.31$ is desired. This happens because the quantile $q_{1-\alpha}(\bar F_{\hat{\boldsymbol w},n-m})$ may fall below $1$ when $\alpha$ is greater than $0.31$.

\begin{theorem}\label{thm:sizeLeave1Out}
	If $\alpha \in (0,0.31]$ and Assumptions \ref{ass:Leave1Out}, \ref{ass:iid}, \ref{ass:reg}, and \ref{ass:asympLeave1Out} hold, then, under $H_0$,
	\begin{align}
	\limsup_{n \rightarrow \infty}\, \Pr\left( F >  \hat c_\alpha \right) \le \alpha.
	\end{align}
\end{theorem}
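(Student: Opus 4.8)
The plan is to adapt the argument behind Theorem \ref{thm:size} while tracking the effect of the upward bias introduced into $\hat{V}_{\cal F}$. The rejection event $\{F > c_\alpha\}$ is equivalent to
\begin{align}
\hat{V}_{\cal F}^{-1/2}\big({\cal F} - \hat{E}_{\cal F}\big) > \frac{q_{1-\alpha}(\bar F_{\hat{\boldsymbol w},n-m}) - 1}{\sqrt{2\sum_{\ell=1}^r \hat w_\ell^2 + 2/(n-m)}},
\end{align}
so the first step is to establish, exactly as under Assumption~\ref{ass:Leave3Out}, that the \emph{infeasible} statistic $\V_0[{\cal F}-\hat{E}_{\cal F}]^{-1/2}({\cal F}-\hat{E}_{\cal F})$ is asymptotically standard normal under $H_0$. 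Because $\hat{E}_{\cal F}$ does not change between the two regimes (it requires only leave-one-out estimation, which exists under Assumption~\ref{ass:Leave1Out}), this central limit step is inherited verbatim; it uses only the Lindeberg condition in Assumption~\ref{ass:reg}$(iii)$ together with the tail bound in $(i)$ and the leverage control $\max_i M_{ii}^{-1}=O_p(1)$ from Assumption~\ref{ass:asympLeave1Out}$(i)$. The right-hand side, meanwhile, converges to $q_\tau(\Phi)=\Phi^{-1}(1-\alpha)$ by the F-bar-to-normal limit in \eqref{eq:normlim}.

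The crux that distinguishes this theorem from Theorem~\ref{thm:size} is the replacement of the equality $\hat{V}_{\cal F}/\V_0[{\cal F}-\hat{E}_{\cal F}]\to_p 1$ by the one-sided statement
\begin{align}
\Pr\Big( \hat{V}_{\cal F} \ge \V_0\big[{\cal F}-\hat{E}_{\cal F}\big](1-o_p(1)) \Big) \to 1 .
\end{align}
The plan is to decompose $\hat{V}_{\cal F}$ into the pairs and triples where the unbiased estimators survive and those where the biased replacements $y_i^2$, $\hat\sigma^2_{i,-j}$, $y_i^2\bar\sigma^2_{j,-i}$ are used. For the surviving terms I would reuse the unbiasedness and concentration arguments from the proof of Theorem~\ref{thm:size}, controlled again by Assumption~\ref{ass:asympLeave1Out}$(ii)$, which bounds the nonzero $D_{ij}^{-1}$ and $D_{ijk}^{-1}$ away from zero. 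For the replacement terms I must verify two things: first, that the \emph{expected} bias of each retained replacement is nonnegative, so that $\E_0[\hat{V}_{\cal F}] \ge \V_0[{\cal F}-\hat{E}_{\cal F}]$; second, that the indicators $G_{ij}$ and $G_{i,-jk}$ do their job of deleting exactly those biased terms that enter with a negative weight ($U_{ij}-V_{ij}^2<0$, or a negative aggregated weight $\sum_{j,k} V_{ij}y_j V_{ik} y_k$), so that no deletion or retention can flip the sign of the net bias. Establishing the nonnegativity of the bias contributed by $y_i^2$ in place of $\hat\sigma^2_{i,-jk}$ (and of $y_i^2\bar\sigma^2_{j,-i}$ in place of $\widehat{\sigma_i^2\sigma_j^2}$) is where I expect the algebra to be most delicate, since one must show $\E_0[y_i^2] = \sigma_i^2 + (\boldsymbol x_i'\boldsymbol\beta)^2 \ge \sigma_i^2$ propagates correctly through the weighted sums and is not overwhelmed by the variance of the estimator; the moment control from Assumption~\ref{ass:reg}$(i)$–$(ii)$ on $(\boldsymbol x_i'\boldsymbol\beta)^2$ enters here.

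Combining these pieces, under $H_0$ the feasible statistic satisfies $\hat{V}_{\cal F}^{-1/2}({\cal F}-\hat{E}_{\cal F}) \le \V_0[{\cal F}-\hat{E}_{\cal F}]^{-1/2}({\cal F}-\hat{E}_{\cal F})(1+o_p(1))$ with probability tending to one, so the rejection probability is asymptotically bounded above by $\Pr(\mathcal{N}(0,1) > \Phi^{-1}(1-\alpha)) = \alpha$. The final step is to note where the restriction $\alpha \le 0.31$ is used: the one-sided inequality only yields a valid upper bound on size when the right-hand threshold $q_{1-\alpha}(\bar F_{\hat{\boldsymbol w},n-m}) - 1$ is nonnegative, i.e.\ when the quantile of the F-bar distribution exceeds its mean of $1$; since an upward-biased $\hat{V}_{\cal F}$ only shrinks the feasible statistic when that threshold is positive, I would verify that $q_{1-\alpha}(\bar F_{\hat{\boldsymbol w},n-m}) \ge 1$ for all admissible weight vectors precisely when $\alpha \le 0.31$, which ties the stated numerical bound to the point where the $(1-\alpha)$-quantile of the F-bar family can dip below unity. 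The main obstacle, as flagged, is the careful bias-sign bookkeeping for the deleted and retained replacement terms; the central limit theorem and the quantile limit are direct carry-overs from the proof of Theorem~\ref{thm:size}.
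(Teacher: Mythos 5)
Your proposal follows essentially the same route as the paper's proof: carry over the asymptotic normality of the infeasible statistic, show that the modified $\hat V_{\cal F}$ is a non-negatively biased estimator of ${\V}_0\big[{\cal F}-\hat E_{\cal F}\big]$ that concentrates around its expectation (splitting pairs and triples into those with surviving unbiased estimators and those with upward-biased replacements, exactly the paper's $A_1+A_2+B_1+B_2$ decomposition), and tie the restriction $\alpha\le 0.31$ to the requirement $q_{1-\alpha}(\bar F_{\hat{\boldsymbol w},n-m})\ge 1$, which the paper establishes via concavity of the $\chi^2_1$ distribution function, giving $\Pr\big(\sum_{\ell}w_\ell Z_\ell\le 1\big)<0.69$ for any weight vector. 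The one ingredient your plan does not anticipate is that the concentration of the replacement term $\sum_{i}\big(\sum_{j\neq i}\sum_{k\neq i}H_{i,jk}V_{ij}y_jV_{ik}y_k\big)_+y_i^2$ cannot be handled by the quartic-moment variance calculations recycled from Theorem~\ref{thm:size}, because of the positive-part nonlinearity; the paper resorts to an Efron--Stein-type inequality for that single piece.
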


An important difference between this result and that of Theorem~\ref{thm:size} is that the asymptotic size may be smaller than desired, which can happen when leave-three-out fails for a large fraction of possible triples. When such conservatism materializes,
%If the positive bias in the variance estimator $\hat{V}_{\cal F}$ leads the asymptotic size to fall below the nominal size,
there will be a corresponding loss in power relative to the result in Theorem~\ref{thm:power}. Otherwise, the power properties are analogous to those reported in Theorem~\ref{thm:power} and we therefore omit a formal result. %Instead, Section \ref{sec:sim} investigates the extent of this power loss numerically.

\begin{rem}\label{rem:invariance}
	Before turning to a study of the finite sample performance of the proposed test, we describe an adjustment to the test which is based on finite sample considerations. This adjustment is to rely on demeaned outcome variables in the definitions of $\hat E_{\cal F}$, $	\hat V_{\cal F}$, and $\hat{\boldsymbol w}$.
  The benefit of relying on demeaned outcomes is that it makes the critical value invariant to the location of the outcomes. On the other hand, this adjustment removes the exact unbiasedness used to motivate the estimators of ${\E}_0[ {\cal F} ]$ and ${\V}_{0}\big[\mathcal{F} - \hat{E}_{\cal F}\big]$. However, one can show that the biases introduced by demeaning %are of higher order, i.e., they
	vanish at a rate that ensures asymptotic validity. Therefore, we deem the gained location invariance sufficiently desirable that we are willing to introduce a small finite sample bias to achieve it. We refer to the Appendix for exact mathematical details but note that this adjustment is used {in the simulations that follow; we also probe the version that uses non-demeaned outcomes (see the end of Section~\ref{sec:sim}).}
\end{rem}

\section{Simulation evidence}\label{sec:sim}

{This section documents finite sample performance of the leave-out test and compares it with that of benchmark tests that could be used by a researcher in the present context:}

\begin{enumerate}
{

\item The proposed leave-out test, which will be marked as LO in the resulting tables.

\item The exact F test, marked as EF, which uses critical values from the F distribution to reject when $F > q_{1-\alpha}(F_{r,n-m})$. This test has actual size equal to nominal size in finite samples under conditionally homoskedastic normal errors for any number of regressors and restrictions. It is also asymptotically valid with conditional homoskedasticity and non-normality under certain regressor homogeneity conditions \citep{anatolyev2012inference}, but not under general regressor designs \citep{calhoun2011many}.

%\item The Wald test with the Eicker-White variance estimator accompanied with the $HC1$ correction that scales the variance estimate by $n/(n-m)$ \citep[see][]{mackinnon2012hetero}. This test is valid with few regressors and few restrictions under arbitrary heteroskedasticity.
%

\item Three Wald tests that reject when a heteroskedasticity-robust Wald statistic exceeds the $(1-\alpha)$-th quantile of a $\chi^2_r$ distribution, i.e., when $W > q_{1-\alpha}(\chi^2_r)$ for
\begin{align}
 		W = \big( \boldsymbol R\hat {\boldsymbol\beta}-\boldsymbol q \big)'\!\left( \boldsymbol R\boldsymbol S_{xx}\inverse \left(\sum\nolimits_{i=1}^n \boldsymbol x_i \boldsymbol x_i' \tilde \sigma_i^2\right) \boldsymbol S_{xx}\inverse \boldsymbol R' \right)\inverse \!\big(\boldsymbol R\hat {\boldsymbol\beta} - \boldsymbol q \big).
\end{align}
The three Wald tests differ only by how one constructs variance estimates $\{\tilde \sigma_i^2\}_{i=1}^n$, and are only palliatives:
\begin{enumerate}
 \item $W_\text{1}$ most closely corresponds to the original Wald test, but with the degrees-of-freedom adjustment \citep{mackinnon2012hetero}: $\tilde \sigma_i^2 = (y_i - \boldsymbol x_i' \hat{\boldsymbol \beta})^2 n/(n-m)$,
\item $W_\text{K}$ uses variance estimates of \cite{cattaneo2017inference},
\item $W_\text{L}$ uses leave-one-out estimates $\tilde \sigma_i^2 = \hat \sigma_i^2$ as in \eqref{eq:LOvar}.
\end{enumerate}
 Asymptotically, the Wald tests $W_\text{L}$ and $W_\text{K}$ are valid with many regressors under arbitrary heteroskedasticity but not necessarily with many restrictions, while $W_\text{1}$ is valid only with few regressors and few restrictions under arbitrary heteroskedasticity.\footnote{{That the baseline version $W_\text{1}$ is invalid with many restrictions was noticed empirically in \cite{berndt1977conflict} and shown in \cite{anatolyev2012inference} under homoskedasticity; one can hardly expect that such measures as simply altering estimation of individual variances is able to solve the matters in a more complex heteroskedastic situation.}} By their comparison with the LO test one can see Wald test's potential to control size, and how much distortions are due to its wrong structure when restrictions are many.

\item The test based on the split-sample idea of \cite{kline2018leave} is not going to be available for our simulations, because it requires regressor numerosity to be at most a half of the sample size, which is not satisfied in the simulation design.}
\end{enumerate}

\subsection{Simulation design}

The simulation setup borrows elements of \cite{mackinnon2012hetero} and adapts it to the case of many regressors as in \cite{richard2019manyboot} but with richer heterogeneity in the design. The outcome equation is
\begin{equation*}
y_{i}=\beta _{1}+\sum_{k=2}^{m}\beta _{k}x_{ik}+\varepsilon _{i},\quad i=1,\dots,n,
\end{equation*}
where data is drawn \emph{i.i.d.} across $i$. Following \cite{mackinnon2012hetero}, the sample sizes take the values $80,\ 160,\ 320,\ 640,$ and $1280$. The number of unknown coefficients is $m=0.8n$ throughout to demonstrate the validity of the proposed test even with very many regressors.  The null restricts the values of the last $r$ coefficients using $\boldsymbol R=\left[ \boldsymbol 0_{r\times ( m-r) },\boldsymbol I_{r}\right]$. We consider both a design that contains only continuous regressors and a \textit{mixed} one that also includes some discrete regressors.

In the continuous design, the regressors  $x_{i2}, \dots, x_{im}$ are products of independent standard log-normal random variables and a common multiplicative mean-unity factor drawn independently from a shifted standard uniform distribution, i.e., $0.5+u_{i}$ where $u_{i}$ is standard uniform. This common factor induces dependence among the regressors and rich heterogeneity in the statistical leverages of individual observations. For this design, we consider $r=3$ and $r=0.6n$.

When also including discrete regressors, we let $x_{i2}, \dots, x_{i,m-r}$ be as above and let the last $r$ regressors be group dummies. This mixed design corresponds to random assignment into $r+1$ groups with the last group effect removed due to the presence of an intercept in the model. The assigned group number is the integer ceiling of $(r+1)(u_i + u_i^2)/2$, where $u_i$ is the multiplicative factor used to generate dependence among the continuous regressors. By reusing $u_i$ we maintain dependence between all regressors, and by using a nonlinear transformation of $u_i$ we induce systematic variability among the $r+1$ expected group sizes. We let $r = 0.15 n$, which leads the expected group sizes to vary between $4$ and $13$ with an average group size of about $6.5$. The null corresponds to a hypothesis of equality of means across all groups.

Each regression error is a product of a standard normal random variable and an individual specific standard deviation $\sigma_i$. The standard deviation is generated by
\begin{align}
\sigma _{i}=z_{\zeta }\left( 1+s_{i}\right) ^{\zeta },\quad i=1,\dots,n,
\end{align}
where $s_{i}>0$ depends on the design and the multiplier $z_{\zeta }$ is such that the mean of $\sigma_i^2$ is unity. The parameter $\zeta \in \left[ 0,2\right] $ indexes the strength of heteroskedasticity, with $\zeta=0$ corresponding to homoskedasticity. We consider only the two extreme cases of $\zeta \in \left \{ 0,2\right \} $. In the continuous design, we let $s_i = \sum_{k=2}^{m}x_{ik}$, and in the mixed design, $s_i = \sum_{k=2}^{m-r}x_{ik} + z_u u_i$. The factor $z_u = 2r \exp(1/2)$ ensures that $s_i$ has the same mean in both designs.

Under the null, the coefficients on the continuous regressors are all equal to $\varrho$, where $\varrho$ is such that the coefficient of determination, $\mathrm{R}^{2}$, equals $0.16$. The coefficients on the included group dummies are zero, which correspond to the null of equality across all groups. The intercept is chosen such that the mean of the outcomes is unity. For the continuous design this yields an intercept of $1-(m-1)\varrho \exp(1/2)$, while the intercept is $1-(m-r-1)\varrho \exp(1/2)$ in the mixed design. With these parameter values, the null is $(\beta_{m-r},\dots,\beta_r)' =\boldsymbol q$, where $\boldsymbol q=(\varrho,\dots,\varrho)'\in \R^r$ in the continuous design, and $\boldsymbol q = (0,\dots,0)' \in \R^r$ in the mixed design.

To document power properties, we consider both a sparse and dense deviations from the null, and focus on the settings where $r$ is proportional to $n$. In parallel to the theoretical power analysis in Section \ref{sec:asymptotia}, we consider deviations for the last $r$ coefficients that are parameterized using
\begin{align}
	(\beta_{m-r},\dots,\beta_m)' =\boldsymbol q + \left(\boldsymbol R \E[\boldsymbol S_{xx}]\inverse \boldsymbol R'\right)^{1/2} \boldsymbol\delta , \label{eq:dev}
\end{align}
where we use the lower triangular square-root matrix. This choice of square-root implies that the alternative is sparse when only the last few entries of $\boldsymbol\delta$ are non-zero. As shown in Section~\ref{sec:asymptotia}, asymptotic power is governed by the norm of $\boldsymbol\delta$ over $r^{1/4}$, but whether an alternative is fixed or local, additionally depends on the rate at which the tested coefficients are estimated. This rate is governed by $\E\boldsymbol [\boldsymbol S_{xx}]$, which is reported in the Appendix.

In the continuous design, the tested coefficients are estimated at the standard parametric rate of $n^{-1/2}$. To specify a fixed sparse alternative we therefore use ${\boldsymbol\delta} = 0.5n^{1/2} (0,\dots,0,1)' \in \R^{r}$, for which $\beta_m$ differs from the null value by approximately $0.2$ (here and hereafter, the scaling is chosen so that the power is bounded away from the size and away from unity for the sample sizes we consider). Since the norm of $\boldsymbol\delta$ grows faster than $r^{1/4}$, the power will be an increasing function of the sample size. For the dense alternative, we consider instead ${\boldsymbol\delta} = 0.5 n^{1/2} r^{-1/2} {\boldsymbol\iota}_r$ where ${\boldsymbol\iota}_r = (1,\dots,1)' \in \R^{r}$, for which all deviations between the tested coefficients and $\varrho$ shrink at the standard parametric rate of $n^{-1/2}$. Here, power is again increasing in the sample size due to numerous deviations from the null.

In the mixed design, the group effects are not estimated consistently as the group sizes are bounded. A possible fixed sparse alternative is then ${\boldsymbol\delta} = (0,\dots,0,6)' \in \R^{r}$, for which $\beta_m$ differs from the null value of zero by roughly $3$. In contrast to the continuous design, the power will decrease with sample size as the precision, with which $\beta_m$ can be estimated, does not increase with $n$. For the dense alternative, we use ${\boldsymbol\delta} = 1.5 {\boldsymbol\iota}_r$, which corresponds to a fixed alternative for \textit{every} tested coefficient. Here, the power will be increasing in $n$ due to the numerosity of deviations.

\subsection{Simulation results}

We present rejection rates based on 10000 Monte-Carlo replications and consider tests with nominal sizes of $1\%,$ $5\%$ and $10\%$. Furthermore, we report the frequency with which the proposed variance estimate $\hat{V}_{\cal F}$ is negative and therefore replaced by the upward biased and positive alternative introduced in Remark~\ref{rem:neg}. For the design that includes discrete regressors, we also report the average fraction of observations that cause a failure of leave-three-out full rank, and for which we therefore rely on an upward biased estimator of the corresponding error variance. For all sample sizes, this fraction is around $7\%$ in the mixed design, which corresponds to the percentage of observations that belong to groups of size 2 or 3. The fraction is zero in the design that only involves continuous regressors.

\begin{table}[hbtp]
	\centering
	{\footnotesize %small
		\caption{Empirical size (in percent)}
		\label{tab:Size}
		\begin{threeparttable}
			\begin{widetable}{.98\columnwidth}{llcrrrrrcrrrrrcrrrrrcr}
				\toprule
				\multicolumn{2}{l}{Nominal size}
				&& \multicolumn{5}{c}{$1\%$}	
				&& \multicolumn{5}{c}{$5\%$}
				&& \multicolumn{5}{c}{$10\%$}  \\
				\cmidrule(lr){4-8} \cmidrule(lr){10-14} \cmidrule(lr){16-20}
				\multicolumn{2}{l}{Test} &
				&LO&EF&$\text{W}_\text{1}$&$\text{W}_\text{K}$&$\text{W}_\text{L}$
				&&LO&EF&$\text{W}_\text{1}$&$\text{W}_\text{K}$&$\text{W}_\text{L}$
				&&LO&EF&$\text{W}_\text{1}$&$\text{W}_\text{K}$&$\text{W}_\text{L}$&
				& \textsc{neg} \\
				\midrule
				&&& \multicolumn{17}{c}{{Homoskedasticity}} \\
				\cmidrule(lr){1-22}
				\multicolumn{2}{l}{Continuous design} \\
				\cmidrule(lr){1-2}
				$n=80$    &$r=3$&& $2$ & $1$ & $4$ & $14$& $11$&&    $7$ & $5$ &$10$ & $19$ & $18$    && $12$   & $10$ & $15$ & $23$ & $23$ && $8.7$ \\
				$n=160$  &$r=3$ && $1$ & $1$ & $2$& $14$ & $7$ &&    $6$ & $5$ & $6$ & $19$ & $14$    && $12$ & $10$ & $10$ & $23$ & $20$ && $2.0$ \\
				$n=320$  &$r=3$ && $1$ & $1$ & $1$& $15$ & $4$ &&   $6$ & $5$ & $4$ & $22$ & $10$    && $11$ & $10$ & $8$ & $27$ & $15$ && $0.6$ \\
				$n=640$  &$r=3$ && $1$ & $1$ &$1$ & $15$ & $2$ &&    $6$ & $5$ & $3$ & $23$ & $8$      && $11$ & $10$ &$7$& $29$ & $13$ && $0.1$ \\
				$n=1280$&$r=3$  && $1$ & $1$ &$1$  & $9$ & $2$  &&   $6$ & $5$ & $3$ & $17$ & $7$         && $11$ & $10$ &$7$& $24$ & $12$ && $0.0$ \\
				\cmidrule(lr){1-2}
				$n=80$    &$r=48$  && $1$ & $1$ & $99$ & $54$& $16$&&    $3$ & $5$ &$100$ & $54$ & $18$    && $7$   & $10$ & $100$ & $54$ & $19$ && $19.8$ \\
				$n=160$  &$r=96$  && $2$ & $1$ & $100$ & $54$& $20$&&    $5$ & $5$ &$100$ & $54$ & $21$   && $10$   & $10$ & $100$ & $54$ & $22$ && $6.4$ \\
				$n=320$  &$r=192$&& $2$ & $1$ & $100$ & $53$& $20$&&    $6$ & $5$ &$100$ & $54$ & $21$    && $11$   & $10$ & $100$ & $54$ & $21$ && $1.5$ \\
				$n=640$  &$r=384$&& $1$ & $1$ & $100$ & $54$& $22$&&    $6$ & $5$ &$100$ & $54$ & $23$    && $11$   & $10$ & $100$ & $54$ & $23$ && $0.3$ \\
				$n=1280$&$r=768$&& $1$ & $1$ & $100$ & $53$& $22$&&    $5$ & $5$ &$100$ & $53$ & $23$    && $11$   & $10$ & $100$ & $53$ & $23$ && $0.0$ \\
				\cmidrule(lr){1-2}
				\multicolumn{2}{l}{Mixed design} \\
				\cmidrule(lr){1-2}
				$n=80$    & $r=12$ && $2$ & $1$ & $21$ & $25$& $19$&&    $7$ & $5$ &$33$ & $29$ & $23$    && $12$   & $10$ & $41$ & $31$ & $26$ && $4.8$ \\
				$n=160$  & $r=24$ && $1$ & $1$ & $25$ & $27$& $23$&&    $6$ & $5$ &$38$ & $30$ & $27$    && $12$   & $10$ & $47$ & $31$ & $29$ && $0.4$ \\
				$n=320$  & $r=48$ && $1$ & $1$ & $33$ & $29$& $29$&&    $5$ & $5$ &$49$ & $31$ & $32$    && $11$   & $10$ & $58$ & $32$ & $34$ && $0.1$ \\
				$n=640$  & $r=96$ && $1$ & $1$ & $47$ & $31$& $32$&&    $5$ & $5$ &$64$ & $33$ & $35$    && $11$   & $10$ & $72$ & $33$ & $36$ && $0.0$ \\
				$n=1280$&$r=192$ && $1$ & $1$ & $69$  & $31$ & $35$  &&   $5$ & $5$ & $82$ & $33$ & $37$         && $10$ & $10$ &$87$& $33$ & $38$ && $  0.0$ \\
				\midrule
				&&& \multicolumn{17}{c}{{Heteroskedasticity}} \\
				\cmidrule(lr){1-22}
				\multicolumn{2}{l}{Continuous design} \\
				\cmidrule(lr){1-2}
				$n=80$    &$r=3$&& $2$ & $3$ & $7$ & $17$& $10$&&    $6$ & $10$ &$15$ & $22$ & $16$    && $11$   & $17$ & $22$ & $26$ & $21$ && $10.3$ \\
				$n=160$  &$r=3$ && $2$ & $2$ & $4$& $14$ & $7$ &&    $6$ & $8$ & $10$ & $20$ & $14$    && $12$ & $15$ & $16$ & $24$ & $19$ && $2.3$ \\
				$n=320$  &$r=3$ && $1$ & $2$ & $2$& $16$ & $4$ &&    $6$ & $8$ & $8$ & $23$ & $10$    && $12$ & $15$ & $14$ & $28$ & $16$ && $0.6$ \\
				$n=640$  &$r=3$ && $1$ & $2$ &$2$ & $15$ & $2$ &&    $6$ & $8$ & $7$ & $23$ & $8$      && $11$ & $14$ &$12$& $29$ & $13$ && $0.3$ \\
				$n=1280$&$r=3$  && $1$ & $2$ &$1$  & $9$ & $2$  &&    $5$ & $8$ & $6$ & $17$ & $6$         && $11$ & $15$ &$12$& $24$ & $12$ && $0.0$ \\
				\cmidrule(lr){1-2}
				$n=80$    &$r=48$  && $1$ & $22$ & $100$ & $52$& $13$&&    $5$ & $47$ &$100$ & $52$ & $15$    && $9$   & $61$ & $100$ & $52$ & $16$ && $12.8$ \\
				$n=160$  &$r=96$  && $1$ & $32$ & $100$ & $50$& $15$&&    $5$ & $61$ &$100$ & $50$ & $16$   && $11$   & $75$ & $100$ & $51$ & $17$ && $3.7$ \\
				$n=320$  &$r=192$&& $1$ & $56$ & $100$ & $49$& $17$&&    $6$ & $81$ &$100$ & $49$ & $18$    && $11$   & $90$ & $100$ & $49$ & $19$ && $0.9$ \\
				$n=640$  &$r=384$&& $1$ & $86$ & $100$ & $49$& $18$&&    $5$ & $96$ &$100$ & $49$ & $19$    && $11$   & $99$ & $100$ & $49$ & $20$ && $0.1$ \\
				$n=1280$&$r=768$&& $1$ & $99$ & $100$  & $48$ & $19$  &&   $5$ & $100$ & $100$ & $48$ & $19$         && $11$ & $100$ &$100$& $48$ & $20$ && $0.0$ \\
				\cmidrule(lr){1-2}
				\multicolumn{2}{l}{Mixed design} \\
				\cmidrule(lr){1-2}
				$n=80$    & $r=12$ && $1$ & $5$ & $38$ & $27$& $14$&&    $6$ & $17$ &$51$ & $30$ & $18$    && $12$   & $26$ & $58$ & $32$ & $21$ && $4.8$ \\
				$n=160$  & $r=24$ && $1$ & $9$ & $49$ & $28$& $19$&&    $6$ & $24$ &$63$ & $30$ & $23$    && $13$   & $35$ & $71$ & $31$ & $25$ && $0.5$ \\
				$n=320$  & $r=48$ && $1$ & $14$ & $67$ & $29$& $22$&&    $5$ & $33$ &$80$ & $31$ & $25$    && $11$   & $46$ & $85$ & $32$ & $27$ && $0.1$ \\
				$n=640$  & $r=96$ && $1$ & $28$ & $89$ & $31$& $26$&&    $5$ & $52$ &$95$ & $32$ & $29$    && $11$   & $65$ & $97$ & $33$ & $30$ && $0.0$ \\
				$n=1280$&$r=192$ && $1$ & $52$ & $99$   & $33$ & $30$  &&   $5$ & $75$ & $100$ & $34$ & $31$         && $10$ & $84$ &$100$& $34$ & $32$ && $0.0$ \\
				\bottomrule
			\end{widetable}
			\begin{tablenotes}
				\scriptsize %\footnotesize
				\item \textsc{NOTE}:  LO: leave-out test, EF: exact F test, $\text{W}_\text{1}$: heteroskedastic Wald test with degrees-of-freedom correction, $\text{W}_\text{K}$: heteroskedastic Wald test with \cite{cattaneo2017inference} correction, $\text{W}_\text{L}$: heteroskedastic Wald test with \cite{kline2018leave} correction; \textsc{neg}: fraction of negative variance estimates for LO (in percent). Results from 10000 Monte-Carlo replications.
			\end{tablenotes}
		\end{threeparttable}
	}
\end{table}

Table~\ref{tab:Size} contains the actual rejection rates under the null for both the continuous and mixed designs. In settings with many regressors and restrictions, the considered versions of the  ``heteroskedasticity-robust" Wald test fail to control size irrespective of the design, presence of heteroskedasticity, and nominal size. The failure of the conventional Wald test, $W_1$, is spectacular, with type I error rates close to one for the continuous design, but the two versions that are robust to many regressors, $W_K$ and $W_L$, also exhibit size well above the nominal level. With few restrictions, the Wald tests show a more moderate inability to match actual size with nominal size, and the table suggests that the leave-one-out version, $W_L$, can control size in samples that are somewhat larger than considered here. Under homoskedasticity, the table reports that the exact F test indeed has exact size. However, in the heteroskedastic environments with many restrictions the exact F test is oversized with a type I error rate that approaches unity as the sample size increases.

By contrast, the proposed leave-out test exhibits nearly flawless size control as it is oversized by at most one percent across nearly all designs, nominal sizes, and whether heteroskedasticity is present or not. In the smallest sample for the continuous design, the test is somewhat conservative, presumably due to the relatively high rate of negative variance estimates ($20\%$ with homoskedasticity and $13\%$ with heteroskedasticty) that are replaced by a strongly upward biased alternative. This rate diminishes quickly with sample size, and the fraction of negative variance estimates is already essentially zero in samples with 640 observations and 512 regressors. In the mixed design, negative variance estimates are even less prevalent, potentially due to the fact that the test uses some upward biased variance estimators for $7\%$ of observations. Perhaps somewhat surprisingly, having $7\%$ of observations causing failure of leave-three-out is not sufficient to bring about any discernible conservativeness in the leave-out test for this design.

%For a design with continuous regressors (see Table 1) and under homoskedasticity, when the rejection rates of the EF test are exact, the LO test exhibits a good size control and just tiny size distortions except when the sample size is extremely small, in which case the LO test is a bit conservative due to a relatively high (up to 20\%) rate of adjustment. This rate very quickly diminishes with the sample size and the LO test is practically flawless for samples about a hundred observations. The same can be said about its behavior in a strongly heteroskedastic situations, with, perhaps surprisingly, an even smaller rate of adjustment.
%
%When some regressors are discrete (see Table 2), the same tendencies prevail for all tests, though the contrast is less sharp because the restrictions are not as numerous as before. The LO rejections rates though are a bit less stable for smaller samples, and it requires a larger sample for the LO test not to be oversized compared to the case of only continuous regressors; however, a thousand observations seem sufficient. There is no evidence of conservativeness despite the noticeable rate of adjustment (15\%-18\%) because of non-existence of some leave-out estimates.

Table~\ref{tab:Power} contains simulated rejection rates for the continuous and mixed designs under alternatives where the parameters deviate from their null values in one of two ways -- either one tested coefficient deviates (sparse) or all tested coefficients deviate (dense). The table reports these power figures for tests with a nominal size of $5\%$ and $10\%$ that also control the size well, i.e., the LO and exact F tests under homoskedasticity and the LO test under heteroskedasticity.

\begin{table}[tb!]
	\centering
	{\footnotesize %\small
		\caption{Empirical power (in percent) corresponding to 5\% and 10\% size}
		\label{tab:Power}
		\begin{threeparttable}
			\begin{widetable}{.98\columnwidth}{llcrrrrrrrrrrrr}
				\toprule
				&&& \multicolumn{8}{c}{{Homoskedasticity}}
				& \multicolumn{4}{c}{{Heteroskedasticity}} \\
				\cmidrule(lr){4-11} \cmidrule(lr){12-15}
				Deviation &  &
				&  \multicolumn{4}{c}{Sparse} & \multicolumn{4}{c}{Dense}
				& \multicolumn{2}{c}{Sparse} & \multicolumn{2}{c}{Dense} \\
				\cmidrule(lr){4-7} \cmidrule(lr){8-11} \cmidrule(lr){12-13} \cmidrule(lr){14-15}
				\multicolumn{2}{l}{Nominal size} &
				& \multicolumn{2}{c}{$5\%$} & \multicolumn{2}{c}{$10\%$}
				& \multicolumn{2}{c}{$5\%$} & \multicolumn{2}{c}{$10\%$}
				& \multicolumn{1}{c}{$5\%$} & \multicolumn{1}{c}{$10\%$}
				& \multicolumn{1}{c}{$5\%$} & \multicolumn{1}{c}{$10\%$}  \\
				\cmidrule(lr){4-5} \cmidrule(lr){6-7} \cmidrule(lr){8-9} \cmidrule(lr){10-11} \cmidrule(lr){12-13} \cmidrule(lr){14-15}
				Test &  &  &  LO & EF  & LO & EF & LO & EF  & LO & EF &  LO &  LO &  LO &  LO  \\
				\midrule
				\multicolumn{2}{l}{Continuous design} \\
				\cmidrule(lr){1-2}
				{$n=80$}  & $r=48$   &&    $6$   & $15$ &$12$ &$25$&    $5$   & $15$ &$10$&$25$&       $10$   &$18$&        $7$ & $14$\\
				{$n=160$}& $r=96$   &&    $16$ & $23$&$26$ & $34$ &    $12$ & $21$ &$22$&$34$&       $20$ & $32$ &       $17$ & $29$\\
				{$n=320$}& $r=192$ &&    $29$ & $35$& $43$ &$48$ &    $26$ & $36$ &$39$& $51$ &      $31$ & $45$ &      $29$ & $44$\\
				{$n=640$}& $r=384$ &&    $49$ & $55$& $63$ &$69$ &    $44$ & $57$ &$58$& $71$ &      $52$ & $66$ &      $49$& $64$\\
				{$n=1280$}&$r=768$&&    $74$ & $80$ &$84$ & $88$ &    $68$ & $84$ &$81$& $92$ &     $76$ & $86$ &      $74$ & $85$ \\
				\cmidrule(lr){1-2}
				\multicolumn{2}{l}{Mixed design} \\
				\cmidrule(lr){1-2}
				{$n=80$}  & $r=12$   && $18$ & $23$&$30$&$36$&          $17$ & $19$ &$29$& $30$ &          $24$ & $38$&          $23$ & $37$\\
				{$n=160$}  & $r=24$ && $18$ & $18$&$29$& $28$ &          $27$ & $28$ &$41$& $41$ &          $24$ & $35$ &          $34$ &$49$\\
				{$n=320$}  & $r=48$ && $13$ & $13$&$22$& $22$ &          $40$ & $42$ &$54$& $56$ &          $16$ & $27$ &          $48$ & $64$ \\
				{$n=640$}  & $r=96$ && $10$ & $10$&$18$& $18$ &          $60$ & $65$ &$73$& $77$ &          $11$ & $20$ &          $70$ & $82$ \\
				{$n=1280$}&$r=192$&& $8$ & $8$ & $16$ & $15$ &         $87$ &$91$ & $94$ &$95$&          $9$ & $17$ &          $92$ & $97$ \\
				\bottomrule
			\end{widetable}
			\begin{tablenotes}
				\scriptsize %\footnotesize
				\item \textsc{NOTE}:  LO: leave-out test, EF: exact F test. Results from 10000 Monte-Carlo replications.
			\end{tablenotes}
		\end{threeparttable}
	}
\end{table}

For the continuous design, the power of the tests increases from slightly above nominal size to somewhat below unity as the number of observations increases from $80$ to $1280$. This pattern largely holds irrespective of the type of deviation and presence of heteroskedasticity, although the LO test is a bit more responsive to sparse deviations than to dense ones. Along this stretch of the power curve, the LO test exhibits a power loss that varies between $4$ and $16$ percentage points when compared to the exact F test, and in relative terms, this gap in power shrinks as the sample size grows. Given that the number of tested restrictions in this setting is above half of the sample size, we conjecture that these figures are towards the high end of the power loss that a typical practitioner would incur in order to be robust with respect to heteroskedasticity.

In the mixed design, the fixed dense alternative exhibits similar power figures as in the continuous design, while the fixed sparse deviation generates a power function that decreases with sample size. The reason for the latter is, as discussed in the previous subsection, that the deviating group effect is not estimated more precisely as additional groups are added to the data. Upon comparison of the LO and exact F tests, we see that the differences in the power figures are only $0$--$7$ percentage points. In light of Remark~\ref{rem:power}, which explains that there is no power difference between the LO and exact F tests when ${r}/{n}$ is small, it is natural to attribute this almost non-existent power loss to the fact that there are four times fewer tested restrictions in this mixed design than in the continuous one.

We have also run additional simulation experiments with our baseline continuous regressor design, where we track the impact of the relative numerosities of regressors and restrictions $r/m$ and $m/n$ and of the coefficient of determination $\mathrm{R}^{2}$ on the positivity failure rate of $\hat{V}_{\cal F}$, the empirical size of the test, and its empirical power. The two numerosity ratios show the severity of deviations from the standard regression testing setup, while the coefficient of determination summarizes the magnitude of regression coefficients relative to the size of error variances. The results are relegated to the Appendix (Tables A1 and A2), and here we give a brief summary. The general observation is that the percentage of negative $\hat{V}_{\cal F}$ positively varies with all three parameters, with $\mathrm{R}^{2}$ having most pronounced impact and the ratio $r/m$ having smallest impact. The percentage, however, is still kept within 0.1-0.2\% for all combinations when $n=640$, and the negativity issue is practically non-existent when $n=1280$.
Next, while the three parameters do affect some of the wrongly sized tests from the existing literature, they do not influence the actual empirical size of our proposal, except for minor variation in very small samples. The empirical power of the proposed test, however, is non-trivially affected by all the three parameters, whose higher values imply somewhat smaller power. The coefficient of determination, in particular, has such an effect because a higher signal relative to noise increases the variability of the individual error variance estimators relative to their targets, and so the power tends to be negatively affected by large(r) coefficients. The numerosity ratios also have a negative effect on power because the signal gets dispersed across a larger number of regressors or restrictions as the ratios increase, which naturally reduces power. These tendencies are shared by the EF test when it is appropriately sized.

{Finally, we have examined the differences that result from the use of non-demeaned outcomes when estimating individual variances and their products (see Remark \ref{rem:invariance}). The general impression from those simulations is, first, the use of non-demeaned outcomes makes size control less stable; in particular, for smaller sample sizes, the LO test is undersized. Second, it seriously decreases power at all sample sizes. In practice, we therefore recommend exploiting the version with demeaned outcomes.}

\section{Concluding remarks}\label{sec:conc}

This paper develops an inference method for use in a linear regression with conditional heteroskedasticity where the objective is to test a hypothesis that imposes many linear restrictions on the regression coefficients. The proposed test rejects the null hypothesis if the conventional F statistic exceeds a linearly transformed quantile from the F-bar distribution. The central challenges for construction of the test is estimation of individual error variances and their products, which requires new ideas when the number of regressors is large. We overcome these challenges by using the idea of leaving up to three observations out when estimating individual error variances and their products. In some samples the variance estimate used for rescaling of the critical value may either be negative or cease to exist due to the presence of many discrete regressors. For both of these issues, we propose an automatic adjustment that relies on intentionally upward biased estimators which in turn leaves the resulting test somewhat conservative. Simulation experiments show that the test controls size in small samples, even in strongly heteroskedastic environments, and only exhibits very limited adjustment-induced conservativeness. The simulations additionally illustrate good power properties that signal a manageable cost in power from relying on a test that is robust to heteroskedasticity and many restrictions.

Bootstrapping and closely related resampling methods are often advocated as automatic approaches for construction of critical values. However, in the context of linear regression with proportionality between the number of regressors and sample size, multiple papers \citep{bickel1983bootstrapping,elkaroui2018hidimboot,cattaneo2017inference} demonstrate invalidity of standard bootstrap schemes even when inferences are made on a single regression coefficient. Under additional assumptions of homoskedasticity and restrictions on the design, \cite{elkaroui2018hidimboot} and \cite{richard2019manyboot} show that problem-specific corrections to bootstrap methods can restore validity. We leave it to future research to determine whether bootstrap or other resampling methods can be corrected to ensure validity in our context of a heteroskedastic regression with many regressors and tested restrictions.\looseness=-1

%%%%%%%%%%%%%%%%%%%%%%%%%
\bibliographystyle{chicago}
%	\nocite{*}
\bibliography{lit}
%%%%%%%%%%%%%%%%%%%%%%%%%

\begin{appendices}
	
\section{Leave-out test}

{
\subsection{F-bar distribution}

First, when the entries of $\boldsymbol w$ are all equal and thus equal to $1/r$, we have
$\sum_{\ell=1}^r w_\ell Z_\ell=r^{-1} \sum_{\ell=1}^r  Z_\ell.$
Because $Z_1,\dots,Z_r$ are independent $\chi^2_{1}$, we have $\sum_{\ell=1}^r  Z_\ell\overset{d}{=} \chi^2_{r}$ that is independent of $Z_0$, and thus
\begin{align}
\bar F_{{\boldsymbol w},df} \overset{d}{=} \frac{\sum_{\ell=1}^r w_\ell Z_\ell}{Z_{0}/df  }\overset{d}{=} \frac{df}{r} \frac{\chi^2_{r}}{\chi^2_{df}}\overset{d}{=}F_{r,df},
\end{align}
so Snedecor's F distribution is a special case.

Second, as $df \rightarrow \infty$, by the law of large numbers, $Z_{0}/df\overset{d}{=} df^{-1}\sum_{\ell=1}^{df} z_\ell^2\overset{p}{\rightarrow} \E[z_\ell^2]=1$, where $z_1,\dots,z_{df}$ are independent standard normals. Hence, $\bar F_{{\boldsymbol w},df} \overset{d}{\rightarrow} \bar \chi^2_{\boldsymbol w}$, and so the limiting case of $\bar F_{{\boldsymbol w},df}$ when $df \rightarrow \infty$ is $\bar \chi^2_{\boldsymbol w}$.

Third, note that $\sum_{\ell=1}^r w_\ell Z_\ell$ converges in probability to its expectation of $\sum_{\ell=1}^r w_\ell =1$ by Kolmogorov's law of large numbers for sums of independent heterogeneous random variables. Then,
\begin{align}
\bar F_{{\boldsymbol w},df} &\overset{d}{=} \frac{1+\big(\sum_{\ell=1}^r w_\ell Z_\ell-1\big)}{1+(Z_{0}/df-1)  }\\
& = \left(1+ \big(\sum_{\ell=1}^r w_\ell Z_\ell-1\big)\right)
\left(1- \big(\sum_{\ell=1}^{df} \frac{z_\ell^2}{df}-1\big)(1 + o_p(1))\right) \\
%= &\sum_{\ell=1}^r w_\ell Z_\ell - \left(1+\big(\sum_{\ell=1}^r w_\ell Z_\ell-1\big)\right)(\sum_{\ell=1}^{df} \frac{z_\ell^2}{df}-1)(1 + o_p(1))\\
& = 1+\left(\sum_{\ell=1}^r w_\ell Z_\ell -\sum_{\ell=1}^{df}\frac{z_\ell^2}{df}\right)(1 + o_p(1)).
%\\= & \left(\Theta+1\right)(1 + o_p(1)),\\
%\overset{d}{=} &\frac{df}{r} \frac{\chi^2_{r}}{\chi^2_{df}}\overset{d}{=}F_{r,df},
\end{align}
Now, the expression in the first pair of brackets
is distributed as $\bar \chi^2_{\boldsymbol w,-(df^{-1},\dots,df^{-1})}$, with mean $\sum_{\ell=1}^r w_\ell-\sum_{\ell=1}^{df}df^{-1}=1-1=0$ and variance $\sum_{\ell=1}^r 2 w^2_\ell+\sum_{\ell=1}^{df}2 df^{-2}=2\sum_{\ell=1}^r w_\ell^2 +2/df.$ Using Lyapounov's central limit theorem for sums of independent heterogeneous random variables, we obtain that
\begin{align}
\frac{\bar F_{{\boldsymbol w},df}-1}{\sqrt{2\sum_{\ell=1}^r w_\ell^2 +2/df}}\overset{d}{\rightarrow} N(0,1),
\end{align}
and \eqref{eq:normlim} follows.

}

\subsection{Leave-out algebra}\label{app:leavealg}

	For an arbitrary triple $(i,j,k)$ with $i \neq j \neq k \neq i$, the following shows that $\sum_{\ell \neq i,j,k} {\boldsymbol x}_\ell {\boldsymbol x}_\ell'$ is invertible if and only if $D_{ijk} > 0$. By SMW it suffices to show that $D_{ijk} > 0$ is equivalent to $1 - {\boldsymbol x}_i'\!\left(\sum_{\ell \neq j,k} {\boldsymbol x}_\ell {\boldsymbol x}_\ell'\right)\inverse {\boldsymbol x}_i > 0$ when $D_{jk}>0$. Now, we have
	\begin{align}
		\!\left(\sum_{\ell \neq j,k} {\boldsymbol x}_\ell {\boldsymbol x}_\ell'\right)\inverse
		&= \boldsymbol S_{xx}\inverse + \boldsymbol S_{xx}\inverse \begin{pmatrix}	{\boldsymbol x}_j' \\ {\boldsymbol x}_k' \end{pmatrix}' \begin{bmatrix} M_{jj} & M_{jk} \\ M_{jk} & M_{kk} \end{bmatrix}\inverse  \begin{pmatrix}	{\boldsymbol x}_j' \\ {\boldsymbol x}_k' \end{pmatrix} \boldsymbol S_{xx}\inverse,
		\label{eq:SMW2}
		\shortintertext{thus}
		1 - {\boldsymbol x}_i'\!\left(\sum_{\ell \neq j,k} {\boldsymbol x}_\ell {\boldsymbol x}_\ell'\right)\inverse {\boldsymbol x}_i
		&= M_{ii} - \begin{pmatrix}	 M_{ij} \\ M_{ik} \end{pmatrix}' \begin{bmatrix} M_{jj} & M_{jk} \\ M_{jk} & M_{kk} \end{bmatrix}\inverse  \begin{pmatrix}	M_{ij} \\ M_{ik}  \end{pmatrix}  = \frac{D_{ijk}}{D_{jk}}.
		\label{eq:SMW3}
	\end{align}
	Therefore, $D_{ijk} >0$ if and only if $1 - {\boldsymbol x}_i'\!\left(\sum_{\ell \neq j,k} {\boldsymbol x}_\ell {\boldsymbol x}_\ell'\right)\inverse {\boldsymbol x}_i > 0$.

	\subsection{Location estimator}\label{app:centF}

	The following shows that $\E_0[ {\cal F} ] = \sum_{i=1}^n B_{ii} \sigma_i^2$ and that $\E\big[  \hat{E}_{\cal F} \big] = {\E}_{0}[{\cal F} ]$ which yields that ${\cal F}-\hat{E}_{\cal F}$ is centered at zero under the null. When $H_0$ holds so that $\boldsymbol R{\boldsymbol\beta}=\boldsymbol q$, we have $\boldsymbol R\hat {\boldsymbol\beta} =\boldsymbol q + \boldsymbol R\boldsymbol S_{xx}\inverse \sum_{i=1}^n {\boldsymbol x}_i \varepsilon_i$. Inserting this relationship into the definition of $\cal F$ yields
	\begin{align}
		{\cal F} =\!\left(\boldsymbol R\boldsymbol S_{xx}\inverse \sum_{i=1}^n {\boldsymbol x}_i \varepsilon_i \right)' \!\left(\boldsymbol R \boldsymbol S_{xx}\inverse \boldsymbol R'\right)\inverse \!\left(\boldsymbol R\boldsymbol S_{xx}\inverse \sum_{i=1}^n {\boldsymbol x}_i \varepsilon_i \right)\! = \sum_{i=1}^n \sum_{j=1}^n B_{ij} \varepsilon_i \varepsilon_j ,
	\end{align}
	where  $B_{ij} = {\boldsymbol x}_i'\boldsymbol S_{xx}\inverse \boldsymbol R'\!\left( \boldsymbol R\boldsymbol S_{xx}\inverse \boldsymbol R' \right)\inverse \! \boldsymbol R \boldsymbol S_{xx}\inverse {\boldsymbol x}_j$. Independent sampling and exogenous regressors yield $\E[\varepsilon_i \varepsilon_j ]=0$ whenever $i \neq j$, so
	\begin{align}
		\E_0[ {\cal F} ] = \E_0\!\left[ \sum_{i=1}^n \sum_{j=1}^n B_{ij} \varepsilon_i \varepsilon_j  \right]\! = \sum_{i=1}^n B_{ii} \sigma_i^2.
	\end{align}
	The matrix $\boldsymbol B = (B_{ij})$ is a projection matrix, so it is symmetric and satisfies $r = {\rm tr}(\boldsymbol I_r) = {\rm tr}(\boldsymbol B) = \sum_{i=1}^n B_{ii}$ as claimed in the main text. It follows from \cite[][Lemma 1]{kline2018leave} that $\E[\hat \sigma_i^2 ]= \sigma_i^2$, so $\E[ \sum_{i=1}^n B_{ii} \hat \sigma_i^2 ] = \E_0[ {\cal F} ]$ since $B_{11},\dots,B_{nn}$ are known.

	Next, we show that the conditional variance of ${\cal F}-\hat{E}_{\cal F}$ satisfies the relation given in \eqref{eq:NullVar}. Since $\hat \sigma_i^2 = {y_i(y_i-{\boldsymbol x}_i'\hat {\boldsymbol\beta})}/{M_{ii}} = \sum_{j=1}^n \frac{M_{ij}}{M_{ii}} y_i \varepsilon_j$, we have that, under $H_0$,
	\begin{align}\label{eq:Ustat}
		{\cal F}-\hat{E}_{\cal F}\, &= \sum_{i=1}^n \sum_{j=1}^n B_{ij} \varepsilon_i \varepsilon_j - \sum_{i=1}^n \sum_{j=1}^n \tfrac{B_{ii}}{M_{ii}} M_{ij} y_i \varepsilon_j \\
		&=\sum_{i=1}^n \sum_{j=1}^n\!\left( B_{ij} -\tfrac{B_{ii}}{M_{ii}} M_{ij} \right)\! \varepsilon_i \varepsilon_j - \sum_{i=1}^n \sum_{j=1}^n \tfrac{B_{ii}}{M_{ii}} M_{ij} {\boldsymbol x}_i'{\boldsymbol\beta} \varepsilon_j \\
		&=\sum_{i=1}^n \sum_{j\neq i} C_{ij} \varepsilon_i \varepsilon_j - \sum_{j=1}^n\!\left( \sum\nolimits_{i=1}^n\!\left(\tfrac{B_{ii}}{M_{ii}}-\tfrac{B_{jj}}{M_{jj}}\right)\! M_{ij} {\boldsymbol x}_i'{\boldsymbol\beta}  \right)\! \varepsilon_j ,
	\end{align}
	where $C_{ij} = B_{ij} - \frac{M_{ij}}{2}\!\left(\tfrac{B_{jj}}{M_{jj}}+ \tfrac{B_{ii}}{M_{ii}}\right)\!$ is a set of symmetric weights, i.e., $C_{ij}=C_{ji}$. Note that we have subtracted off $n$ zeroes in the form of $\varepsilon_j \frac{B_{jj}}{M_{jj}} \sum_{i=1}^n M_{ij} {\boldsymbol x}_i'{\boldsymbol\beta}$, which exploits the identity $\sum_{i=1}^n M_{ij} {\boldsymbol x}_i=0$. Independent sampling yields $\E[\varepsilon_i \varepsilon_j \varepsilon_k ]=0$ whenever $i \neq j$ for any $k$, so the two components in this representation of ${\cal F}-\hat{E}_{\cal F}$ are uncorrelated. A straightforward variance calculation for each component leads to the variance expression in \eqref{eq:NullVar}:
	\begin{align}
		{\V}_{0}\!\left[{\cal F} - \hat{E}_{\cal F} \right] &= 2\sum_{i=1}^n \sum_{j \neq i} C_{ij}^2 \sigma_i^2 \sigma_j^2 + \sum_{i=1}^n \left( \sum\nolimits_{j\neq i} \left(\tfrac{B_{jj}}{M_{jj}}- \tfrac{B_{ii}}{M_{ii}}\right) M_{ij} {\boldsymbol x}_j'{\boldsymbol\beta} \right)^2 \sigma_i^2 \\
		&= \sum_{i=1}^n \sum_{j \neq i} U_{ij} \sigma_i^2 \sigma_j^2 + \sum_{i=1}^n \!\left( \sum\nolimits_{j\neq i} V_{ij} {\boldsymbol x}_j'{\boldsymbol\beta} \right)^2 \sigma_i^2,
	\end{align}
	where $U_{ij} = 2 C_{ij}^2$ and $V_{ij}=M_{ij} \!\left( \tfrac{B_{ii}}{M_{ii}}-\tfrac{B_{jj}}{M_{jj}}\right)$.

	\subsection{Variance estimator}\label{app:varest}

	First, we show that $\widehat{\sigma_i^2 \sigma_j^2}=\widehat{\sigma_j^2 \sigma_i^2}$. To establish this equality we introduce some notation used to describe $\hat \sigma_{i,-j}^2$ and $\hat \sigma_{i,-jk}^2$. Define
	\begin{align}\label{def:checkM}
    \check M_{ij,-i} &= \frac{M_{ij}}{M_{ii}},\\
		\check M_{ik,-ij} &= \frac{M_{ik} - M_{ij} \check M_{jk,-j}}{D_{ij}/M_{jj}},
		\shortintertext{and}
		\check M_{i\ell,-ijk} &= \frac{M_{i\ell} - M_{ij}\check M_{j\ell,-jk} - M_{ik} \check M_{k\ell,-jk}}{D_{ijk}/D_{jk}},
	\end{align}
	where the indices following the commas are all different and their ordering is irrelevant (note that $\check M_{ik,-ij}$ was also introduced in the main text). In addition, we will also at times write $\check M_{i\ell,-ijj}$ for $\check M_{i\ell,-ij}$. With these definitions we now have
	\begin{align}
    \hat \sigma_{i}^2 &= y_i\big(y_i - {\boldsymbol x}_i'\hat {\boldsymbol\beta}_{-i}\big) = y_i \sum_{j=1}^n \check M_{ij,-i} y_k,\\
		\hat \sigma_{i,-j}^2 &= y_i\big(y_i - {\boldsymbol x}_i'\hat {\boldsymbol\beta}_{-ij}\big) = y_i \sum_{k=1}^n \check M_{ik,-ij} y_k,
		\shortintertext{and}
		\hat \sigma_{i,-jk}^2 &= y_i\big(y_i - {\boldsymbol x}_i'\hat {\boldsymbol\beta}_{-ijk}\big) = y_i \sum_{\ell \neq k} \check M_{i\ell,-ijk} y_\ell.
	\end{align}
	To see why these relationships hold note that $\check M_{ii,-ij}=1$, $\check M_{ij,-ij}=0$, and
	\begin{align}\label{eq:checkM}
		- {\boldsymbol x}_i'\!\left(\sum_{\ell \neq i,j} {\boldsymbol x}_\ell {\boldsymbol x}_\ell'\right)\inverse {\boldsymbol x}_k
		&= M_{ik} - \begin{pmatrix}	 M_{ii} -1 \\ M_{ij} \end{pmatrix}' \begin{bmatrix} M_{ii} & M_{ij} \\ M_{ij} & M_{jj} \end{bmatrix}\inverse  \begin{pmatrix}	M_{ik} \\ M_{jk} \end{pmatrix}
		=\check M_{ik,-ij},
	\end{align}
	where the first equality follows from \eqref{eq:SMW2}. Similarly, note that $\check M_{ii,-ijk} =1$, $\check M_{ij,-ijk} =\check M_{ik,-ijk} =0$, and use SMW and \eqref{eq:SMW3} to see that
	\begin{align}
		\!\left(\sum_{l \neq i,j,k} {\boldsymbol x}_l {\boldsymbol x}_l'\right)\inverse =\!\left(\sum_{l \neq j,k} {\boldsymbol x}_l {\boldsymbol x}_l'\right)\inverse  + \frac{\!\left(\sum_{l \neq j,k} {\boldsymbol x}_l {\boldsymbol x}_l'\right)\inverse  {\boldsymbol x}_i {\boldsymbol x}_i'\!\left(\sum_{l \neq j,k} {\boldsymbol x}_l {\boldsymbol x}_l'\right)\inverse }{D_{ijk}/D_{jk}}
	\end{align}
	which together with \eqref{eq:SMW2} yields
	\begin{align}
		- {\boldsymbol x}_i'\!\left(\sum_{l \neq i,j,k} {\boldsymbol x}_l {\boldsymbol x}_l'\right)\inverse {\boldsymbol x}_\ell
		&= -\frac{{\boldsymbol x}_i\!\left(\sum_{l \neq j,k} {\boldsymbol x}_l {\boldsymbol x}_l'\right)\inverse  {\boldsymbol x}_\ell }{D_{ijk}/D_{jk}} = \check M_{i\ell,-ijk}.
	\end{align}

	Relying on the newly defined $\check M_{ik,-ij}$ and $\check M_{i\ell,-ijk}$ we can write
	\begin{align}
		\widehat{\sigma_j^2 \sigma_i^2} = y_i y_j  \sum_{k =1}^n \sum_{\ell \neq k}  \check M_{jk,-ij} \check M_{i\ell,-ijk} y_k y_\ell
		\quad \text{and} \quad
		\widehat{\sigma_i^2 \sigma_j^2} = y_i y_j  \sum_{k =1}^n \sum_{\ell \neq k}  \check M_{ik,-ij} \check M_{j\ell,-ijk} y_k y_\ell ,
	\end{align}
	from which $\widehat{\sigma_j^2 \sigma_i^2} = \widehat{\sigma_i^2 \sigma_j^2}$ will follow if
	\begin{align}
		 \check M_{jk,-ij} \check M_{i\ell,-ijk} +  \check M_{j\ell,-ij} \check M_{ik,-ij\ell} =  \check M_{ik,-ij} \check M_{j\ell,-ijk} + \check M_{i\ell,-ij} \check M_{jk,-ij\ell} .
	\end{align}
	That this equality holds follows immediately from the observation that
	\begin{align}
		\check M_{i\ell,-ijk} = \check M_{i\ell,-ij} - \check M_{ik,-ij} \check M_{k\ell,-ijk},
	\end{align}
	which shows equality between
	\begin{align}
		\check M_{jk,-ij} \check M_{i\ell,-ijk} +  \check M_{j\ell,-ij} \check M_{ik,-ij\ell}
		&= \check M_{jk,-ij}\!\left(\check M_{i\ell,-ij} - \check M_{ik,-ij} \check M_{k\ell,-ijk}\right)\!  \\
		&\quad + \check M_{j\ell,-ij}\!\left(\check M_{ik,-ij} - \check M_{i\ell,-ij} \check M_{\ell k,-ij\ell}\right)\!
		\shortintertext{and}
		 \check M_{ik,-ij} \check M_{j\ell,-ijk} + \check M_{i\ell,-ij} \check M_{jk,-ij\ell}
		 &= \check M_{ik,-ij}\!\left(\check M_{j\ell,-ij} - \check M_{jk,-ij} \check M_{k\ell,ijk} \right)\! \\
		 &\quad +\check M_{i\ell,-ij}\!\left(\check M_{jk,-ij} - \check M_{j\ell,-ij} \check M_{\ell k,ij\ell} \right)\! .
	\end{align}
	
	Now we derive that $\hat{V}_{\cal F}$ is a conditionally unbiased estimator of the null variance given in \eqref{eq:NullVar}. That $\widehat{\sigma_i^2 \sigma_j^2}$ is conditionally unbiased for $\sigma_i^2 \sigma_j^2$ was given in the main text, so here we elaborate on the bias introduced by the second component. Note that $(y_j,y_k)$ is conditionally independent of $\hat \sigma_{i,-jk}^2$ and $\E[y_j y_k ] = {\boldsymbol x}_j'{\boldsymbol\beta} {\boldsymbol x}_k'{\boldsymbol\beta} + \sigma_j^2 \boldsymbol 1_{\{j=k\}} $ so that
	\begin{align}
		\E \!\left[ \sum_{i=1}^n \sum_{j\neq i} \sum_{k\neq i} V_{ij} y_j \cdot V_{ik} y_k \cdot \hat \sigma_{i,-jk}^2 \right]\!
		&=  \sum_{i=1}^n \sum_{j\neq i} \sum_{k\neq i} V_{ij}  V_{ik} \cdot \E\!\left[ y_k y_j  \right]\! \cdot \E\!\left[ \hat \sigma_{i,-jk}^2  \right]\! \\
		&= \sum_{i=1}^n\!\left( \sum\nolimits_{j\neq i}V_{ij} {\boldsymbol x}_j'{\boldsymbol\beta} \right)^2 \sigma_i^2
		+ \sum_{i=1}^n \sum_{j\neq i} V_{ij}^2 \sigma_j^2 \sigma_i^2.
	\end{align}
	The first component of this expectation is equal to the corresponding second part of the target variance ${\V}_{0}\!\big[{\cal F} - \hat{E}_{\cal F} \big]$, but the second component is a bias which we correct for by using $\sum_{i=1}^n \sum_{j \neq i} \!\left(U_{ij} - V_{ij}^2 \right)\! \widehat{\sigma_i^2 \sigma_j^2}$ instead of $\sum_{i=1}^n \sum_{j \neq i} U_{ij} \widehat{\sigma_i^2 \sigma_j^2}$ as an estimator of the first part in ${\V}_{0}\!\big[{\cal F} - \hat{E}_{\cal F} \big]$.

{
Now we derive the test statistic that relies on restricted variance estimates.
Denote by $\boldsymbol{S}_{-i}=\sum_{j\neq i}\boldsymbol{x}_{j}\boldsymbol{x}_{j}^{\prime }$ leave-one-out analogs of $\boldsymbol{S}_{xx}.$ The restricted LO estimates are $\widetilde{\boldsymbol{\beta }}_{-i}=\hat{\boldsymbol{\beta }}_{-i}-\boldsymbol{S}_{-i}^{-1}\boldsymbol{R}^{\prime }\big( \boldsymbol{R}\boldsymbol{S}_{-i}^{-1}\boldsymbol{R}^{\prime }\big) ^{-1}\big(\boldsymbol{R}\hat{\boldsymbol{\beta }}_{-i}-\mathbf{q}\big) ,$ the restricted LO residuals are $y_{i}-\boldsymbol{x}_{i}^{\prime }\widetilde{\boldsymbol{\beta }}_{-i}$, and the resulting restricted individual variance estimates are $\widetilde{\sigma }_{i}^{2}=y_{i}\big( y_{i}-\boldsymbol{x}_{i}^{\prime }\widetilde{\boldsymbol{\beta }}_{-i}\big) ,$ which are conditionally unbiased under $H_{0}$.

To derive the restricted estimator $\widetilde{E}_{\mathcal{F}}$ of $\mathbb{E}_{0}[\mathcal{F}]$ and null conditional variance $\mathbb{V}_{0}\big[ \mathcal{F-}\widetilde{E}_{\mathcal{F}}\big]$ of $\mathcal{F-}\widetilde{E}_{\mathcal{F}}$, note first that the restricted LO residuals are
\begin{equation*}
y_{i}-\boldsymbol{x}_{i}^{\prime }\widetilde{\boldsymbol{\beta }}_{-i}=y_{i}-\boldsymbol{x}_{i}^{\prime }\hat{\boldsymbol{\beta }}_{-i}-\sum_{j\neq
i}\Upsilon _{ij}\varepsilon _{j},
\end{equation*}
where $\Upsilon _{ij}=\boldsymbol{x}_{i}^{\prime }\boldsymbol{S}_{-i}^{-1}\boldsymbol{R}^{\prime }\big(\boldsymbol{R}\boldsymbol{S}_{-i}^{-1}\boldsymbol{R}^{\prime }\big) ^{-1}\boldsymbol{RS}_{-i}^{-1}\boldsymbol{x}_{j}$.
By the Woodbury matrix identity,%\footnote{The following Woodbury matrix identity is applied:
%\begin{equation*}
%\left( \boldsymbol{R}\boldsymbol{S}_{xx}^{-1}\boldsymbol{R}^{\prime }+\frac{\boldsymbol{RS}_{xx}^{-1}\boldsymbol{x}_{i}\boldsymbol{x}_{i}^{\prime }\boldsymbol{S}_{xx}^{-1}\boldsymbol{R}^{\prime }}{M_{ii}}\right)^{-1}\!=\left( \boldsymbol{R}\boldsymbol{S}_{xx}^{-1}\boldsymbol{R}^{\prime}\right) ^{-1}-\frac{\left( \boldsymbol{R}\boldsymbol{S}_{xx}^{-1}\boldsymbol{R}^{\prime }\right) ^{-1}\!\!\boldsymbol{RS}_{xx}^{-1}\boldsymbol{x}_{i}\boldsymbol{x}_{i}^{\prime }\boldsymbol{S}_{xx}^{-1}\boldsymbol{R}^{\prime }\!\left( \boldsymbol{R}\boldsymbol{S}_{xx}^{-1}\boldsymbol{R}^{\prime }\right) ^{-1}}{M_{ii}+\boldsymbol{x}_{i}^{\prime }\boldsymbol{S}_{xx}^{-1}\boldsymbol{R}^{\prime }\!\left( \boldsymbol{R}\boldsymbol{S}_{xx}^{-1}\boldsymbol{R}^{\prime }\right) ^{-1}\boldsymbol{RS}_{xx}^{-1}\boldsymbol{x}_{i}}.
%\end{equation*}}
\begin{eqnarray*}
\Upsilon _{ij} &=&\frac{\boldsymbol{x}_{i}^{\prime }\boldsymbol{S}_{xx}^{-1}\boldsymbol{R}^{\prime }}{M_{ii}}\left( \boldsymbol{R}\boldsymbol{S}_{xx}^{-1}\boldsymbol{R}^{\prime }+\frac{\boldsymbol{RS}_{xx}^{-1}\boldsymbol{x}_{i}\boldsymbol{x}_{i}^{\prime }\boldsymbol{S}_{xx}^{-1}\boldsymbol{R}^{\prime }}{M_{ii}}\right) ^{-1}\boldsymbol{RS}_{xx}^{-1}\left( \boldsymbol{x}_{j}-\frac{M_{ij}}{M_{ii}}\boldsymbol{x}_{i}\right)  \\
&=&\boldsymbol{x}_{i}^{\prime }\frac{\boldsymbol{S}_{xx}^{-1}\boldsymbol{R}^{\prime }\left( \boldsymbol{R}\boldsymbol{S}_{xx}^{-1}\boldsymbol{R}^{\prime }\right) ^{-1}\boldsymbol{RS}_{xx}^{-1}}{M_{ii}+\boldsymbol{x}_{i}^{\prime }\boldsymbol{S}_{xx}^{-1}\boldsymbol{R}^{\prime }\left(\boldsymbol{R}\boldsymbol{S}_{xx}^{-1}\boldsymbol{R}^{\prime }\right) ^{-1}\boldsymbol{RS}_{xx}^{-1}\boldsymbol{x}_{i}}\left( \boldsymbol{x}_{j}-\frac{M_{ij}}{M_{ii}}\boldsymbol{x}_{i}\right).
\end{eqnarray*}
Note that $\Upsilon _{ii}=0$. As a result, the restricted variance estimates are%
\begin{equation*}
\widetilde{\sigma }_{i}^{2}=\widehat{\sigma }_{i}^{2}-\left( \boldsymbol{x}_{i}^{\prime }\boldsymbol{\beta }+\varepsilon _{i}\right) \sum_{j\neq i}\Upsilon _{ij}\varepsilon _{j},
\end{equation*}%
and the restricted estimator of $\mathbb{E}_{0}[\mathcal{F}]$ is
\begin{equation*}
\widetilde{E}_{\mathcal{F}}=\sum_{i=1}^{n}B_{ii}\widetilde{\sigma }_{i}^{2}=\widehat{E}_{\mathcal{F}}-\sum_{i=1}^{n}B_{ii}\left(\boldsymbol{x}_{i}^{\prime }\boldsymbol{\beta }+\varepsilon _{i}\right)\sum_{j\neq i}\Upsilon _{ij}\varepsilon _{j},
\end{equation*}%
so that
\begin{equation*}
\mathcal{F-}\widetilde{E}_{\mathcal{F}}=\sum_{i=1}^{n}\sum_{j\neq i}\left(C_{ij}-B_{ii}\Upsilon _{ij}\right) \varepsilon _{i}\varepsilon_{j}-\sum_{j=1}^{n}\left( \sum_{j\neq i}\left( V_{ij}+B_{ii}\Upsilon _{ij}\right)\boldsymbol{x}_{i}^{\prime }\boldsymbol{\beta }\right) \varepsilon _{j},
\end{equation*}%
and so
\begin{equation*}
\mathbb{V}_{0}\left[ \mathcal{F-}\widetilde{E}_{\mathcal{F}}\right]=\sum_{i=1}^{n}\sum_{j\neq i}\widetilde{U}_{ij}\sigma _{i}^{2}\sigma_{j}^{2}+\sum_{i=1}^{n}\left( \sum_{j\neq i}\widetilde{V}_{ij}\boldsymbol{x}_{j}^{\prime }\boldsymbol{\beta }\right) ^{2}\sigma _{i}^{2},
\end{equation*}
where $\widetilde{U}_{ij}=2\left( C_{ij}-B_{ii}\Upsilon _{ij}\right) ^{2}$ and $\widetilde{V}_{ij}=V_{ij}+B_{jj}\Upsilon _{ji}$.
The estimate $
\widetilde{V}_{\mathcal{F}}$ can be constructed similarly to $\widehat{V}_{\mathcal{F}}$ using coefficients $\widetilde{U}_{ij}$ and $\widetilde{V}_{ij}$ in place of $U_{ij}$ and $V_{ij}$ and restricted leave-three-out variance estimates $\widetilde{\sigma }_{i,-jk}^{2}=y_{i}\big(y_{i}-\boldsymbol{x}_{i}^{\prime }\widetilde{\boldsymbol{\beta }}_{-ijk}\big) $, where $\widetilde{\boldsymbol{\beta }}_{-ijk}=\hat{\boldsymbol{\beta }}_{-ijk}-\boldsymbol{S}_{-ijk}^{-1}\boldsymbol{R}^{\prime}\big( \boldsymbol{R}\boldsymbol{S}_{-ijk}^{-1}\boldsymbol{R}^{\prime
}\big) ^{-1}\big( \boldsymbol{R}\hat{\boldsymbol{\beta }}_{-ijk}-\mathbf{q}\big) $ are restricted leave-three-out parameter estimates, and $\boldsymbol{S}_{-ijk}=\sum_{\ell\neq i,j,k}\boldsymbol{x}_{\ell}\boldsymbol{x}_{\ell}^{\prime }$ are leave-three-out analogs of $\boldsymbol{S}_{xx}.$

Note that restricted estimation of error variances when the null is imposed may in fact facilitate existence of leave-three-out estimators. For example, suppose that the null's parameters are $\boldsymbol{R }=\big(\boldsymbol{0}_{m_2\times m_1},I_{m_2}\big)$ and $\mathbf{q}=\boldsymbol{0}_{m_2\times 1}$, where $m_1+m_2=m$ is the total regressor dimensionality, so that one tests for joint insignificance of the last $m_2$ parameters.
Suppose that the first $m_1$ regressors are continuously distributed, while the last $m_2$ regressors are discrete. Then, when the null is imposed, the discrete regressors do not enter the design matrix, and there is no problem with the existence of leave-out estimators.

}	
	
	\subsection{Computational remarks}\label{app:computation}

	The representation of leave-one-out residuals and individual leave-one-out variance estimators given in the main text follows immediately from \eqref{eq:SMW}. Here we derive the representation of the leave-two-out and leave-three-out residuals given in the main text and used in implementation of the testing procedure. In \eqref{eq:checkM}, we showed for $j \neq i$ that $y_i - {\boldsymbol x}_i'\hat {\boldsymbol\beta}_{-ij} = \sum_{k = j}^n \check M_{ik,-ij} y_k$ where $\check M_{ik,-ij} = \frac{M_{ik} - M_{ij} M_{jk}/M_{jj}}{D_{ij}/M_{jj}}$. Thus it follows that
	\begin{align}
		y_i - {\boldsymbol x}_i'\hat {\boldsymbol\beta}_{-ij} &= \sum_{k=1}^n \frac{M_{jj} M_{ik} - M_{ij} M_{jk}}{D_{ij}} y_k
		= \frac{M_{jj} \sum_{k=1}^n M_{ik} y_k - M_{ij}\sum_{k=1}^n M_{jk} y_k}{D_{ij}} \\
		&=  \frac{M_{jj} (y_i - {\boldsymbol x}_i'\hat {\boldsymbol\beta}) - M_{ij}(y_j - {\boldsymbol x}_j'\hat {\boldsymbol\beta})}{D_{ij}} ,
	\end{align}
	as claimed.
	
	To break the monotonicity of the constant reliance on SMW, we establish the representation of the leave-three-out residuals using blockwise inversion. For $i \neq j \neq k \neq i$, $y_i - {\boldsymbol x}_i'\hat {\boldsymbol\beta}_{-ijk}$ is the first entry of the vector
	\begin{align}
		\begin{bmatrix} M_{ii} & M_{ij} & M_{ik} \\ M_{ij} & M_{jj} & M_{jk} \\ M_{ik} & M_{jk} & M_{kk} \end{bmatrix}\inverse \begin{pmatrix} y_i - {\boldsymbol x}_i'\hat{\boldsymbol\beta} \\ y_j - {\boldsymbol x}_j'\hat{\boldsymbol\beta} \\ y_k - {\boldsymbol x}_k'\hat{\boldsymbol\beta} \end{pmatrix}
	\end{align}
	which by blockwise inversion equals
	{\small
	\begin{align}
		\underbrace{\!\left(M_{ii} - \begin{pmatrix} M_{ij} \\ M_{ik} \end{pmatrix}' \begin{bmatrix}  M_{jj} & M_{jk} \\ M_{jk} & M_{kk} \end{bmatrix}\inverse \begin{pmatrix} M_{ij} \\ M_{ik} \end{pmatrix}\right)\inverse}_{=D_{ijk}/D_{jk}} \! \Bigg[ y_i - {\boldsymbol x}_i'\hat{\boldsymbol\beta} - \begin{pmatrix} M_{ij} \\ M_{ik} \end{pmatrix}' \underbrace{\begin{bmatrix}  M_{jj} & M_{jk} \\ M_{jk} & M_{kk} \end{bmatrix}\inverse \begin{pmatrix}   y_j - {\boldsymbol x}_j'\hat{\boldsymbol\beta} \\ y_k - {\boldsymbol x}_k'\hat{\boldsymbol\beta} \end{pmatrix}}_{=\begin{pmatrix}   y_j - {\boldsymbol x}_j'\hat{\boldsymbol\beta}_{-jk} \\ y_k - {\boldsymbol x}_k'\hat{\boldsymbol\beta}_{-jk} \end{pmatrix}} \Bigg]
	\end{align}
	}%
	which in turn is the representation provided in the main text.

	\section{Asymptotic size and power}

	\subsection{Asymptotic size}\label{app:size}

	As a preliminary observation, note that $\max_i B_{ii} = O_p(\epsilon_n)$, { Assumption~\ref{ass:reg}$(i)$,} and Assumption~\ref{ass:asympLeave3Out} imply that $\max_i (\sum\nolimits_{j \neq i} V_{ij} {\boldsymbol x}_j'{\boldsymbol\beta})^2/r = o_p(1)$. This follows from the idempotency of $M$ through
	\begin{align}
	\max_i \frac{1}{r}\left(\sum\nolimits_{j \neq i} V_{ij} {\boldsymbol x}_j'{\boldsymbol\beta}\right)^2 &\le \frac{1}{r} \sum_{i=1}^n \!\left( \sum\nolimits_{j=1}^n M_{ij} \tfrac{B_{jj}}{M_{jj}} {\boldsymbol x}_j'{\boldsymbol\beta} \right)^2 \\
	&\le \max_i \frac{({\boldsymbol x}_i'{\boldsymbol\beta})^2}{M_{ii}^2} \frac{1}{r} \sum\nolimits_{j=1}^n B_{jj}^2
	\le \max_i \frac{({\boldsymbol x}_i'{\boldsymbol\beta})^2}{M_{ii}^2} \max_{j} B_{jj} = o_p(1).
	\end{align}
	Thus, we have $\max_i (\sum\nolimits_{j \neq i} V_{ij} {\boldsymbol x}_j'{\boldsymbol\beta})^2/r = o_p(1)$ under either of the two possible conditions in { Assumption~\ref{ass:reg}$(ii)$.} Similarly, we have that $\max_{i} B_{ii}/(\epsilon_n r) = o_p(1)$ under either of the two possible conditions in { Assumption~\ref{ass:reg}$(ii)$.} Finally, we will repeatedly rely on the simple bound that $\max_i \sum_{j \neq i} U_{ij} + V_{ij}^2= O( \max_{i} B_{ii})$.

	Finally, as a further motivation of the high-level condition $\max_i (\sum\nolimits_{j \neq i} V_{ij} {\boldsymbol x}_j'{\boldsymbol\beta})^2/r = o_p(1)$ we provide a simple example where it holds with $r$ proportional to $n$. This example is characterized by
	\begin{enumerate}
		\item $n/r = O(1)$ and $ \max_{i} \sum_{i=1}^n \mathbf{1}\{M_{ij} \neq 0\} = O_p\big( \epsilon_n n^{1/2}\big)$.
	\end{enumerate}
	This example focus on settings where the number of restrictions is large relative to sample size, and covers any model with group specific regressors only and maximal group sizes that grow slower than $n^{1/2}$. This is so since $M_{ij}=0$ for any two observations in different groups. Here, we have
	\begin{align}
	\max_i \frac{1}{r}\left(\sum\nolimits_{j \neq i} V_{ij} {\boldsymbol x}_j'{\boldsymbol\beta}\right)^2 &= \frac{1}{r} \max_i \!\left( \sum\nolimits_{j=1}^n M_{ij} \tfrac{B_{jj}}{M_{jj}} {\boldsymbol x}_j'{\boldsymbol\beta} \right)^2 \\ &\le \frac{1}{r} \max_i \frac{({\boldsymbol x}_i'{\boldsymbol\beta})^2}{M_{ii}^2} \left(\sum\nolimits_{j=1}^n \mathbf{1}\{M_{ij} \neq 0\}\right)^2 = o_p(1),
	\end{align}
	where the order statement use 1., { Assumption~\ref{ass:reg}$(i)$, and Assumption~\ref{ass:asympLeave3Out}.	}

	\begin{proof}[Proof of Theorem~\ref{thm:size}]
		The proof naturally separates into three parts. In the first two parts, we consider an infeaible version of the test that relies on ${\V}_{0}\!\big[{\cal F} - \hat{E}_{\cal F}\big]$ instead of $\hat{V}_{\cal F}$. The first part then establishes asymptotic size control when $r$ grows to infinity with $n$, while the second part establishes size control when $r$ is fixed in the asymptotic regime. The third part shows consistency of the proposed variance estimator, i.e., $\hat{V}_{\cal F}/{\V}_{0}\!\big[{\cal F} - \hat{E}_{\cal F}\big] \xrightarrow{p} 1$. Together, these results and the continuous mapping theorem lead to the conclusion of the theorem irrespective of how $r$ is viewed in relation to the sample size.
		
		\noindent \textbf{Asymptotic size control when $r$ is growing}
		Using \eqref{eq:Ustat} and defining the vector $\check {\boldsymbol x}_i = \sum_{j=1}^n M_{ij} \frac{B_{jj}}{M_{jj}} {\boldsymbol x}_j = - \sum_{j \neq i} V_{ij} {\boldsymbol x}_j$, we can write
		\begin{align}
			{\cal F} - \hat{E}_{\cal F} =  \sum_{i=1}^n \sum_{j\neq i} C_{ij} \varepsilon_i \varepsilon_j - \sum_{i=1}^n \check {\boldsymbol x}_i'{\boldsymbol\beta} \varepsilon_i.
		\end{align}
		Under { Assumption~\ref{ass:iid},} it follows {\citep[see][Lemma B.1 and its proof]{kline2018leave}} that ${\cal F} - \hat{E}_{\cal F}$ scaled down by ${\V}_{0}\!\big[{\cal F} - \hat{E}_{\cal F}\big]^{1/2}$ is asymptotically standard normal provided that
		\begin{align}
			(a) \ \frac{\text{trace}(\boldsymbol C^4)}{{\V}_{0}\!\big[{\cal F} -\hat{E}_{\cal F} \big]^2}=o_p(1) \quad \text{and} \quad (b) \ \frac{\max_i (\check {\boldsymbol x}_i'{\boldsymbol\beta})^2}{{\V}_{0}\!\big[{\cal F} - \hat{E}_{\cal F} \big]\!} = o_p(1),
		\end{align}
		where $\boldsymbol C$ is a matrix with $C_{ij}$ as its $(i,j)$-th entry. To show that $(a)$ and $(b)$ holds, we first note that $\boldsymbol C =\boldsymbol B - \frac{1}{2}\!\left(\boldsymbol D_{B \oslash M}\boldsymbol M +\boldsymbol M\boldsymbol D_{B \oslash M} \right)$, where $\boldsymbol B$ has $B_{ij}$ as its $(i,j)$-th entry and $\boldsymbol D_{B \oslash M}$ is a diagonal matrix with $B_{ii}/M_{ii}$ as its $(i,i)$-th entry. Note also that for even $p$, $\text{trace}(\boldsymbol C^p) = \text{trace}(\boldsymbol B) + \text{trace}\!\left(2^{-p}\!\left(\boldsymbol D_{B \oslash M} \boldsymbol M +\boldsymbol M\boldsymbol D_{B \oslash M} \right)^p \right)$, as $\boldsymbol B$ and $\boldsymbol M$ are idempotent and orthogonal. Since ${\V}_{0}\!\big[{\cal F} - \hat{E}_{\cal F} \big]\! \ge \min_i \sigma_i^4 \sum_{i=1}^n \sum_{j\neq i} 2C_{ij}^2$ these observations yield
		\begin{align}
			\sum_{i=1}^n \sum_{j\neq i} 2C_{ij}^2 = 2\text{trace}(\boldsymbol C^2) \ge 2\text{trace}(\boldsymbol B) = 2r.
		\end{align}
		For $(a)$, we can now observe that
		\begin{align}
			\text{trace}(\boldsymbol C^4) &= \text{trace}(\boldsymbol B) + \frac{1}{16}\text{trace}\!\left((\boldsymbol D_{B \oslash M}\boldsymbol M +\boldsymbol M\boldsymbol D_{B \oslash M})^4\right)\! \\
			&\le r + \sum_{i=1}^n \tfrac{B_{ii}^4}{M_{ii}^4} \le r\!\left(1 + \left(\max\nolimits_{i} B_{ii}\right)^3\left(\max\nolimits_i M_{ii}\inverse\right)^4\right)\!.
		\end{align}
		Since Assumption~\ref{ass:asympLeave3Out} implies that $\max_i M_{ii}\inverse = O_p(1)$, $(a)$ therefore holds with a rate of $1/r$.
		
		Condition $(b)$ follows immediately from { Assumption~\ref{ass:reg}$(ii)$} and that the variance ${\V}_{0}\!\big[{\cal F} - \hat{E}_{\cal F}\big]$ is at least of order $r$. % (see also the paragraph preceeding this proof).
		
%		For completeness, we also note that this variance is at most of order $r \max_i B_{ii}/\epsilon_n$ since
%		\begin{align}
%			\sum_{i=1}^n \left(\sum_{j \neq i} V_{ij} {\boldsymbol x}_j'{\boldsymbol\beta}\right)^2 \le \sum_{i=1}^n \frac{B_{ii}^2}{M_{ii}^2} ({\boldsymbol x}_i'{\boldsymbol\beta})^2 = O(r \max_i B_{ii}/\epsilon_n )
%		\end{align}
%		and $\text{trace}(\boldsymbol C^2) \le r +  \sum_{i=1}^n \frac{B_{ii}^2}{M_{ii}^2} = O_p(r).$
%		

%		we instead derive that
%		\begin{align}
%			\max_i (\check x_i'\beta)^2 &= \max_i \!\left( \sum\nolimits_{j=1}^n M_{ij} \tfrac{B_{jj}}{M_{jj}} x_j'\beta \right)^2 \\
%			&\le \max_j \tfrac{(x_j'\beta)}{M_{jj}^2} \cdot \max_i \!\left( \sum\nolimits_{j=1}^n \abs{M_{ij}}^{2-\epsilon} \right)^\frac{2}{2-\epsilon} \cdot \!\left( \sum\nolimits_{j=1}^n B_{jj}^\frac{2-\epsilon}{1-\epsilon} \right)^{2\frac{1-\epsilon}{2-\epsilon}} \\
%			&= O_p\!\left( r^{\frac{2-2\epsilon}{2-\phantom{2}\epsilon}}\right)\!
%		\end{align}
%		where the order statements follow from Assumption~\ref{ass:reg}$(ii,iii)$ and Assumption~\ref{ass:asympLeave3Out}. Thus $(b)$ holds with a rate of (at least) $1/r^\frac{\epsilon}{2-\epsilon}$.
		
		As the above establishes asymptotic standard normality of ${\cal F} - \hat{E}_{\cal F}$ scaled down by ${\V}_{0}\!\big[{\cal F} - \hat{E}_{\cal F}\big]^{1/2}$, it now suffices for asymptotic size control to show that
		\begin{align}
			\frac{ q_{1-\alpha}(F_{\hat{\boldsymbol w},n-m})-1 }{ \sqrt{2\sum_{\ell=1}^r \hat w_\ell^2 + 2/(n-m) } } \xrightarrow{p} q_{1-\alpha}(\Phi)
		\end{align}
		which by \eqref{eq:normlim} follows provided that $\max_{\ell} \hat w_\ell \xrightarrow{p} 0$ and $n-m \rightarrow \infty$. The latter condition is implied by $\max_i M_{ii}\inverse = O_p(1)$ as it leads to $\limsup_{n \rightarrow \infty} m/n < 1$.
		
		We prove that $\max_{\ell} \hat w_\ell \xrightarrow{p} 0$, by establishing that entries of $\boldsymbol{w}_{\cal F} = (w_1,\dots,w_r)'$ converges to zero when $r \rightarrow \infty$ and that
		\begin{align}\label{eq:wcons}
			\max_{\ell} \left(\tilde w_\ell - w_\ell\right)^2 = O\left( \frac{\max_{i} B_{ii}^{1/2}}{\epsilon_n^{1/2} r} \right)
		\end{align}
		where the entries of both $\boldsymbol{w}_{\cal F}$ and $\tilde {\boldsymbol{w}}$ are sorted by magnitude. Since $\max_{\ell} \hat w_\ell \le \max_{\ell} \tilde w_\ell$ and $\frac{\max_{i} B_{ii}}{\epsilon_n r} = o_p(1)$ these observations yield the desired conclusion.
		
		First, have that
		\begin{align}
			\max_{\ell} w_\ell \le \norm{\boldsymbol{w}_{\cal F}}
			= \frac{ \sqrt{ \sum_{i=1}^n \sum_{j=1}^n B_{ij}^2  \sigma_i^2  \sigma_j^2 } }{\sum_{i=1}^n B_{ii} \sigma_i^2 } \le \frac{ \max_{i} \sigma_i^2}{ \min_i \sigma_i^2 } \frac{\sqrt{r}}{r} = o_p(1).
		\end{align}
		Second, we have that
		\begin{align}
			\max_{\ell} \left(\tilde w_\ell - w_\ell\right)^2 \le \left(\frac{\E_0[{\cal F}]}{\hat E_{\cal F}}\right)^2 \frac{ \sum_{i=1}^n \left(\hat \sigma_i^2 -\sigma_i^2\right) \sum_{j=1}^n B_{ij}^2 \left(\hat \sigma_j^2 - \sigma_j^2\right)  }{ \left( \sum_{i=1}^n B_{ii} \sigma_i^2 \right)^2 }.
		\end{align}
		It follows from the first part of this proof that $\hat E_{\cal F}/\E_0[{\cal F}] \xrightarrow{p} 1$, and from an application of Cauchy-Schwarz that
		\begin{align}
			\E\left[\sum_{i=1}^n \left(\hat \sigma_i^2 -\sigma_i^2\right) \sum_{j=1}^n B_{ij}^2 \left(\hat \sigma_j^2 - \sigma_j^2\right) \right] &\le \sum_{i=1}^n \V\left[ \hat \sigma_i^2\right]^{1/2} \V\left[ \sum_{j=1}^n B_{ij}^2 \left(\hat \sigma_j^2 - \sigma_j^2\right) \right] \\
			= O\left(  \max_{i} \frac{ (x_i'\beta)^2 }{M_{ii}} \sum_{i=1}^n \sqrt{\sum_{j=1}^n B_{ij}^4} \right) &= O\left( \frac{\max_{i} B_{ii}^{1/2}}{\epsilon_n^{1/2} r} \right).
		\end{align}

		\noindent \textbf{Asymptotic size control when $r$ is fixed} For $r$ fixed, it must be that $\max_i B_{ii} = o_p(1)$ by Assumption 2. When $r$ is fixed, we can without loss of generality suppose that $\boldsymbol{w}_{\cal F}$ converges in probability (as we can otherwise argue along subsequences) and will use $\overrightarrow{\boldsymbol{w}}_{\cal F}$ to denote this limit. The entries of this limit are necessarily strictly positive. It follows from \eqref{eq:wcons}, that $\tilde{\boldsymbol w} \xrightarrow{p} \overrightarrow{\boldsymbol{w}}_{\cal F}$ and thus also that $\hat{\boldsymbol w} \xrightarrow{p} \overrightarrow{\boldsymbol{w}}_{\cal F}$. This conclusion naturally implies that we also have $\hat E_{\cal F}/\E_0[{\cal F}] \xrightarrow{p} 1$.
		
		Lyapounovs central limit theorem and $\max_i B_{ii} = o_p(1)$ implies that $\V[ \boldsymbol R \hat {\boldsymbol\beta}]^{-1/2} (\boldsymbol R\hat {\boldsymbol \beta} - \boldsymbol q) \xrightarrow{d} N(0,I_r)$ which when coupled with the conclusions above and the continuous mapping theorem implies that ${\cal F}/\hat E_{\cal F} \xrightarrow{d} \bar \chi^2_{\overrightarrow{\boldsymbol{w}}_{\cal F}}$. Finally, we have that $\V_0[ {\cal F} - \hat E_{\cal F}]/\E_0[{\cal F}] ^2 = V_0[ {\cal F}]/\E_0[{\cal F}]^2 + o_p(1) = 2\norm{\overrightarrow{\boldsymbol{w}}_{\cal F}}^2 + o_p(1)$ and due to the continuous nature of this variance we also have that $2\norm{\hat {\boldsymbol{w}}}^2 + 2/(n-m) =  2\norm{\overrightarrow{\boldsymbol{w}}_{\cal F}}^2 + o_p(1)$. Thus it follows that
		\begin{align}
			\lim_{n \rightarrow \infty} \Pr\left( \frac{\cal F  }{\hat{E}_{\cal F}} >  q_{1-\alpha}(\bar F_{\hat {\boldsymbol w},n-m}) + \!\left(q_{1-\alpha}(\bar F_{\hat {\boldsymbol w},n-m})-1\right)\!\left(\tfrac{\V_0[ {\cal F} - \hat E_{\cal F}]^{1/2}/\hat{E}_{\cal F}}{\sqrt{2\sum_{\ell=1}^r \hat w_\ell^2 +2/(n-m)}} -1 \right) \right) = 1-\alpha.
		\end{align}

		To finish the proof we only need to establish that $\hat{V}_{\cal F}/{\V}_{0}\!\big[{\cal F} - \hat{E}_{\cal F}\big] \xrightarrow{p} 1$.
		
		\noindent \textbf{Consistency of variance estimator}
		In the remainder of this proof $\sum_{i \neq j}^n$ is shorthand for the double sum $\sum_{i=1}^n \sum_{j\neq i}$, $\sum_{i \neq j \neq k}^n$ denotes $\sum_{i=1}^n \sum_{j\neq i} \sum_{k \neq i,j}$, and $\sum_{i \neq j \neq k \neq \ell}^n$ abbreviates $\sum_{i=1}^n \sum_{j\neq i} \sum_{k \neq i,j} \sum_{\ell \neq i,j,k}$. Similarly, we use $\sum_{i,j}^n$ to denote the double sum $\sum_{i=1}^n \sum_{j=1}^n$. Note that $\sum_{j \neq i}$ (without a raised $n$) will still denote a single sum that excludes $i$.
		
		From the algebraic manipulations of the leave-out estimators provided in Appendix \ref{app:varest}, it follows that the proposed variance estimator satisfies the decomposition
			\begin{align}\label{eq:Varsum}
			\hat{V}_{\cal F}\,&= \sum_{i \neq j}^n U_{ij} y_i^2 y_j \varepsilon_j
			+ \!\sum_{i\neq j \neq k}^n\! a_{ijk} y_i^2 y_j \varepsilon_k  + b_{ijk} y_i \varepsilon_i y_j y_k  +
			\!\sum_{i\neq j \neq k \neq \ell}^n\! b_{ijk\ell} y_i y_j y_k \varepsilon_\ell
			\quad
			\end{align}
		where the weights in \eqref{eq:Varsum} are
		\begin{align}
			a_{ijk} = U_{ij} \check M_{jk,-ij}, \quad b_{ijk} &= \!\left( U_{ij}  - V_{ij}^2 \right)\! \check M_{jk,-ij}+V_{ij}V_{ik}, \quad \text{and} \quad b_{ijk\ell} = b_{ijk}\check M_{i\ell,-ijk}.
		\end{align}
		
		Appendix \ref{app:varest} already showed that $\hat{V}_{\cal F}$ is conditionally unbiased, so consistency follows if the conditional variance of $\hat{V}_{\cal F}$ is small relative to the squared estimand ${\V}_0\big[{\cal F} - \hat{E}_{\cal F} \big]^2$. The derivations further below establish that this is the case by working with the four components of \eqref{eq:Varsum} one at a time.
		
		An essential algebraic trick that is used repeatedly below is that the property $\boldsymbol M^2=\boldsymbol M$ or $\sum_{k=1}^n M_{jk} M_{\ell k} = M_{j\ell} = M_{\ell j}$ translate into similar statements regarding the leave-out analogs $\check M_{ij,-i}$, $\check M_{jk,-ij}$ and $\check M_{i \ell,-ijk}$:
		\begin{align}
			\sum\nolimits_{j=1}^n \check M_{ij,-i} \check M_{\imath j,-\imath}
			&=\tfrac{M_{\imath i,-\imath}}{M_{ii}}
			= \tfrac{\check M_{i\imath,-i}}{M_{\imath \imath}}
			= \tfrac{M_{i\imath}}{M_{ii} M_{\imath \imath}}
			\shortintertext{for leave-one-out,}
			\sum\nolimits_{k=1}^n \check M_{jk,-ij} \check M_{\jmath k,-\imath \jmath}
			&= \tfrac{\check M_{\jmath j,-\imath \jmath}- \check M_{\jmath i,-\imath \jmath}\check M_{ij,-i}}{D_{ij}/M_{ii}}
			=\tfrac{\check M_{j \jmath,-i j} - \check M_{j \imath,-i j}\check M_{\imath \jmath,-\imath}}{D_{\imath \jmath}/M_{\imath \imath}} \label{eq:sumcheckM} \\
			&= \tfrac{M_{ii} \!\left( M_{\imath\imath} M_{\jmath j} -  M_{\imath\jmath} M_{\imath j}  \right)\! - \!\left( M_{\imath\imath} M_{\jmath i} - M_{\imath\jmath} M_{\imath i}    \right) \! M_{ij}  }{D_{ij} D_{\imath \jmath}}
		\end{align}
		for leave-two-out, and in the case of leave-three-out:
		\begin{align}
			\sum\nolimits_{\ell=1}^n \check M_{i\ell,-ijk} \check M_{\imath \ell,-\imath \jmath \kappa}
			%
			%&= \tfrac{\check M_{\imath i,-\imath \jmath \kappa} - \check M_{\imath j,-\imath \jmath \kappa} \check M_{j i,-j k}  - \check M_{\imath k,-\imath \jmath \kappa} \check M_{k i,-j k} }{D_{i j k}/D_{j k}} \\
			&= \tfrac{\check M_{i \imath,-ijk} - \check M_{i \jmath,-i j k} \check M_{\jmath \imath,-\jmath \kappa}  - \check M_{i \kappa,-ijk} \check M_{\kappa \imath,-\jmath \kappa} }{D_{\imath \jmath \kappa}/D_{\jmath \kappa}} \label{eq:sumcheckM3} \\
			&= \tfrac{\!\left(M_{i \imath} D_{j k}  - M_{ij} \!\left( M_{k k} M_{j \imath} - M_{k j} M_{k \imath} \right)\!   -  M_{ik} \!\left( M_{j j} M_{k \imath} - M_{k j} M_{j \imath} \right)\! \right)\! D_{\jmath \kappa}}{D_{i j k} D_{\imath \jmath \kappa}} \\
			&- \tfrac{\!\left(M_{i \jmath}D_{j k}  - M_{ij} \!\left( M_{k k} M_{j\jmath} - M_{k j} M_{k \jmath} \right)\! -  M_{ik} \!\left( M_{j j} M_{k\jmath} - M_{k j} M_{j \jmath} \right)\! \right)\!\left( M_{\kappa \kappa} M_{\jmath \imath} - M_{\kappa \jmath} M_{\kappa \imath} \right)\! }{D_{i j k} D_{\imath \jmath \kappa}} \\
			&- \tfrac{\!\left(M_{i \kappa}D_{j k} - M_{ij} \!\left( M_{k k} M_{j\kappa} - M_{k j} M_{k \kappa} \right)\!  -  M_{ik} \!\left( M_{j j} M_{k\kappa} - M_{k j} M_{j \kappa} \right)\! \right)\!\left( M_{\jmath \jmath} M_{\kappa \imath} - M_{\kappa \jmath} M_{\jmath \imath} \right)\! }{D_{i j k} D_{\imath \jmath \kappa}}.
		\end{align}
		Beyond these identities, the remaining arguments rely on well-known inequalities such as Cauchy-Schwarz, Minkowski, and Courant-Fischer.
		
		\noindent \textbf{First component of $\hat{V}_{\cal F}$.}
		 For the first component of \eqref{eq:Varsum}, we have
		\begin{align}
		\V\!\left[ \sum\nolimits_{i \neq j}^n U_{ij} y_i^2 y_j \varepsilon_j  \right]\!
		&= \sum\nolimits_{i \neq j}^n U_{ij}^2\!\left(\V\!\left[ y_i^2 y_j \varepsilon_j  \right]\! + \mathbb{C}\!\left[y_i^2 y_j \varepsilon_j,y_j^2 y_i \varepsilon_i\right]\!\right)\! \\
		&+ \sum\nolimits_{i\neq j \neq k}^n U_{ij} U_{jk}\!\left(\mathbb{C}\!\left[y_i^2 y_j \varepsilon_j,y_k^2 y_j \varepsilon_j\right]\!+\mathbb{C}\!\left[y_i^2 y_j \varepsilon_j,y_j^2 y_k \varepsilon_k\right]\!\right)\! \\
		&+ \sum\nolimits_{i\neq j \neq k}^n U_{ij} U_{ik}\!\left(\mathbb{C}\!\left[y_i^2 y_j \varepsilon_j,y_i^2 y_k \varepsilon_k\right]\!+\mathbb{C}\!\left[y_i^2 y_j \varepsilon_j,y_k^2 y_i \varepsilon_i\right]\!\right)\!  \\
		&\le \max_{i,j} \E\!\left[ y_i^4  \right]\! \E\!\left[ y_j^2 \varepsilon_j^2  \right]\! 4 \sum\nolimits_{i=1}^n\!\left(  \sum\nolimits_{j \neq i} U_{ij} \right)^2.
		\label{eq:comp1}
		\end{align}
		The upper bound on this variance is $4$ times a product between a conditional moment $\max_{i,j} \E\!\left[ y_i^4  \right]\!\E\!\left[ y_j^2 \varepsilon_j^2  \right]\!$ which is $O_p(\epsilon_n^{-1})$ by { Assumptions \ref{ass:iid} and \ref{ass:reg}$(i)$} and a sum of squared influences $\sum\nolimits_{i=1}^n\!\big(  \sum\nolimits_{j \neq i} U_{ij} \big)^2$. The latter term we can write as four times $\text{trace}\big(\boldsymbol C^2 \odot\boldsymbol C^2\big)\!$, where $\odot$ denotes Hadamard (elementwise) product. This representation immediately yields
		\begin{align}\label{eq:comp1bound}
		\tfrac{\sum_{i=1}^n\!\left(  \sum_{j \neq i} U_{ij} \right)^2}{{\V}_{0}\!\left[{\cal F} - \hat{E}_{\cal F} \right]^2} \le \tfrac{4 \cdot \text{trace}(\boldsymbol C^2) \cdot \max_{i }\sum_{j \neq i} U_{ij}}{{\V}_{0}\!\left[{\cal F} - \hat{E}_{\cal F}\right]^2}   = O\!\left( \tfrac{\max_{i }\sum_{j \neq i} U_{ij}}{\!\left(\sum_{i\neq j}^n C_{ij}^2\right)^2} \right)\! = O\!\left( \tfrac{\max_i B_{ii}}{\epsilon_n r}\right),
		\end{align}
		where the last two equalities follow from the asymptotic normality step of this proof.
		
		%	Before moving to the remaining components it is useful to note that Assumption \ref{ass:reg}$(iii)$ implies that
		%	\begin{align}
		%		\max_{i \neq j} \sum\nolimits_{k \neq j} \check M_{ik,-j}^2 &= \max_{i,j} \sum\nolimits_{k =1}^n \tfrac{(M_{jj}M_{ik}-M_{ij}M_{jk})^2}{D_{ij}^2} = \max_{i,j} \tfrac{M_{jj}}{D_{ij}} = O_p(1) \\
		%		\max_{i \neq j \neq k \neq  i } \sum\nolimits_{\ell \neq j,k} \check M_{i\ell,-jk}^2 &= \max_{i \neq j \neq k \neq i} \tfrac{D_{jk}}{D_{ijk}}  = O_p(1)
		%	\end{align}
		
		\noindent \textbf{Second component of $\hat{V}_{\cal F}$.}
		The second component of \eqref{eq:Varsum} we further decompose into two parts
		\begin{align}\label{eq:comp2}
		\!\sum\nolimits_{i\neq j \neq k}\! a_{ijk} y_i^2 \varepsilon_j \varepsilon_k + a_{ijk} {\boldsymbol x}_j'{\boldsymbol\beta} y_i^2 \varepsilon_k
		\end{align}
		Proceeding with variance calculations and bounds for the first part we have
		\begin{align}
		\V\!\left[\sum\nolimits_{i\neq j \neq k}^n a_{ijk} y_i^2 \varepsilon_j \varepsilon_k  \right]\! &= \sum\nolimits_{i\neq j \neq k}^n a_{ijk}\!\left( a_{ijk} + a_{ikj}\right)\! \E\!\left[y_i^4 \varepsilon_j^2 \varepsilon_k^2 \right]\! \\
		&+ \sum\nolimits_{i\neq j \neq k}^n\!\left( a_{ijk}\!\left(a_{jik} + a_{jki}\right)\! + a_{ikj}\!\left(a_{jik} + a_{jki}\right)\!\right)\! \E\!\left[y_i^2 \varepsilon_i y_j^2 \varepsilon_j \varepsilon_k^2 \right]\! \\
		&+ \sum\nolimits_{i\neq j \neq k \neq \ell}^n a_{ijk}\!\left( a_{\ell jk} + a_{\ell kj}\right)\! \E\!\left[y_i^2 y_\ell^2 \varepsilon_j^2 \varepsilon_k^2 \right]\! \\
		&\le \max_{i,j} \E\!\left[y_i^4\right]\! \sigma_j^4  8 \sum\nolimits_{i\neq j \neq k}^n  a_{ijk}^2
		+ \max_{j} \sigma_i^4 \sum\nolimits_{j \neq k}^n \!\left(\sum\nolimits_{i \neq j}^n a_{ijk} \E\!\left[y_i^2 \right]\! \right)^2
		\end{align}
		where we utilize $a_{iji}=0$. Now observe that a special case of \eqref{eq:sumcheckM}
		\begin{align}
		\sum\nolimits_{k=1}^n \check M_{jk,-ij} \check M_{jk,-\imath j}
		= \tfrac{1 - \check M_{ij,-i}\check M_{j i,-\imath j} }{D_{ij}/M_{ii}}
		=\tfrac{1 - \check M_{\imath j,-\imath}\check M_{j \imath,-i j}}{D_{\imath j}/M_{\imath \imath}}
		&= \tfrac{M_{jj} M_{ii} M_{\imath\imath} + M_{i\imath} M_{ij} M_{\imath j} - M_{ij}^2 M_{\imath\imath} - M_{\imath j}^2 M_{ii} }{D_{ij} D_{\imath j}}
		\end{align}
		is bounded in absolute value by $4D_{ij}\inverse D_{\imath j}\inverse$, and that a further special case of \eqref{eq:sumcheckM} yields the bound $\sum\nolimits_{k=1}^n \check M_{jk,-ij}^2 = \frac{M_{ii}}{D_{ij}} \le \frac{1}{D_{ij}} $. In turn these bounds lead to
		\begin{align}
		\tfrac{\sum\nolimits_{i\neq j \neq k}^n  a_{ijk}^2}{{\V}_{0}\!\left[{\cal F} - \hat{E}_{\cal F}\right]^2}
		&\le \max_{i \neq j} \tfrac{1}{D_{ij}} \times  \tfrac{\sum\nolimits_{i=1}^n \!\left(\sum\nolimits_{j\neq i} U_{ij}\right)^2 }{{\V}_{0}\!\left[{\cal F} - \hat{E}_{\cal F} \right]^2}
		= O\!\left( \tfrac{\max_i B_{ii}}{\epsilon_n r}  \right)\!, \\
		\tfrac{\sum\nolimits_{j \neq k}^n \!\left(\sum\nolimits_{i \neq j}^n a_{ijk} \E\!\left[y_i^2 \right]\! \right)^2}{{\V}_{0}\!\left[{\cal F} - \hat{E}_{\cal F} \right]^2}
		&\le \max_{i \neq j} \tfrac{\E\!\left[y_i^2 \right]^2}{D_{ij}^2} \times  \tfrac{4 \sum\nolimits_{i=1}^n \!\left(\sum\nolimits_{j\neq i} U_{ij}\right)^2 }{{\V}_{0}\!\left[{\cal F} - \hat{E}_{\cal F} \right]^2}
		= O\!\left( \tfrac{\max_i B_{ii}}{\epsilon_n r}  \right)\!
		\end{align}
		where the order statement of the first line stems from Assumption~\ref{ass:asympLeave3Out} from which it follows that $\max_{i \neq j} D_{ij}\inverse = O_p(1)$, while the second order statement additionally utilizes { Assumptions~\ref{ass:iid} and \ref{ass:reg}$(i)$} from which we obtain $\max_{i \neq j} {\E\!\left[y_i^2 \right]^2}/{D_{ij}^2} = O_p(\epsilon_n\inverse)$.
		
		Turning to a variance calculation for the second part of \eqref{eq:comp2} we have
		\begin{align}
		\V\!\left[\sum\nolimits_{i\neq j \neq k}^n a_{ijk}  {\boldsymbol x}_j'{\boldsymbol\beta} y_i^2 \varepsilon_k  \right]\! &=
		\sum\nolimits_{i \neq k}^n   a_{i\cdot k}^2 \E\!\left[ y_i^4 \varepsilon_k^2 \right]\! +  a_{i\cdot k} a_{k\cdot i} \E\!\left[ y_i^2 \varepsilon_{i} y_k^2 \varepsilon_k \right]\! \\
		&+ \sum\nolimits_{i \neq \ell \neq k}^n   a_{i\cdot k}  a_{\ell \cdot k} \E\!\left[ y_i^2 y_\ell^2 \varepsilon_k^2 \right]\! \\
		& \le 2 \max_{i,j} \E[y_i^4] \sigma_j^2 \sum\nolimits_{i , k}^n \! a_{i\cdot k}^2 + \max_i \sigma_i^2 \sum\nolimits_{k=1}^n \!\left( \sum\nolimits_{i=1}^n \!a_{i\cdot k} \E\!\left[ y_i^2 \right]\!\right)^2\!, \quad
		\label{eq:comp2part2}
		\end{align}
		where $a_{i \cdot k}= \sum\nolimits_{j \neq i,k} a_{ijk} {\boldsymbol x}_j'{\boldsymbol\beta}$.  From \eqref{eq:sumcheckM} we obtain the special case
		\begin{align}
		\sum\nolimits_{k=1}^n \check M_{jk,-ij} \check M_{\ell k,-i\ell}
		= \tfrac{\check M_{\ell j,-i\ell}}{D_{ij}/M_{ii}}
		= \tfrac{\check M_{j\ell,-ij}}{D_{i\ell}/M_{ii}}
		= \tfrac{M_{ii}\!\left( M_{ii}M_{j\ell} - M_{ij}M_{i\ell}\right)\!}{D_{ij}D_{i\ell}}
		\end{align}
		where a coarse bound on the absolute value of this expression is $2D_{ij}\inverse D_{i\ell}\inverse$. Utilizing this coarse bound, we immediately see that the first part of the variance in \eqref{eq:comp2part2} satisfies
		\begin{align}
		\tfrac{\sum\nolimits_{i , k}^n   a_{i\cdot k}^2}{{\V}_{0}\!\left[{\cal F} - \hat{E}_{\cal F} \right]^2} &\le \tfrac{2 \sum\nolimits_{i=1}^n \!\left(\sum\nolimits_{j\neq i} U_{ij} \abs{{\boldsymbol x}_j'{\boldsymbol\beta}} D_{ij}\inverse\right)^2}{{\V}_{0}\!\left[{\cal F} - \hat{E}_{\cal F} \right]^2}
		\le \max_{i\neq j}\tfrac{({\boldsymbol x}_i'{\boldsymbol\beta})^2}{D_{ij}^2}   \tfrac{2 \sum\nolimits_{i=1}^n \!\left(\sum\nolimits_{j\neq i} U_{ij} \right)^2}{{\V}_{0}\!\left[{\cal F} - \hat{E}_{\cal F} \right]^2}
		= O\!\left( \tfrac{\max_i B_{ii}}{\epsilon_n r}  \right)\!.
		\end{align}
		For the second part of \eqref{eq:comp2part2} we instead rely on the full generality of \eqref{eq:sumcheckM}
		\begin{align}
		\sum\nolimits_{k=1}^n \check M_{jk,-ij} \check M_{\ell k,-\imath\ell}
		&= \tfrac{M_{j\ell} M_{ii} M_{\imath\imath} + M_{i\imath} M_{ij} M_{\imath\ell} -  M_{\imath\imath} M_{ij} M_{i\ell} -  M_{ii} M_{\imath j} M_{\imath\ell}}{D_{ij} D_{\imath \ell}}.
		\end{align}
		When coupled with the observation that the eigenvalues of $M$ belong to $\{0,1\}$, this leads to
		\begin{align}
		\tfrac{\sum\nolimits_{k=1}^n \!\left( \sum\nolimits_{i=1}^n a_{i\cdot k}  \E\![ y_i^2 ]\!\right)^2}{{\V}_{0}\!\left[{\cal F} - \hat{E}_{\cal F} \right]^2}
		\le \max_{i \neq j} \tfrac{\E\![ y_i^2 ]^2 ({\boldsymbol x}_j'{\boldsymbol\beta})^2}{D_{ij}^{2}}   \tfrac{ 4\sum\nolimits_{i=1}^n  \!\left( \sum\nolimits_{j\neq i} U_{ij} \right)^2}{{\V}_{0}\!\left[{\cal F} - \hat{E}_{\cal F} \right]^2}  = O\!\left( \tfrac{\max_i B_{ii}}{\epsilon_n r}  \right)\!,
		\end{align}
		where the last relation follows from $\max_{i \neq j} D_{ij}^{-2}{\E\!\left[ y_i^2 \right]^2 ({\boldsymbol x}_j'{\boldsymbol\beta})^2} = O_p(\epsilon_n\inverse )$ which holds by { Assumptions~\ref{ass:iid} and \ref{ass:reg}$(i)$} and Assumption~\ref{ass:asympLeave3Out}.

		\noindent \textbf{Third component of $\hat{V}_{\cal F}$.}
		For the third component of \eqref{eq:Varsum} we similarly employ a decomposition
		\begin{align}\label{eq:comp3}
		\!\sum\nolimits_{i\neq j \neq k}^n\! b_{ijk}\!\left( {\boldsymbol x}_j'{\boldsymbol\beta} y_i \varepsilon_i y_k + {\boldsymbol x}_k'{\boldsymbol\beta} y_i \varepsilon_i \varepsilon_j  + y_i \varepsilon_i \varepsilon_j \varepsilon_k \right)\!
		\end{align}
		where the variance of the first part satisfies
		\begin{align}
		\V\!\left[\sum\nolimits_{i\neq j \neq k}^n b_{ijk} {\boldsymbol x}_j'{\boldsymbol\beta} y_i \varepsilon_i y_k\right]\!
		&= \sum\nolimits_{i \neq k}^n b_{i\cdot k}^2\V\!\left[ y_i \varepsilon_i y_k \right]\! + b_{i\cdot k}b_{k\cdot i}\mathbb{C}\!\left[y_i \varepsilon_i y_k,y_k \varepsilon_k y_i\right]\! \\
		&+ \sum\nolimits_{i\neq k \neq \ell}^n b_{i\cdot k}b_{\ell\cdot k}\mathbb{C}\!\left[y_i \varepsilon_i y_k,y_\ell \varepsilon_\ell y_k\right]\!+ b_{i\cdot k}b_{i\cdot \ell}\mathbb{C}\!\left[y_i \varepsilon_i y_k,y_i \varepsilon_i y_\ell\right]\! \\
		&+ 2\sum\nolimits_{i\neq k \neq \ell}^n b_{i\cdot k}b_{k\cdot \ell}\mathbb{C}\!\left[y_i \varepsilon_i y_k,y_k \varepsilon_k y_\ell\right]\!  \\
		&\le \max_{i,j} \E\!\left[y_i^2\right]\!\E\!\left[y_j^2 \varepsilon_j^2\right]\! 6 \sum\nolimits_{i , k}^n b_{i\cdot k}^2 \\
		&+ \max_{i} \!\left( \E\!\left[y_i^2\right]\! + \abs*{\mathbb{C}\!\left[y_i,y_i \varepsilon_i\right]\! }\right)\! \sum\nolimits_{k=1}^n\!\left( \sum\nolimits_{i =1}^n b_{i\cdot k} \sigma_i^2 \right)^2 \\
		&+ \max_{i} \!\left( \E\!\left[y_i^2 \varepsilon_i^2\right]\! + \abs*{\mathbb{C}\!\left[y_i,y_i \varepsilon_i\right]\! } \right)\! \sum\nolimits_{i=1}^n  b_{i\cdot \cdot}^2
		\end{align}
		for $b_{i \cdot k} = \sum_{j\neq i,k} b_{ijk} {\boldsymbol x}_j'{\boldsymbol\beta}$, $b_{i \cdot \cdot} = \sum_{k=1}^n b_{i\cdot k} {\boldsymbol x}_k'{\boldsymbol\beta}$ and we have used that $b_{iji}=0$. From the representation
		\begin{align}
			b_{i \cdot k} = -\check {\boldsymbol x}_i'{\boldsymbol\beta} V_{ik}  + \sum\nolimits_{j \neq i,k}  U_{ij} \check M_{jk,-ij} {\boldsymbol x}_j'{\boldsymbol\beta} - \sum\nolimits_{j \neq i} V_{ij}^2  \check M_{jk,-ij} {\boldsymbol x}_j'{\boldsymbol\beta}
		\end{align}
		and the previously derived bound $\abs*{\sum_{k=1}^n \check M_{jk,-ij} \check M_{\ell k,-i\ell}} = 2D_{ij}\inverse D_{i\ell}\inverse$, we immediately obtain
		\begin{align}
			\tfrac{\sum\nolimits_{i , k}^n b_{i\cdot k}^2}{{\V}_{0}\!\left[{\cal F} - \hat{E}_{\cal F} \right]^2}
			&\le \max_{i \neq j} \tfrac{({\boldsymbol x}_i'{\boldsymbol\beta})^2}{D_{ij}^2}   \tfrac{3 \sum\nolimits_{i=1}^n\!\left( \sum\nolimits_{j \neq i} U_{ij}  \right)^2
			+ 3\sum\nolimits_{i=1}^n\!\left( \sum\nolimits_{j \neq i} V_{ij}^2  \right)^2 }{{\V}_{0}\!\left[{\cal F} - \hat{E}_{\cal F} \right]^2}
			+\tfrac{3\sum_{i=1}^n (\check {\boldsymbol x}_i'{\boldsymbol\beta})^2 \sum_{j=1}^n V_{ij}^2}{{\V}_{0}\!\left[{\cal F} - \hat{E}_{\cal F} \right]^2}.
		\end{align}
		Since $\boldsymbol M^2 =\boldsymbol M$ and the largest eigenvalue of $\boldsymbol M \odot\boldsymbol M$ is bounded by one  (a consequence of the Gershgorin circle theorem), it follows that
		\begin{align}
			\tfrac{\sum\nolimits_{i=1}^n\!\left( \sum\nolimits_{j \neq i} V_{ij}^2  \right)^2 }{{\V}_{0}\!\left[{\cal F} - \hat{E}_{\cal F} \right]^2}
			\le \tfrac{16 \sum\nolimits_{i=1}^n B_{ii}^4/M_{ii}^4}{{\V}_{0}\!\left[{\cal F} - \hat{E}_{\cal F} \right]^2}
			= O\!\left( \tfrac{\max_i B_{ii}}{r}  \right)\!,
		\end{align}
		and since ${\V}_{0}\!\big[{\cal F} - \hat{E}_{\cal F} \big] \ge \min_i \sigma_i^2 \sum_{i=1}^n (\check {\boldsymbol x}_i'{\boldsymbol\beta})^2$ we similarly have that
		\begin{align}
			 \tfrac{\sum_{i=1}^n (\check {\boldsymbol x}_i'{\boldsymbol\beta})^2 \sum_{j=1}^n V_{ij}^2}{{\V}_{0}\!\left[{\cal F} - \hat{E}_{\cal F} \right]^2} \le \max_i M_{ii}^{-2}   \tfrac{2\sum_{i=1}^n (\check {\boldsymbol x}_i'{\boldsymbol\beta})^2}{{\V}_{0}\!\left[{\cal F} - \hat{E}_{\cal F} \right]^2}
			 =  o_p\!\left( 1 \right)\!.
		\end{align}
		Turning to the second part of this variance we reuse the expression in \eqref{eq:sumcheckM} to derive the bound
		\begin{align}			
			\tfrac{\sum\nolimits_{k=1}^n\!\left( \sum\nolimits_{i =1}^n b_{i\cdot k} \sigma_i^2 \right)^2}{{\V}_{0}\!\left[{\cal F} - \hat{E}_{\cal F} \right]^2}
			& \le \max_{i \neq j} \tfrac{({\boldsymbol x}_i'{\boldsymbol\beta})^2 \sigma_j^4}{D_{ij}^2}   \tfrac{12 \sum\nolimits_{i=1}^n\!\left( \sum\nolimits_{j \neq i} U_{ij}  \right)^2
			+ 12\sum\nolimits_{i=1}^n\!\left( \sum\nolimits_{j \neq i} V_{ij}^2  \right)^2 }{{\V}_{0}\!\left[{\cal F} - \hat{E}_{\cal F} \right]^2}
			+\max_{i,j} \tfrac{\sigma_i^4}{M_{jj}^2}   \tfrac{12\sum_{i=1}^n (\check {\boldsymbol x}_i'{\boldsymbol\beta})^2}{{\V}_{0}\!\left[{\cal F} - \hat{E}_{\cal F} \right]^2} \\
			&=  O_p\!\left( \tfrac{\max_i B_{ii}}{\epsilon_n r} \right)\! +  o_p\!\left( 1 \right)\!.
		\end{align}
		Finally, since $\sum_{k=1}^n \check M_{jk,-ij} {\boldsymbol x}_k = 0$, we have that $b_{i\cdot \cdot} = (\check {\boldsymbol x}_i'{\boldsymbol\beta})^2 - \sum_{j \neq i} U_{ij} ({\boldsymbol x}_j'{\boldsymbol\beta})^2$ so
		\begin{align}
			\tfrac{\sum_{i=1}^n b_{i\cdot \cdot}^2}{{\V}_{0}\!\left[{\cal F} - \hat{E}_{\cal F} \right]^2} \le
			\tfrac{2\sum_{i=1}^n (\check {\boldsymbol x}_i'{\boldsymbol\beta})^4}{{\V}_{0}\!\left[{\cal F} - \hat{E}_{\cal F} \right]^2} + \max_{i} ({\boldsymbol x}_i'{\boldsymbol\beta})^2   \tfrac{2\sum\nolimits_{i=1}^n\!\left( \sum\nolimits_{j \neq i} U_{ij}  \right)^2}{{\V}_{0}\!\left[{\cal F} - \hat{E}_{\cal F} \right]^2}
			=  o_p\!\left( 1 \right)\!
		\end{align}
		where the order statement regarding the first part follows from { Assumption~\ref{ass:reg}$(ii)$} and the derivation in the asymptotic normality part of this proof.

		For the second part of \eqref{eq:comp3} we have
		\begin{align}
		\V\!\left[\sum\nolimits_{i\neq j \neq k}^n b_{ijk} {\boldsymbol x}_k'{\boldsymbol\beta} y_i \varepsilon_i \varepsilon_j\right ] &= \sum\nolimits_{i\neq j}^n b_{ij\cdot}^2 \E\!\left[y_i^2 \varepsilon_i^2 \varepsilon_j^2\right]\! + b_{ij\cdot} b_{ji\cdot}\E\!\left[y_i \varepsilon_i^2 y_j \varepsilon_j^2\right]\! \\
		& + \sum\nolimits_{i\neq j \neq k}^n  b_{ij\cdot} b_{k j\cdot} \E\!\left[y_i \varepsilon_i y_k \varepsilon_k \varepsilon_j^2\right]\! \\
		&\le \max_{i,j} \E\!\left[y_i^2 \varepsilon_i^2\right]\!\! \sigma_j^2 2\sum\nolimits_{i\neq j}^n b_{ij\cdot}^2 +  \max_{i} \sigma_i^2 \sum\nolimits_{j=1}^n\!\left(\sum\nolimits_{i\neq j} b_{ij\cdot} \sigma_i^2\right)^2 \quad
		\label{eq:comp3part2}
		\end{align}
		for $b_{ij\cdot} = \sum_{k\neq j} b_{ijk} {\boldsymbol x}_k'{\boldsymbol\beta}$. We have $b_{ij\cdot} =  - U_{ij}  {\boldsymbol x}_j'{\boldsymbol\beta} + \check {\boldsymbol x}_i'{\boldsymbol\beta} V_{ij}$, which leads to
		\begin{align}
		\tfrac{\sum\nolimits_{i\neq j}^n b_{ij\cdot}^2}{{\V}_{0}\!\left[{\cal F} - \hat{E}_{\cal F} \right]^2}
		&\le \max_{i} ({\boldsymbol x}_i'{\boldsymbol\beta})^2 \tfrac{ 2 \sum\nolimits_{i\neq j}^n  U_{ij}^2 }{{\V}_{0}\!\left[{\cal F} - \hat{E}_{\cal F} \right]^2} +\max_{i} \tfrac{1}{M_{ii}^2} \tfrac{4\sum_{i=1}^n (\check {\boldsymbol x}_i'{\boldsymbol\beta})^2}{{\V}_{0}\!\left[{\cal F} - \hat{E}_{\cal F} \right]^2}
		=  O_p\!\left( \tfrac{\max_i B_{ii}}{\epsilon_n r} \right)\! +  o_p\!\left( 1  \right)\!,\\
		\tfrac{\sum\nolimits_{j=1}^n\!\left( \sum\nolimits_{i \neq j} b_{ij\cdot} \sigma_i^2 \right)^2}{{\V}_{0}\!\left[{\cal F} - \hat{E}_{\cal F} \right]^2}
		&\le \max_{i,j} ({\boldsymbol x}_i'{\boldsymbol\beta})^2 \sigma_j^4  \tfrac{ 8 \sum\nolimits_{i=1}^n\!\left( \sum\nolimits_{j \neq i} U_{ij}  \right)^2}{{\V}_{0}\!\left[{\cal F} - \hat{E}_{\cal F} \right]^2} +\max_{i,j} \tfrac{\sigma_i^4}{M_{jj}^2} \tfrac{8\sum_{i=1}^n (\check {\boldsymbol x}_i'{\boldsymbol\beta})^2}{{\V}_{0}\!\left[{\cal F} - \hat{E}_{\cal F} \right]^2}
		=  o_p\!\left( 1  \right)\!.
		\end{align}

		Turning to the third and final part of \eqref{eq:comp3} we have
		\begin{align}
		\V\!\left[\sum\nolimits_{i\neq j \neq k}^n b_{ijk} y_i \varepsilon_i \varepsilon_j \varepsilon_k\right]\! &= \sum\nolimits_{i\neq j \neq k}^n b_{ijk}\!\left( b_{ijk} + b_{ikj}\right)\! \E\!\left[y_i^2 \varepsilon_i^2 \varepsilon_j^2 \varepsilon_k^2 \right]\! \\
		&+ \sum\nolimits_{i\neq j \neq k}^n\!\left( b_{ijk}\!\left(b_{jik} + b_{jki}\right)\! + b_{ikj}\!\left(b_{jik} + b_{jki}\right)\!\right)\! \E\!\left[y_i \varepsilon_i^2 y_j \varepsilon_j^2 \varepsilon_k^2 \right]\! \\
		&+ \sum\nolimits_{i\neq j \neq k \neq \ell}^n b_{ijk}\!\left( b_{\ell jk} + b_{\ell kj}\right)\! \E\!\left[y_i \varepsilon_i y_\ell \varepsilon_\ell \varepsilon_j^2 \varepsilon_k^2 \right]\! \\
		& \le \max_{i,j} \E\!\left[y_i^2 \varepsilon_i^2 \right]\! \sigma_j^4  8 \sum\nolimits_{i\neq j \neq k}^n  b_{ijk}^2
		+ \max_{i} \sigma_i^2 2\sum\nolimits_{j\neq k}^n \!\left( \sum\nolimits_{i \neq j} b_{ijk} \sigma_i^2 \right)^2.
		\end{align}
		By reusing the inequalities $\abs{\sum\nolimits_{k=1}^n \check M_{jk,-ij} \check M_{jk,-\imath j}} \le  \tfrac{4}{D_{ij} D_{\imath j}}$ and $\abs{\sum\nolimits_{k=1}^n \check M_{jk,-ij}^2 } \le  \tfrac{1}{D_{ij}}$, one can show that
		\begin{align}
			\tfrac{ \sum\nolimits_{i\neq j \neq k}^n  b_{ijk}^2 }{{\V}_{0}\!\left[{\cal F} - \hat{E}_{\cal F} \right]^2}
			&\le \max_{i \neq j} \tfrac{ 1}{D_{ij}} \tfrac{12 \sum\nolimits_{i=1}^n\!\left( \sum\nolimits_{j \neq i} U_{ij}  \right)^2
			+ 12\sum\nolimits_{i=1}^n\!\left( \sum\nolimits_{j \neq i} V_{ij}^2  \right)^2 }{{\V}_{0}\!\left[{\cal F} - \hat{E}_{\cal F} \right]^2}
			+\max_{i} \tfrac{1}{M_{ii}^2} \tfrac{48\sum_{i=1}^n B_{ii}^2/M_{ii}^2}{{\V}_{0}\!\left[{\cal F} - \hat{E}_{\cal F} \right]^2} \\
			&=  O_p\!\left( \tfrac{\max_i B_{ii}}{ r} \right)\!,
			\label{eq:comp3part3} \\
			\tfrac{\sum\nolimits_{j\neq k}^n \!\left( \sum\nolimits_{i \neq j} b_{ijk} \sigma_i^2 \right)^2}{{\V}_{0}\!\left[{\cal F} - \hat{E}_{\cal F} \right]^2}
			&\le \max_{i \neq j} \tfrac{ \sigma_i^4}{D_{ij}^2} \tfrac{12 \sum\nolimits_{i=1}^n\!\left( \sum\nolimits_{j \neq i} U_{ij}  \right)^2
			+ 12\sum\nolimits_{i=1}^n\!\left( \sum\nolimits_{j \neq i} V_{ij}^2  \right)^2 }{{\V}_{0}\!\left[{\cal F} - \hat{E}_{\cal F} \right]^2}
			+\max_{i,j} \tfrac{\sigma_i^4}{M_{jj}^2} \tfrac{48\sum_{i=1}^n B_{ii}^2/M_{ii}^2}{{\V}_{0}\!\left[{\cal F} - \hat{E}_{\cal F} \right]^2} \\
			&=  O_p\!\left( \tfrac{\max_i B_{ii}}{ r} \right)\!.
		\end{align}

		\noindent \textbf{Fourth component of $\hat{V}_{\cal F}$.}
		The fourth and final component of \eqref{eq:Varsum} we can rewrite as
		\begin{align}\label{eq:comp4}
		\!\sum\nolimits_{i\neq j \neq k \neq \ell}^n\! b_{ijk\ell}\!\left( {\boldsymbol x}_k'{\boldsymbol\beta} {\boldsymbol x}_j'{\boldsymbol\beta} y_i \varepsilon_\ell + {\boldsymbol x}_k'{\boldsymbol\beta} y_i \varepsilon_j \varepsilon_\ell + {\boldsymbol x}_j'{\boldsymbol\beta} y_i \varepsilon_k \varepsilon_\ell + {\boldsymbol x}_i'{\boldsymbol\beta} \varepsilon_j \varepsilon_k \varepsilon_\ell + \varepsilon_i \varepsilon_j \varepsilon_k \varepsilon_\ell \right)\!
		\end{align}
		where the variance of the first term satisfies
		\begin{align}
		\V\!\left[\!\sum\nolimits_{i\neq j \neq k \neq \ell}^n\! b_{ijk\ell}{\boldsymbol x}_k'{\boldsymbol\beta} {\boldsymbol x}_j'{\boldsymbol\beta} y_i \varepsilon_\ell\right]\! &= \sum\nolimits_{i \neq \ell}^n  b_{i\cdot \cdot \ell}^2 \E\!\left[ y_i^2 \varepsilon_\ell^2 \right]\! +  b_{i\cdot \cdot \ell} b_{\ell\cdot \cdot i} \E\!\left[ y_i \varepsilon_{i} y_\ell \varepsilon_\ell \right]\! \\
		&+ \sum\nolimits_{i \neq \ell \neq k}^n  b_{i\cdot \cdot \ell} b_{k\cdot \cdot \ell} \E\!\left[ y_i y_k \varepsilon_\ell^2 \right]\! \\
		&\le \max_{i,j} \E\!\left[y_i^2\right]\! \sigma_j^2 2 \sum\nolimits_{i \neq \ell}^n  b_{i\cdot \cdot \ell}^2
		+ \max_{i} \sigma_i^2 \sum\nolimits_{\ell=1}^n \!\left(\sum\nolimits_{i \neq \ell}^n  b_{i\cdot \cdot \ell} {\boldsymbol x}_i'{\boldsymbol\beta}\right)^2
		\end{align}
		for $b_{i\cdot \cdot \ell} = \sum_{j\neq i} b_{ij \cdot \ell} {\boldsymbol x}_j'{\boldsymbol\beta}$ and $b_{ij \cdot \ell} = \sum_{k \neq i,j} b_{ijk\ell}{\boldsymbol x}_k'{\boldsymbol\beta}$ and we have used that $b_{ijkk}=b_{ijkj}=0$. To further upper bound the first part of this final expression, we rely on \eqref{eq:sumcheckM3} which yields
		\begin{align}
			\sum\nolimits_{\ell = 1}^n \check M_{i\ell,-ijk} \check M_{i\ell,-i \jmath \kappa}
			&= \tfrac{1 - \check M_{i \jmath,-i j k} \check M_{\jmath i,-\jmath \kappa} - \check M_{i \kappa,-i j k} \check M_{\kappa i,-\jmath \kappa}  }{D_{i j \kappa}/D_{\jmath \kappa}}.
		\end{align}
		
		Using this identity in conjunction with Cauchy-Schwarz and $M$ having all its eigenvalues in $\{0,1\}$ we obtain that
		\begin{align}
			\tfrac{\sum\nolimits_{i \neq \ell}^n  b_{i\cdot \cdot \ell}^2 }{{\V}_{0}\!\left[{\cal F} - \hat{E}_{\cal F} \right]^2}
			&\le \tfrac{\sum\nolimits_{i=1}^n  b_{i\cdot \cdot}^2 }{{\V}_{0}\!\left[{\cal F} - \hat{E}_{\cal F} \right]^2} + \max_{i \neq j \neq k \neq i} \tfrac{({\boldsymbol x}_i'{\boldsymbol\beta})^2 ({\boldsymbol x}_j'{\boldsymbol\beta})^2}{D_{ijk}^2} \tfrac{12\sum\nolimits_{i\neq j \neq k}^n \!  b_{i jk}^2 }{{\V}_{0}\!\left[{\cal F} - \hat{E}_{\cal F} \right]^2} \\
			&+ \tfrac{2\sum\nolimits_{i=1}^n  \left( \sum\nolimits_{j \neq i} \sum\nolimits_{k \neq i,j} b_{ijk} {\boldsymbol x}_j'{\boldsymbol\beta} {\boldsymbol x}_k'{\boldsymbol\beta} D_{jk}/D_{ijk}\right)^2 }{{\V}_{0}\!\left[{\cal F} - \hat{E}_{\cal F} \right]^2} \\
			&+\max_{i} ({\boldsymbol x}_i'{\boldsymbol\beta})^2 \tfrac{6 \sum\nolimits_{i\neq j}^n  \!\left( \sum\nolimits_{k \neq i,j} b_{i jk} {\boldsymbol x}_k'{\boldsymbol\beta} M_{kk}/D_{ijk} \right)^2}{{\V}_{0}\!\left[{\cal F} - \hat{E}_{\cal F} \right]^2}  \\
			&+ \max_{i} ({\boldsymbol x}_i'{\boldsymbol\beta})^2 \tfrac{6\sum\nolimits_{i\neq k}^n  \!\left( \sum\nolimits_{j \neq i,k} b_{i jk} {\boldsymbol x}_j'{\boldsymbol\beta} M_{jj}/D_{ijk} \right)^2}{{\V}_{0}\!\left[{\cal F} - \hat{E}_{\cal F} \right]^2}.
		\end{align}
		The first two terms of this bound were shown to be $o_p(1)$ in the treatment of the third component of \eqref{eq:Varsum}, so here we focus on the latter three. The identities
		\begin{align}
			\tfrac{M_{jj}}{D_{ijk}} &= \tfrac{1}{D_{ik}} + \tfrac{M_{ji} \check M_{ij,-ik}+ M_{jk} \check M_{kj,-ik}}{D_{ijk}}, \\
			\tfrac{M_{kk}}{D_{ijk}} &= \tfrac{1}{D_{ij}} + \tfrac{M_{ki} \check M_{ik,-ij}+ M_{kj} \check M_{jk,-ij}}{D_{ijk}}, \\
			\tfrac{D_{jk}}{D_{ijk}} &= \tfrac{1}{M_{ii}} + \tfrac{M_{ij}^2 M_{kk} + M_{ik}^2 M_{jj} - 2 M_{ij} M_{ik} M_{jk}}{M_{ii}D_{ijk}},
		\end{align}
		immediately leads to
		\begin{align}
			\tfrac{\sum\nolimits_{i\neq j}^n  \!\left( \sum\nolimits_{k \neq i,j} b_{i jk} {\boldsymbol x}_k'{\boldsymbol\beta} M_{kk}/D_{ijk} \right)^2}{{\V}_{0}\!\left[{\cal F} - \hat{E}_{\cal F} \right]^2}
			& \le \max_{i \neq j} \tfrac{1}{D_{ij}^2} \tfrac{2\sum\nolimits_{i\neq j}^n  b_{ij\cdot}^2}{{\V}_{0}\!\left[{\cal F} - \hat{E}_{\cal F} \right]^2}
			+ \max_{i \neq j \neq k \neq i} \tfrac{({\boldsymbol x}_i'{\boldsymbol\beta})^2}{D_{ij} D_{ijk}^2} \tfrac{4\sum\nolimits_{i\neq j \neq k}^n  b_{i jk}^2}{{\V}_{0}\!\left[{\cal F} - \hat{E}_{\cal F} \right]^2} =  o_p\!\left( 1 \right)\! \\
			\tfrac{\sum\nolimits_{i\neq k}^n  \!\left( \sum\nolimits_{j \neq i,k} b_{i jk} {\boldsymbol x}_k'{\boldsymbol\beta} M_{kk}/D_{ijk} \right)^2}{{\V}_{0}\!\left[{\cal F} - \hat{E}_{\cal F} \right]^2}
			& \le \max_{i \neq j} \tfrac{1}{D_{ij}^2} \tfrac{2\sum\nolimits_{i\neq k}^n  b_{i \cdot k}^2}{{\V}_{0}\!\left[{\cal F} - \hat{E}_{\cal F} \right]^2}
			+ \max_{i \neq j \neq k \neq i} \tfrac{({\boldsymbol x}_i'{\boldsymbol\beta})^2}{D_{ij} D_{ijk}^2} \tfrac{4\sum\nolimits_{i\neq j \neq k}^n  b_{i jk}^2}{{\V}_{0}\!\left[{\cal F} - \hat{E}_{\cal F} \right]^2} =  o_p\!\left( 1 \right)\!
		\end{align}
		and
		\begin{align}
			\tfrac{\sum\nolimits_{i=1}^n  \left( \sum\nolimits_{j \neq i} \sum\nolimits_{k \neq i,j} b_{ijk} {\boldsymbol x}_j'{\boldsymbol\beta} {\boldsymbol x}_k'{\boldsymbol\beta} D_{jk}/D_{ijk}\right)^2}{{\V}_{0}\!\left[{\cal F} - \hat{E}_{\cal F} \right]^2}
			& \le \max_{i} \tfrac{1}{M_{ii}^2} \tfrac{16\sum\nolimits_{i \neq j \neq k}^n  b_{i j k}^2}{{\V}_{0}\!\left[{\cal F} - \hat{E}_{\cal F} \right]^2}
			+ \max_{i,j} {({\boldsymbol x}_i'{\boldsymbol\beta})^2({\boldsymbol x}_j'{\boldsymbol\beta})^2} \tfrac{4\sum\nolimits_{i=1}^n  b_{i \cdot \cdot}^2}{{\V}_{0}\!\left[{\cal F} - \hat{E}_{\cal F} \right]^2} \\
			&+ \max_{i,j} \tfrac{({\boldsymbol x}_i'{\boldsymbol\beta})^2}{M_{jj}} \tfrac{4\sum\nolimits_{i\neq j}^n  \!\left( \sum\nolimits_{k \neq i,j} b_{i jk} {\boldsymbol x}_k'{\boldsymbol\beta} M_{kk}/D_{ijk} \right)^2}{{\V}_{0}\!\left[{\cal F} - \hat{E}_{\cal F} \right]^2} \\
			&+ \max_{i,j} \tfrac{({\boldsymbol x}_i'{\boldsymbol\beta})^2}{M_{jj}} \tfrac{4\sum\nolimits_{i\neq k}^n  \!\left( \sum\nolimits_{j \neq i,k} b_{i jk} {\boldsymbol x}_j'{\boldsymbol\beta} M_{jj}/D_{ijk} \right)^2}{{\V}_{0}\!\left[{\cal F} - \hat{E}_{\cal F} \right]^2} =  o_p\!\left( 1  \right),\!
		\end{align}
		where the order statements were established in the treatment of the third component of \eqref{eq:Varsum}.
		To deal with $\sum\nolimits_{\ell=1}^n \!\left(\sum\nolimits_{i \neq \ell}^n  b_{i\cdot \cdot \ell} {\boldsymbol x}_i'{\boldsymbol\beta}\right)^2$, we use the full generality of \eqref{eq:sumcheckM3}
		\begin{align}
			 \sum\nolimits_{\ell=1}^n \check M_{i\ell,-ijk} \check M_{\imath \ell,-\imath \jmath \kappa}
			 = \tfrac{\check M_{i \imath,-ijk} - \check M_{i \jmath,-i j k} \check M_{\jmath \imath,-\jmath \kappa}  - \check M_{i \kappa,-ijk} \check M_{\kappa \imath,-\jmath \kappa} }{D_{\imath \jmath \kappa}/D_{\jmath \kappa}}
		\end{align}
		from which we derive that
		\begin{align}
			\!\left(\sum\nolimits_{i \neq \ell}^n  b_{i\cdot \cdot \ell} {\boldsymbol x}_i'{\boldsymbol\beta}\right)^2
			&\le \max_{i,j,k} \tfrac{({\boldsymbol x}_i'{\boldsymbol\beta})^2 ({\boldsymbol x}_j'{\boldsymbol\beta})^2 ({\boldsymbol x}_k'{\boldsymbol\beta})^2}{D_{ijk}^2} \tfrac{13 \sum\nolimits_{i \neq j \neq k}^n  b_{i j k}^2}{{\V}_{0}\!\left[{\cal F} - \hat{E}_{\cal F} \right]^2} \\
			&+\max_{i} ({\boldsymbol x}_i'{\boldsymbol\beta})^2 \tfrac{3 \sum\nolimits_{i=1}^n  \left( \sum\nolimits_{j \neq i} \sum\nolimits_{k \neq i,j} b_{ijk} {\boldsymbol x}_j'{\boldsymbol\beta} {\boldsymbol x}_k'{\boldsymbol\beta} D_{jk}/D_{ijk}\right)^2}{{\V}_{0}\!\left[{\cal F} - \hat{E}_{\cal F} \right]^2} \\
			&+\max_{i,j} ({\boldsymbol x}_i'{\boldsymbol\beta})^2 ({\boldsymbol x}_j'{\boldsymbol\beta})^2 \tfrac{7 \sum\nolimits_{i\neq j}^n  \!\left( \sum\nolimits_{k \neq i,j} b_{i jk} {\boldsymbol x}_k'{\boldsymbol\beta} M_{kk}/D_{ijk} \right)^2}{{\V}_{0}\!\left[{\cal F} - \hat{E}_{\cal F} \right]^2}  \\
			&+ \max_{i,j} ({\boldsymbol x}_i'{\boldsymbol\beta})^2 ({\boldsymbol x}_j'{\boldsymbol\beta})^2 \tfrac{7\sum\nolimits_{i\neq k}^n  \!\left( \sum\nolimits_{j \neq i,k} b_{i jk} {\boldsymbol x}_j'{\boldsymbol\beta} M_{jj}/D_{ijk} \right)^2}{{\V}_{0}\!\left[{\cal F} - \hat{E}_{\cal F} \right]^2}
			=  o_p\!\left( 1  \right)\!.
		\end{align}

		For the second part of \eqref{eq:comp4}, we have the following variance expression and bound
		\begin{align}
			\V\!\!\left[\!\sum\nolimits_{i\neq j \neq k \neq \ell}^n\! b_{ijk\ell}{\boldsymbol x}_k'{\boldsymbol\beta} y_i \varepsilon_j \varepsilon_\ell \right]\!
			&=  \sum\nolimits_{i\neq j \neq \ell}^n  b_{ij \cdot \ell}\!\left(  b_{ij \cdot \ell} +  b_{i\ell \cdot j}\right)\! \E\!\left[y_i^2 \varepsilon_j^2 \varepsilon_\ell^2 \right]\! \\
			&+ \sum\nolimits_{i\neq j \neq \ell}^n\!\left(  b_{ij\cdot\ell}\!\left( b_{ji\cdot\ell} +  b_{j\ell \cdot i}\right)\! +  b_{i\ell\cdot j}\!\left( b_{ji\cdot\ell} +  b_{j\ell \cdot i}\right)\!\right)\! \E\!\left[\varepsilon_i^2 \varepsilon_j^2 \varepsilon_\ell^2 \right]\! \\
			&+ \sum\nolimits_{i\neq j \neq k \neq \ell}^n  b_{ij\cdot \ell}\!\left(   b_{k j \cdot \ell} +  b_{k\ell \cdot j}\right)\! \E\!\left[y_i y_k \varepsilon_j^2 \varepsilon_\ell^2 \right]\! \\
			& \le  \max_{i,j} \E\!\left[y_i^2 \right]\! \sigma_j^4 8\!\sum\nolimits_{i\neq j \neq \ell}^n \! b_{ij \cdot \ell}^2
			+ \max_{j} \sigma_j^4 2  \sum\nolimits_{j \neq  \ell}^n \!\left( \sum\nolimits_{i \neq j} \! b_{ij \cdot \ell} {\boldsymbol x}_i'{\boldsymbol\beta} \right)^2
		\end{align}
		and we have used that $b_{i j k i} = b_{i j k}$. For this variance bound we utilize \eqref{eq:sumcheckM3} to obtain
		\begin{align}
		\sum\nolimits_{\ell = 1}^n \check M_{i\ell,-ijk} \check M_{\imath \ell,-\imath j \kappa}
		&= \tfrac{\check M_{i \imath,-ijk} - \check M_{i \kappa,-i j k} \check M_{\kappa \imath,-j \kappa}  }{D_{\imath j \kappa}/D_{j \kappa}}, \\
		\sum\nolimits_{\ell = 1}^n \check M_{i\ell,-ijk} \check M_{i\ell,-i j \kappa}
		&= \tfrac{1 - \check M_{i \kappa,-i j k} \check M_{\kappa i,-j \kappa}  }{D_{i j \kappa}/D_{\jmath \kappa}}.
%		\shortintertext{where}
%		\tfrac{D_{j k}}{D_{i j k}} &= \tfrac{M_{jj}}{D_{ij}} + \tfrac{M_{ik}\!\left(M_{jj}^2M_{ik} - 2M_{ij}M_{jk}M_{jj}\right)\! + M_{ij}^2 M_{kj}D_{ij}}{D_{ij}D_{ijk}} .
		\end{align}
		When combined with Cauchy-Schwarz and $\boldsymbol M$ having its eigenvalues in $\{0,1\}$, we therefore obtain the further bounds
		\begin{align}
		\tfrac{\sum\nolimits_{i\neq j \neq \ell}^n \!  b_{i j\cdot \ell}^2}{{\V}_{0}\!\left[{\cal F} - \hat{E}_{\cal F} \right]^2}
		&\le \tfrac{\sum\nolimits_{i\neq j}^n \!  b_{i j\cdot}^2}{{\V}_{0}\!\left[{\cal F} - \hat{E}_{\cal F} \right]^2}
		+ \max_{i \neq j \neq  k \neq i}  \tfrac{({\boldsymbol x}_i'{\boldsymbol\beta})^2}{D_{ijk}^2} \tfrac{5 \sum\nolimits_{i\neq j \neq k}^n \!  b_{i jk}^2}{{\V}_{0}\!\left[{\cal F} - \hat{E}_{\cal F} \right]^2}
		+ \tfrac{3\sum\nolimits_{i\neq j}^n \!\left( \sum\nolimits_{k \neq i,j} \! b_{i jk} {\boldsymbol x}_k'{\boldsymbol\beta} M_{kk}/D_{ijk} \right)^2}{{\V}_{0}\!\left[{\cal F} - \hat{E}_{\cal F} \right]^2}
		=  o_p\!\left( 1 \right)\!
		\end{align}
		and
		\begin{align}
		\tfrac{\sum\nolimits_{j \neq  \ell}^n \!\left( \sum\nolimits_{i \neq j} b_{ij \cdot \ell} {\boldsymbol x}_i'{\boldsymbol\beta} \right)^2}{{\V}_{0}\!\left[{\cal F} - \hat{E}_{\cal F} \right]^2}
		&\le \max_{i \neq j \neq  k \neq i} \tfrac{({\boldsymbol x}_i'{\boldsymbol\beta})^2 ({\boldsymbol x}_j'{\boldsymbol\beta})^2}{D_{ijk}^2} \tfrac{11 \sum\nolimits_{i\neq j \neq k}^n  b_{i jk}^2}{{\V}_{0}\!\left[{\cal F} - \hat{E}_{\cal F} \right]^2}   \\
		&+ \max_{i} ({\boldsymbol x}_i'{\boldsymbol\beta})^2 \tfrac{7\sum\nolimits_{i\neq j}^n  \!\left( \sum\nolimits_{k \neq i,j} b_{i jk} {\boldsymbol x}_k'{\boldsymbol\beta} M_{kk}/D_{ijk} \right)^2}{{\V}_{0}\!\left[{\cal F} - \hat{E}_{\cal F} \right]^2}
		=  o_p\!\left( 1 \right)\!
		\end{align}
		where the order statement follows from arguments given for the third component of \eqref{eq:Varsum} and the first part of \eqref{eq:comp4}.

		The variance of the third part of \eqref{eq:comp4} satisfies
		\begin{align}
			\V\!\!\left[\!\sum\nolimits_{i\neq j \neq k \neq \ell}^n\! b_{ijk\ell}{\boldsymbol x}_j'{\boldsymbol\beta} y_i \varepsilon_k \varepsilon_\ell \right]\! &=  \sum\nolimits_{i\neq k \neq \ell}^n b_{i \cdot k \ell}\!\left(  b_{i \cdot k \ell} +  b_{i\cdot \ell  k}\right)\! \E\!\left[y_i^2 \varepsilon_k^2 \varepsilon_\ell^2 \right]\! \\
			&+ \sum\nolimits_{i\neq k \neq \ell}^n\!\left( b_{i\cdot k\ell}\!\left( b_{k\cdot i\ell} +  b_{k\cdot\ell  i}\right)\! +  b_{i\cdot\ell k}\!\left( b_{k\cdot i\ell} +  b_{k\cdot\ell  i}\right)\!\right)\! \E\!\left[\varepsilon_i^2 \varepsilon_k^2 \varepsilon_\ell^2 \right]\! \\
			&+ \sum\nolimits_{i\neq j \neq k \neq \ell}^n  b_{i\cdot k \ell}\!\left(   b_{j \cdot k  \ell} +  b_{j\cdot\ell  k}\right)\! \E\!\left[y_i y_j \varepsilon_k^2 \varepsilon_\ell^2 \right]\! \\
			& \le  \max_{i,j} \E\!\left[y_i^2 \right]\! \sigma_j^4 8\!\sum\nolimits_{i\neq k \neq \ell}^n \! b_{i \cdot k \ell}^2
			+ \max_{j} \sigma_j^4 2 \! \sum\nolimits_{k \neq \ell}^n \!\left( \sum\nolimits_{i \neq k} \! b_{i \cdot k \ell} {\boldsymbol x}_i'{\boldsymbol\beta} \right)^2,
		\end{align}
		where $b_{i \cdot k \ell} = \sum_{j \neq i,k} b_{ijk\ell}{\boldsymbol x}_j'{\boldsymbol\beta}$. In complete analogy with the preceding argument, we use \eqref{eq:sumcheckM3} to obtain
		\begin{align}
			\sum\nolimits_{\ell = 1}^n \check M_{i\ell,-ijk} \check M_{\imath \ell,-\imath \jmath k}
			&= \tfrac{\check M_{i \imath,-ijk} - \check M_{i \jmath,-i j k} \check M_{\jmath \imath,-\jmath k }}{D_{\imath \jmath k}/D_{\jmath k}}, \\
			\sum\nolimits_{\ell = 1}^n \check M_{i\ell,-ijk} \check M_{i\ell,-i \jmath k}
			&= \tfrac{1 - \check M_{i \jmath,-i j k} \check M_{\jmath i,-\jmath k}  }{D_{i \jmath k}/D_{\jmath k}},
%			\shortintertext{where}
%					\tfrac{D_{j k}}{D_{i j k}} &= \tfrac{M_{kk}}{D_{ik}} + \tfrac{M_{ij}\!\left(M_{kk}^2M_{ij} - 2M_{ik}M_{jk}M_{kk}\right)\! + M_{ik}^2 M_{kj}D_{ik}}{D_{ik}D_{ijk}} .
		\end{align}
		which leads to
		\begin{align}
			\tfrac{\sum\nolimits_{i\neq k \neq \ell}^n  b_{i \cdot k \ell}^2}{{\V}_{0}\!\left[{\cal F} - \hat{E}_{\cal F} \right]^2}
			&\le \tfrac{\sum\nolimits_{i\neq k}^n  b_{i\cdot k}^2}{{\V}_{0}\!\left[{\cal F} - \hat{E}_{\cal F} \right]^2} + \max_{i \neq j \neq k \neq i} \tfrac{({\boldsymbol x}_i'{\boldsymbol\beta})^2}{D_{ijk}^2} \tfrac{5\sum\nolimits_{i\neq k \neq j}^n  b_{ijk}^2}{{\V}_{0}\!\left[{\cal F} - \hat{E}_{\cal F} \right]^2}
			+ \tfrac{3\sum\nolimits_{i\neq k}^n \!\left( \sum\nolimits_{j \neq i,k} \! b_{i jk} {\boldsymbol x}_j'{\boldsymbol\beta} M_{jj}/D_{ijk} \right)^2}{{\V}_{0}\!\left[{\cal F} - \hat{E}_{\cal F} \right]^2}
			=  o_p\!\left( 1 \right)\!
		\end{align}
		and
		\begin{align}
		\tfrac{\sum\nolimits_{k \neq \ell}^n \!\left( \sum\nolimits_{i \neq k} b_{i \cdot k \ell} {\boldsymbol x}_i'{\boldsymbol\beta} \right)^2 }{{\V}_{0}\!\left[{\cal F} - \hat{E}_{\cal F} \right]^2}
		&\le \max_{i \neq j \neq  k \neq i} \tfrac{({\boldsymbol x}_i'{\boldsymbol\beta})^2 ({\boldsymbol x}_j'{\boldsymbol\beta})^2}{D_{ijk}^2} \tfrac{11 \sum\nolimits_{i\neq j \neq k}^n  b_{i jk}^2}{{\V}_{0}\!\left[{\cal F} - \hat{E}_{\cal F} \right]^2}   \\
		&+ \max_{i} ({\boldsymbol x}_i'{\boldsymbol\beta})^2 \tfrac{7\sum\nolimits_{i\neq k}^n  \!\left( \sum\nolimits_{j \neq i,k} b_{i jk} {\boldsymbol x}_j'{\boldsymbol\beta} M_{jj}/D_{ijk} \right)^2}{{\V}_{0}\!\left[{\cal F} - \hat{E}_{\cal F} \right]^2}
		=  o_p\!\left( 1 \right)\!
		\end{align}
		where the order statement follows from arguments given for the third component of \eqref{eq:Varsum} and the first part of \eqref{eq:comp4}.

		Now the fourth term of \eqref{eq:comp4} satisfies that
		\begin{align}
			\V\!\left[\!\sum\nolimits_{i\neq j \neq k \neq \ell}^n\! b_{ijk\ell}{\boldsymbol x}_i'{\boldsymbol\beta} \varepsilon_j \varepsilon_k \varepsilon_\ell \right]\! &=  \sum\nolimits_{j\neq k \neq \ell}^n b_{\cdot j k \ell}\!\left(  b_{\cdot j k \ell} +  b_{\cdot j \ell  k}\right)\! \E\!\left[\varepsilon_j^2 \varepsilon_k^2 \varepsilon_\ell^2 \right]\! \\
			&+ \sum\nolimits_{i\neq k \neq \ell}^n\!\left( b_{\cdot j k\ell}\!\left( b_{\cdot k j\ell} +  b_{\cdot k \ell  j}\right)\! +  b_{\cdot j \ell k}\!\left( b_{\cdot k j\ell} +  b_{\cdot k \ell  j}\right)\!\right)\! \E\!\left[\varepsilon_j^2 \varepsilon_k^2 \varepsilon_\ell^2 \right]\! \\
			&\le \max_{i} \sigma_i^6 6 \! \left(\sum\nolimits_{j\neq k \neq i}^n b_{i j k}^2+ \sum\nolimits_{j\neq k \neq \ell}^n b_{\cdot j k \ell}^2\right)\!
		\end{align}
		for $b_{\cdot j k \ell} = \sum_{i \neq j,k} b_{ijk\ell}{\boldsymbol x}_i'{\boldsymbol\beta}$ and we have used that $b_{ijk}=b_{ijki}$. The first term was dealt with in \eqref{eq:comp3part3}. For the second term we use a special case of \eqref{eq:sumcheckM3}
		\begin{align}
		 \sum\nolimits_{\ell = 1}^n \check M_{i\ell,-ijk}\check M_{\imath\ell,-\imath jk} 		
		 &= \tfrac{\check M_{\imath i,-\imath jk}}{D_{ijk}/D_{jk}}  =\tfrac{\check M_{ i\imath,-i jk}}{D_{\imath jk}/D_{jk}}  \\
		 &= \tfrac{D_{jk}\left(M_{i\imath}D_{jk} - \left(M_{jj} M_{ik}M_{\imath k} + M_{kk} M_{ij}M_{\imath j} - M_{jk} (M_{ij}M_{\imath k}+M_{ik}M_{\imath j})\right) \right)}{D_{ijk} D_{\imath jk}}
		\end{align}
		so that it follows from the largest eigenvalue of $M$ being one that
		\begin{align}
			\tfrac{\sum\nolimits_{j\neq k \neq \ell}^n b_{\cdot j k \ell}^2}{{\V}_{0}\!\left[{\cal F} - \hat{E}_{\cal F} \right]^2}
			&\le \max_{i\neq j \neq k \neq i} \tfrac{({\boldsymbol x}_i'{\boldsymbol\beta})^2}{D_{ijk}^2}\tfrac{\sum\nolimits_{j \neq k \neq i}^n b_{i j k}^2}{{\V}_{0}\!\left[{\cal F} - \hat{E}_{\cal F} \right]^2}
			=  o_p\!\left( 1 \right)\!.
		\end{align}
		
		Finally, the variance of the fifth term of \eqref{eq:comp4} satisfies the bound
		\begin{align}		
		\V\!\left[\!\sum\nolimits_{i\neq j \neq k \neq \ell}^n\! b_{ijk\ell} \varepsilon_i \varepsilon_j \varepsilon_k \varepsilon_\ell \right]\! &=  \sum\nolimits_{i\neq j \neq k \neq \ell}^n b_{ijk\ell} \E\!\left[ \varepsilon_i^2 \varepsilon_j^2 \varepsilon_k^2 \varepsilon_\ell^2 \right]\!  \\
		&\times \big( b_{ijk\ell} + b_{ij\ell k} + b_{ikj\ell} + b_{ik\ell j} + b_{i\ell jk} + b_{i\ell kj}  \\
		& \phantom{\times \big(\ } +  b_{jik\ell} + b_{ji\ell k} + b_{jki\ell} + b_{jk\ell i} + b_{j\ell ik} + b_{j\ell ki} \\
		& \phantom{\times \big(\ } +  b_{kj\ell i} + b_{kji\ell} + b_{kij\ell} + b_{ki\ell j} + b_{k\ell ij} + b_{k\ell ji} \\
		& \phantom{\times \big(\ }+  b_{\ell j ik} + b_{\ell j ki} + b_{\ell k ij} + b_{\ell k ji} + b_{\ell i jk} + b_{\ell i kj}
		\big) \\
		& \le \max_{i} \sigma_i^8 24 \sum\nolimits_{i\neq j \neq k \neq \ell}^n b_{ijk\ell}^2.
		\end{align}
		Since a special case of \eqref{eq:sumcheckM3} is $\sum\nolimits_{\ell=1}^n \check M_{i\ell,-ijk}^2 = \frac{1}{D_{ijk}/D_{jk}} \le \frac{1}{D_{ijk}}$, we have that
		\begin{align}
			\tfrac{\sum\nolimits_{i\neq j \neq k \neq \ell}^n b_{ijk\ell}^2}{{\V}_{0}\!\left[{\cal F} - \hat{E}_{\cal F} \right]^2}
			&\le \max_{i \neq j \neq k \neq i} D_{ijk}^{-2} \tfrac{\sum\nolimits_{i\neq j \neq k}^n b_{ijk}^2}{{\V}_{0}\!\left[{\cal F} - \hat{E}_{\cal F} \right]^2}
			=  o_p\!\left( 1 \right)\!
		\end{align}
		which completes our proof that the variance estimator $\hat{\V}_0\!\big[{\cal F} - \hat{E}_{\cal F} \big]\!$ is consistent.
	\end{proof}

	\subsection{Asymptotic power}\label{app:power}

	\begin{proof}[Proof of Theorem \ref{thm:power}]
		First, we define the null vector ${\boldsymbol\beta}_0$ corresponding to the parameter vector ${\boldsymbol\beta}$ under $H_\delta$. That is, we let
		\begin{align}\label{eq:Driftbeta}
			{\boldsymbol\beta}_0 = {\boldsymbol\beta} - \boldsymbol S_{xx}\inverse \boldsymbol R' \!\left( \boldsymbol R\boldsymbol S_{xx}^{-1}\boldsymbol R'\right) ^{-1/2} \cdot {\boldsymbol\delta},
		\end{align}
		where we see that ${\boldsymbol\beta}_0$ satisfies the null, $\boldsymbol R{\boldsymbol\beta}_0 = \boldsymbol q$, since ${\boldsymbol\beta}$ is generated by a local alternative with $\boldsymbol R {\boldsymbol\beta} = \boldsymbol q + \!\left( \boldsymbol R\boldsymbol S_{xx}^{-1}\boldsymbol R'\right) ^{1/2} \boldsymbol\delta$. Furthermore, we denote by $\big[ { \cal F}-\hat{E}_{\cal F} \big]_\delta$ the value of ${ \cal F}-\hat{E}_{\cal F}$ under $H_{\delta},$ and by $\big[ { \cal F}-\hat{E}_{\cal F} \big]_0$ its value when the parameter vector ${\boldsymbol\beta}$ is equal to ${\boldsymbol\beta}_0$.
		
		The value of  ${ \cal F}-\hat{E}_{\cal F}$ under $H_{\delta}$ can be represented in accordance with \eqref{eq:Driftbeta} as
		\begin{align}
		\!\left[ { \cal F}-\hat{E}_{\cal F}\right]_\delta
		&=\!\left( \boldsymbol R\boldsymbol S_{xx}^{-1}\sum_{i=1}^n {\boldsymbol x}_i \varepsilon_i +\boldsymbol R{\boldsymbol\beta}-\boldsymbol q\right)'\!\left( \boldsymbol R\boldsymbol S_{xx}^{-1}\boldsymbol R'\right) ^{-1}
		\!\left(\boldsymbol R\boldsymbol S_{xx}^{-1}\sum_{i=1}^n {\boldsymbol x}_i \varepsilon_i +\boldsymbol R{\boldsymbol\beta}-\boldsymbol q\right)\! \\
		&\phantom{=} -\sum\limits_{i=1}^{n}\tfrac{B_{ii}}{M_{ii}} \!\left({\boldsymbol x}_i'{\boldsymbol\beta}_0 + {\boldsymbol x}_{i}'({\boldsymbol\beta}-{\boldsymbol\beta}_0) + \varepsilon_i\right)\! \sum_{j=1}^n M_{ij} \varepsilon_j \\
		&=\!\left[ \mathcal{F}-\hat{E}_{\cal F} \right] _{0}+ {\boldsymbol\delta}'{\boldsymbol\delta}
		+ 2({\boldsymbol\beta}-{\boldsymbol\beta}_0)' \sum_{i=1}^n {\boldsymbol x}_i \varepsilon_i
		- \sum_{j=1}^n \check {\boldsymbol x}_j'({\boldsymbol\beta}-{\boldsymbol\beta}_0) \varepsilon _{j}.
		\end{align}
		
		Note that the third and fourth terms in this representation of $\big[ \mathcal{F}-\hat{E}_{\cal F}\big] _{\delta }$ are both of smaller order than the sum of the first two when $r \rightarrow \infty$. Indeed, both have conditional mean zero, while the third term has variance
		{\small
		\begin{align}
		\V\!\left[ ({\boldsymbol\beta}-{\boldsymbol\beta}_0)'\sum\nolimits_{i=1}^n {\boldsymbol x}_i \varepsilon_i \right]\!
		 &= ({\boldsymbol\beta}-{\boldsymbol\beta}_0)' \sum\nolimits_{i=1}^{n}{\boldsymbol x}_{i}{\boldsymbol x}_{i}'\sigma_{i}^{2} ({\boldsymbol\beta}-{\boldsymbol\beta}_0)
		\leq \max_{i}\sigma _{i}^{2} {\boldsymbol\delta}'{\boldsymbol\delta},
		\end{align}
		}%
		which implies that the third term is $o_p\!\left(\norm{\boldsymbol\delta}\right)\!$ and therefore either of a smaller magnitude than the first term in $\big[ \mathcal{F}-\hat{E}_{\cal F}\big] _{\delta }$ if ${\boldsymbol\delta}'{\boldsymbol\delta}$ is bounded, or of a smaller magnitude than the second if ${\boldsymbol\delta}'{\boldsymbol\delta} \rightarrow \infty$. For the fourth term we similarly have that
		\begin{align}
		\V\!\left[ \sum\nolimits_{j=1}^n \check {\boldsymbol x}_j'({\boldsymbol\beta}-{\boldsymbol\beta}_0) \varepsilon _{j}\right]
		&=\sum\nolimits_{j=1}^{n}\left(\check {\boldsymbol x}_j'({\boldsymbol\beta}-{\boldsymbol\beta}_0) \right) ^{2}\sigma _{j}^{2}
		\le \max_i \sigma_i^2 \sum\nolimits_{j=1}^{n}\left(\check {\boldsymbol x}_j'({\boldsymbol\beta}-{\boldsymbol\beta}_0) \right) ^{2} \\
		&= \max_i \sigma_i^2 \sum\nolimits_{i=1}^n \sum\nolimits_{j=1}^n \tfrac{B_{ii}}{M_{ii}}{\boldsymbol x}_i'({\boldsymbol\beta}-{\boldsymbol\beta}_0) \tfrac{B_{jj}}{M_{jj}}{\boldsymbol x}_j'({\boldsymbol\beta}-{\boldsymbol\beta}_0) M_{ij}  \\
		&\le \max_i \sigma_i^2 {\sum\nolimits_{i=1}^n \tfrac{B_{ii}^2}{M_{ii}^2} ({\boldsymbol x}_i'({\boldsymbol\beta}-{\boldsymbol\beta}_0))^2 }
		\le  \max_{i,j} \tfrac{ \sigma_i^2}{M_{jj}^2} {\boldsymbol\delta}'{\boldsymbol\delta},
		\end{align}
		and since { Assumptions~\ref{ass:iid} and \ref{ass:asympLeave3Out}} imply that $\max_{i,j} \tfrac{ \sigma_i^2}{M_{jj}^2} = O_p(1)$ the argument applied to the third term applies here as well.

		Hence, we have that
		\begin{align}
		\tfrac{\left[ \mathcal{F}-\hat{E}_{\cal F} \right]_{\delta }}{\V_{0}\left[ \mathcal{F} - \hat{E}_{\cal F} \right]^{1/2}} - \tfrac{{\boldsymbol\delta}'{\boldsymbol\delta}}{\sqrt{r}} \!\left(\tfrac{\V_{0}\left[ \mathcal{F}-\hat{E}_{\cal F}\right]}{r} \right)^{-1/2}
		=\tfrac{\left[ \mathcal{F}-\hat{E}_{\cal F} \right] _{0}}{\V_{0}\left[ \mathcal{F}-\hat{E}_{\cal F}\right]^{1/2}}  +o_{p}\left( 1\right) \overset{d}{%
		\rightarrow } {\cal N}\left( 0 ,1\right),
		\end{align}
		and since $\frac{\V_{0}[ \mathcal{F}-\hat{E}_{\cal F}]}{r}$ is bounded and bounded away from zero, it also follows that
		\begin{align}
			\left(\tfrac{{\boldsymbol\delta}'{\boldsymbol\delta}}{\sqrt{r}} - \Delta_\delta^2\right)\!\!\left(\tfrac{\V_{0}\left[ \mathcal{F}-\hat{E}_{\cal F}\right]}{r} \right)^{-1/2} &= o_p(1),  &\quad \text{if } \Delta_\delta < \infty, \\
			\tfrac{\left[ \mathcal{F}-\hat{E}_{\cal F} \right]_{\delta }}{\V_{0}\left[ \mathcal{F} - \hat{E}_{\cal F} \right]^{1/2}} &\xrightarrow{p} \infty, &\quad \text{if } \Delta_\delta = \infty,
		\end{align}
		from which the statement of the theorem follows if $\hat{V}_{\cal F}/{\V}_{0}\!\big[ \mathcal{F}-\hat{E}_{\cal F}	 \big] \xrightarrow{p} 1$.
		
		Next we argue that the variance estimator remains consistent under the sequence of alternatives characterized by $H_{\delta}$. A recall of the argumentation in Appendix \ref{app:varest} reveals that $\hat{V}_{\cal F}$ is an unbiased estimator of ${\V}_{0}\!\big[ \mathcal{F}-\hat{E}_{\cal F} \big] $ for any value of ${\boldsymbol\beta}$. Similarly, an inspection of the proof of Theorem~\ref{thm:size} reveals that the variance bounds derived for components of $\hat{V}_{\cal F}$ do not depend on the particular value for ${\boldsymbol\beta}$. Thus, it suffices that the sequence of local alternatives satisfy {Assumption~\ref{ass:reg}($i$)} as assumed.
	\end{proof}

	Finally, we substantiate the comparison between the power of our proposed LO test and the exact F test provided in Remark~\ref{rem:power}. Specifically, we show that the ratio
	\begin{align}
		\frac{\frac{1}{r}\V_{0}\big[ \mathcal{F}-\hat{E}_{\cal F}\big]}{2\sigma^4 + \frac{2r}{n-m}\sigma^4}
	\end{align}
	converges in probability to unity under either of the following two conditions;
	\begin{enumerate}
		\item $\frac{1}{r}\sum_{i=1}^n B_{ii}^2 = o_p(1)$ which effectively covers settings with $\frac{r}{n} \rightarrow 0$.
		
		\item $\frac{1}{r}\sum_{i=1}^n \!\left(\tfrac{B_{ii}}{M_{ii}} - \mu_n \right)^2 = o_p(1)$ for $\mu_n=\frac{1}{n}\sum_{i=1}^n \tfrac{B_{ii}}{M_{ii}}$ which corresponds to settings with approximately balanced $\frac{B_{ii}}{M_{ii}}$ across observations.
	\end{enumerate}

	To see why that convergence holds, note first that
	\begin{align}
		\frac{\frac{1}{r}\V_{0}\big[ \mathcal{F}-\hat{E}_{\cal F}\big]}{2\sigma^4 + \frac{2r}{n-m}\sigma^4} = \frac{1}{r\!\left(1 + \frac{r}{n-m}\right)\!}\!\left( \text{trace}(\boldsymbol C^2) + \frac{1}{2\sigma^2} \sum\nolimits_{i=1}^n \left( \sum\nolimits_{j\neq i} V_{ij} {\boldsymbol x}_j'{\boldsymbol\beta}\right)^2\right).
	\end{align}
	The second component is negligible under either 1. or 2., as
	\begin{align}
		\tfrac{1}{r}\sum\nolimits_{i=1}^n \!\left(\sum\nolimits_{j \neq i} V_{ij} {\boldsymbol x}_j'{\boldsymbol\beta}\right)^2 &\le \min \!\left\{ \max_{i} \tfrac{({\boldsymbol x}_i'{\boldsymbol\beta})^2 }{M_{ii}^2} \tfrac{1}{r}\sum\nolimits_{i=1}^n B_{ii}^2, \ \max_{i} ({\boldsymbol x}_i'{\boldsymbol\beta})^2 \tfrac{1}{r}\sum\nolimits_{i=1}^n \!\left(\tfrac{B_{ii}}{M_{ii}} - \bar{\tfrac{\boldsymbol B}{\boldsymbol M}} \right)^2\right\}\!.
	\end{align}
	For the first component we use $\boldsymbol C =\boldsymbol B - \frac{1}{2}\!\left(\boldsymbol D_{B \oslash M}\boldsymbol M +\boldsymbol M\boldsymbol D_{B \oslash M} \right)\!$, which leads to
	\begin{align}
		\text{trace}(\boldsymbol C^2) &= r + \tfrac{1}{2}\sum\nolimits_{i=1}^n \tfrac{B_{ii}^2}{M_{ii}} + \tfrac{1}{2}\sum\nolimits_{i=1}^n \sum\nolimits_{j=1}^n \tfrac{B_{ii}}{M_{ii}} \tfrac{B_{jj}}{M_{jj}}  M_{ij}^2 \\
		&= r(1 + \mu_n) + \tfrac{1}{2}\sum\nolimits_{i=1}^n B_{ii}\!\left(\tfrac{B_{ii}}{M_{ii}} - \mu_n\right)\! + \tfrac{1}{2}\sum\nolimits_{i=1}^n \sum\nolimits_{j=1}^n \!\left(\tfrac{B_{ii}}{M_{ii}} - \mu_n\right)\! \tfrac{B_{jj}}{M_{jj}}  M_{ij}^2.
	\end{align}
	To see that the last two terms in this expression for $\text{trace}(\boldsymbol C^2)$ are $o_p(r)$ under either 1. or 2., we note that
	\begin{align}
		\abs*{\tfrac{1}{r}\sum\nolimits_{i=1}^n B_{ii}\!\left(\tfrac{B_{ii}}{M_{ii}} - \mu_n\right)\!} &\le \!\left( \tfrac{1}{r}\sum\nolimits_{i=1}^n B_{ii}^2 \cdot \tfrac{1}{r}\sum\nolimits_{i=1}^n \!\left(\tfrac{B_{ii}}{M_{ii}} - \mu_n\right)^2  \right)^{1/2} = o_p(1)\\
		\abs*{\tfrac{1}{r}\sum\nolimits_{i=1}^n \sum\nolimits_{j=1}^n \!\left(\tfrac{B_{ii}}{M_{ii}} - \mu_n\right)\! \tfrac{B_{jj}}{M_{jj}}  M_{ij}^2} &\le \max_i \tfrac{1}{M_{ii}} \!\left( \tfrac{1}{r}\sum\nolimits_{i=1}^n B_{ii}^2 \cdot \tfrac{1}{r}\sum\nolimits_{i=1}^n \!\left(\tfrac{B_{ii}}{M_{ii}} - \mu_n\right)^2  \right)^{1/2} \\&= o_p(1).
	\end{align}
	Thus the claim of Remark~\ref{rem:power} follows if $\frac{\frac{r}{n-m} - \mu_n}{1+\frac{r}{n-m}}= o_p(1)$ which follows from
	\begin{align}
		\tfrac{\!\left(\tfrac{r}{n-m} - \mu_n\right)^2}{\left(1+\frac{r}{n-m}\right)^2} &= \tfrac{1}{\left(1+\frac{r}{n-m}\right)^2}\!\left( \sum\nolimits_{i=1}^n \tfrac{B_{ii}}{\sum\nolimits_{i=1}^n M_{ii}} - \mu_n\right)^2 = \tfrac{1}{\left(1+\frac{r}{n-m}\right)^2}\!\left( \sum\nolimits_{i=1}^n \tfrac{M_{ii}}{\sum\nolimits_{i=1}^n M_{ii}} \!\left( \tfrac{B_{ii}}{M_{ii}} - \mu_n \right)\! \right)^2 \\
		&\le  \tfrac{\tfrac{r}{n-m}}{\left(1+\frac{r}{n-m}\right)^2}  \tfrac{1}{r}\sum\nolimits_{i=1}^n \!\left( \tfrac{B_{ii}}{M_{ii}} - \mu_n \right)^2.
	\end{align}
	The last expression is $o_p(1)$ under either 1. or 2., as Assumption~\ref{ass:asympLeave3Out} implies that $\frac{n}{n-m} = O(1)$ and 1. implies that $\frac{r}{n} = o(1)$ so that $\tfrac{r}{n-m} = o(1)$ under 1.
	
	\section{If leave-three-out fails}\label{app:cons}

	\subsection{Variance estimator}\label{app:adjusted}
	
	Here we show that if $i$ does not cause $D_{ijk}=0$, i.e., if $D_{ij} D_{ik} >0$ and $D_{jk}=0$, then $\hat \sigma_{i,-j}^2 = \bar \sigma_{i,-k}^2$ and $\hat \sigma_{i,-j}^2$ is independent of both $y_j$ and $y_k$.
	
	When $D_{jk}=0$, the leverage of observation $k$ after leaving $j$ out is one. This, in particular, implies that the weight given to observation $k$ in $-{\boldsymbol x}_i'\hat {\boldsymbol\beta}_{-j}$ is zero, i.e.,
	\begin{align}\label{eq:Djk0}
	0 = - {\boldsymbol x}_i'\left(\sum\nolimits_{\ell \neq j} {\boldsymbol x}_\ell {\boldsymbol x}_\ell'\right)\inverse {\boldsymbol x}_k = M_{ik} - \frac{M_{ij} M_{jk}}{M_{jj}} = \frac{D_{ij}}{M_{jj}} \check M_{ik,-ij}
	\end{align}
	where the second equality follows from \eqref{eq:SMW}. Thus $\check M_{ik,-ij}$, the weight given to $y_k$ in $\hat \sigma_{i,-j}^2$, is zero since $D_{ij} > 0$. Hence $\hat \sigma_{i,-j}^2$ is independent of both $y_j$ and $y_k$.
	
	We now have that $\hat \sigma_{i,-j}^2 = \bar \sigma_{i,-k}^2$ if $ \check M_{i\ell,-ij} = \check M_{i\ell,-ik}$ for all $\ell$ different from $j$ and $k$. Note that under $D_{jk}=0$, equation \eqref{eq:Djk0} shows that $M_{jj} M_{ik} - M_{ij} M_{jk}=0$, and reversing the roles of $j$ and $k$ also leads to $M_{kk} M_{ij} - M_{ik} M_{kj}=0$. By rearranging terms we then obtain
	\begin{align}
	M_{jj} D_{ik} \! =\! M_{jj} D_{ik} \! -\! M_{ij}(M_{kk} M_{ij} \! -\! M_{ik} M_{jk}) \! =\! M_{kk} D_{ij} \! -\! M_{ik}(M_{jj} M_{ik} \! -\! M_{ij} M_{jk}) \! =\! M_{kk} D_{ij}.
	\end{align}
	which implies that $D_{ij} >0$ if and only if $D_{ik} >0$. Since \eqref{eq:Djk0} also applies when $i$ is replaced by $\ell$, we have that $M_{jj} M_{\ell k} - M_{\ell j} M_{jk}=0$ which in turn implies that
	\begin{align}
	M_{kk} M_{ij} M_{j\ell} &= M_{ik} M_{kj} M_{j\ell} = M_{ik} M_{jj} M_{k\ell}.
	\end{align}
	From the two previous highlighted equations it follows that $ \check M_{i\ell,-ij} = \check M_{i\ell,-ik}$ since
	\begin{align}
	M_{kk} D_{ij} \check M_{i\ell,-ij} \!=\! M_{kk}M_{jj} M_{i\ell} - M_{kk}M_{ij} M_{j\ell} \!=\! M_{jj}M_{kk} M_{i\ell} - M_{jj}M_{ik} M_{k\ell} \!=\! M_{jj}D_{ik} \check M_{i\ell,-ik}.
	\end{align}
	
	Finally, we clarify that the first line in the definition of $\overline{\sigma_i^2 \sigma_j^2}$ correspond to the case where none of the leave-three-out failures are caused by both $i$ and $j$. This statement is
	\begin{align}
		D_{ijk} >0 \text{ or } (D_{ij} D_{ik} > 0 \text{ and } D_{jk}= 0) \text{ or } (D_{ij} D_{jk} > 0 \text{ and } D_{ik}= 0) \text{ for all } k
	\end{align}
	and since $D_{ij} >0$ if and only if $D_{ik} >0$ when $D_{jk} =0$ this statement is equivalent to
	\begin{align}
		D_{ijk} >0 \text{ or } (D_{ij} > 0 \text{ and } D_{jk}= 0) \text{ or } (D_{ij}  > 0 \text{ and } D_{ik}= 0) \text{ for all } k
	\end{align}
	which is easily seen to be equivalent to
	\begin{align}
		D_{ij} >0 \text{ and } (D_{ijk} > 0 \text{ or } D_{ik} D_{jk}= 0 \text{ for all } k).
	\end{align}

	\subsection{Asymptotic size}\label{app:adjsize}

	\begin{proof}[Proof of Theorem~\ref{thm:sizeLeave1Out}]

		It suffices to show that $\hat{V}_{\cal F}$ is a non-negatively biased estimator of the relevant target ${\V}_0\big[{\cal F} - \hat{E}_{\cal F} \big]$, that this adjusted variance estimator concentrates around its expectation, and that $\lim \inf_{n \rightarrow \infty} q_{1-\alpha}(\bar F_{\hat{\boldsymbol w},n-m}) \ge 1$ in probability when $r$ is fixed. To establish the required properties regarding $\hat{V}_{\cal F}$, it is useful to let $H_{i,jk}$ and $H_{ij}$ be indicators for presence of bias in the error variance estimators, i.e., let
		\begin{align}
		H_{i,jk} &=  \mathbf{1}\left\{ D_{ij}D_{ik}=0 \text{ or } (D_{ijk} =0, \ D_{jk} > 0) \right\}, \\
		% \mathbf{1}\left\{ D_{ijk}=0 \text{ and } (D_{ij} =0 \text{ or } D_{jk} > 0) \right\} =
		H_{ij} &= \mathbf{1}\left\{ D_{ij} = 0 \text{ or } \exists k : D_{ijk} =0, \ D_{ik} D_{jk} > 0  \right\}
		%\mathbf{1}\left\{ \exists k : D_{ijk} = 0 \text{ and }  (D_{ij} =0 \text{ or } D_{ik} D_{jk} > 0)  \right\}.
		\end{align}
		We can then write $\hat{V}_{\mathcal{F}}= A_1 + A_2  + B_1 + B_2$ for
		\begin{align}
		A_1 &= \sum_{i=1}^{n}\sum_{j\neq i} (1-H_{ij}) \left( U_{ij}-V_{ij}^{2}\right)  \overline{\sigma_i^2 \sigma_j^2}, &
		A_2 &=  \sum_{i=1}^{n}\sum_{j\neq i} H_{ij}\left( U_{ij}-V_{ij}^{2}\right)_+ y_i^2 \bar \sigma_{j,-i}^2, \\
		B_1 &= \sum_{i=1}^{n}\sum_{j\neq i}\sum_{k\neq i} (1-H_{i,jk})V_{ij}V_{ik} y_{j}y_{k}\bar {\sigma}_{i,-jk}^{2}, &
		B_2 &=\sum_{i=1}^{n} \left(\sum_{j\neq i}\sum_{k\neq i}H_{i,jk}V_{ij}V_{ik}y_{j}y_{k} \right)_+ y_{i}^{2}
		\end{align}
		where $(\cdot)_+$ stands for taking a positive part.
		
		First we consider the expectation of $\hat{V}_{\cal F}$. The bias in $\hat{V}_{\mathcal{F}}$ stems from $A_2$ and $B_2$ so we have that
		\begin{align}
			\E\!\left[\hat{V}_{\cal F}\right] - {\V}_0\!\left[{\cal F} - \hat{E}_{\cal F} \right]
			&= \sum_{i=1}^n \sum_{j \neq i} H_{ij}  \!\left( \big( U_{ij} - V_{ij}^2\big)_+ \E[ y_i^2 ]\E[ \hat \sigma_{j,-i}^2 ]- \big( U_{ij} - V_{ij}^2\big) \sigma_i^2 \sigma_j^2 \right)\! \\
			&\quad +  \E[B_2]
			 - \sum_{i=1}^n\sum_{j\neq i} \sum_{k\neq i}  H_{i,jk} V_{ij}   V_{ik} \E\!\left[y_j y_k\right]\! \sigma_i^2,
		\end{align}
		The first component of this bias is non-negative since $\big( U_{ij} - V_{ij}^2\big)_+ \E[ y_i^2 ]\E[ \hat \sigma_{j,-i}^2 ]$ is never smaller than $\big( U_{ij} - V_{ij}^2\big) \sigma_i^2 \sigma_j^2$.  For the second line we use that the mapping $(\cdot)_+$ is convex and larger than its argument. These two properties yield
		\begin{align}
			\E[B_2] & =\sum_{i=1}^{n}\E\!\left[ \!\left(\sum\nolimits_{j\neq i} \sum\nolimits_{k\neq i}  H_{i,jk} V_{ij}  y_j  V_{ik}  y_k  \right)_+ \right]\! \E[ y_i^2 ] \\
			&\ge \sum_{i=1}^{n}\sum_{j\neq i} \sum_{k\neq i} H_{i,jk} V_{ij} V_{ik} \E\!\left[y_j y_k\right]\! \sigma_i^2
			+\sum_{i=1}^{n}\E\!\left[ \!\left(\sum\nolimits_{j\neq i} \sum\nolimits_{k\neq i}  H_{i,jk} V_{ij}  y_j  V_{ik}  y_k  \right)_+ \right]\! ({\boldsymbol x}_i'{\boldsymbol\beta})^2,
		\end{align}
		and since the second part of this lower bound is non-negative, we conclude that the second component of the bias in $\hat{V}_{\cal F}$ is also greater than equal to zero.

		We now show that $\hat{V}_{\cal F}$ concentrates around its expectation, i.e.,
$\big( \hat{V}_{\cal F} - \E\big[\hat{V}_{\cal F} \big]  \big)/\E\big[\hat{V}_{\cal F} \big] \xrightarrow{p} 0$. Since $\E[\hat V_{\cal F}]\inverse \le {\V}_0\big[{\cal F} - \hat{E}_{\cal F} \big]\inverse = O_p(\frac{1}{r})$, it suffices for this conclusion to show that $\hat{V}_{\mathcal{F}}-\E\big[\hat{V}_{\mathcal{F}}\big] = o_p(r)$. Since $A_1$, $A_2$, and $B_2$ are quartic functions of the outcome variables it can be shown that $A_1 - \E[A_1] = o_p(r)$, $A_2 - \E[A_2] = o_p(r)$, and $B_1 - \E[B_1] = o_p(r)$ by the same argumentation as in the proof of Theorem~\ref{thm:size}.

$B_2$ involves additional non-linearities due to the presence of outcome variables inside the positive part function. For this reason we handle this term using \cite{soelvsten2017robust}, Lemma A2.2,  which is a version of the Efron-Stein inequality. Letting $\Delta_\ell B_2 = B_2 - B_{2,-\ell}$ where
\begin{align}
B_{2,-\ell} &= \sum_{i=1}^{n} \left(\sum_{j\neq i}\sum_{k\neq i}H_{i,jk}V_{ij}V_{ik}y_{j,-\ell}y_{k,-\ell} \right)_+ y_{i,-\ell}^{2} & y_{i,-\ell}& = \begin{cases}
y_i, & \text{if } i \neq \ell, \\ {\boldsymbol x}_i'{\boldsymbol\beta}, & \text{if } i = \ell
\end{cases}
\end{align}
it follows from Lemma A2.2 of \cite{soelvsten2017robust} that $B_1 - \E[B_1] = o_p(r)$ provided that $\sum_{\ell = 1}^n \E[(\Delta_\ell B_2)^2] = o_p(r^2)$. That $\sum_{\ell = 1}^n \E[(\Delta_\ell B_2)^2] = o_p(r^2)$ holds can be established following the argumentation in the proof of Theorem~\ref{thm:size} and we therefore omit the details.

To show that $\lim \inf_{n \rightarrow \infty} q_{1-\alpha}(\bar F_{\hat{\boldsymbol w},n-m}) \ge 1$ in probability when $r$ is fixed, we can argue along subsequences where $\hat{\boldsymbol w} \xrightarrow{p} \overrightarrow{\boldsymbol{w}}_{\cal F}$ and simply treat the limit problem of showing that $q_{1-\alpha}(\bar \chi^2_{\overrightarrow{\boldsymbol{w}}_{\cal F}}) \ge 1$ whenever $\alpha \le 0.31$. Now use $G$ to denote the distribution function of a $\chi^2_1$ random variable and note that $G$ is a concave function. Therefore, we have that (without loss of generality let $w_1 >0$)
\begin{align}
\Pr \left( \sum_{\ell = 1}^r w_\ell Z_\ell \le 1 \right) = \E\left[ G\left(\frac{1 - \sum_{\ell = 2}^r w_\ell Z_\ell}{w_1}\right) \right] \le G\left(\E\left[ \frac{1 - \sum_{\ell = 2}^r w_\ell Z_\ell}{w_1} \right] \right) = G(1) < 0.69.
\end{align}
Thus it follows that $q_{1-\alpha}(\bar \chi^2_{\overrightarrow{\boldsymbol{w}}_{\cal F}}) > 1$ for any value of $\overrightarrow{\boldsymbol{w}}_{\cal F}$.
\end{proof}

\subsection{Location invariance}\label{app:small}

	The test statistic considered in the simulation study of Section \ref{sec:sim} relies on demeaned outcomes $\dot y_i = y_i - \frac{1}{n} \sum_{i=1}^n y_i$ when centering and studentizing $\cal F$. This is done to ensure that the critical value is invariant to shifts in location of the outcomes. Specifically, we estimate $\mathbb{E}_0[{\cal F}]$ using $\tilde{E}_{\cal F} =  \sum\nolimits_{i=1}^n B_{ii} \tilde \sigma_i^2$,  where $\tilde \sigma_i^2 = \dot y_i(y_i - {\boldsymbol x}_i'\hat {\boldsymbol\beta}_{-i})$. Furthermore, we let $\tilde {\boldsymbol w} = (\tilde w_1,\dots,\tilde w_n)'$, where $\tilde w_\ell = \frac{\dot w_\ell \vee 0}{\sum_{\ell=1}^r  (\dot w_\ell \vee 0) }$ and $\dot w_\ell$ is the $\ell$-th eigenvalue of $\varOmega \!\left(\tilde \sigma_1^2,\dots,\tilde \sigma_n^2 \right)$.

	The variance estimator similarly relies on demeaned outcomes in its construction. In analogy with the above definition, we let
	\begin{align}
	\tilde \sigma_{i,-jk}^2 =
	\begin{cases}
	\dot y_i(y_i - {\boldsymbol x}_i'\hat {\boldsymbol\beta}_{-ijk}), & \text{if }  D_{ijk}>0, \\
	\dot y_i(y_i - {\boldsymbol x}_i'\hat {\boldsymbol\beta}_{-ij}), & \text{if }  D_{ij} D_{ik}>0 \text{ and } D_{jk}=0, \\
	\dot y_i^2, & \text{otherwise},
	\end{cases}
	\end{align}
	where we also write $\tilde \sigma_{i,-j}^2$ when $j$ is equal to $k$. We use $\tilde \sigma_{i,-jk}^2$ in the construction of the variance product estimator
	\begin{align}
	\widetilde{\sigma_i^2 \sigma_j^2} =
	\begin{cases}
	\dot y_i \sum_{k \neq j} \check M_{ik,-ij} \dot y_k \cdot \tilde \sigma_{j,-ik}^2, & \text{if }  D_{ij}>0 \text{ and } (D_{ijk}>0 \text{ or } D_{ik} D_{jk}=0 \text{ for all } k), \\
	\dot y_i^2 \tilde \sigma_{j,-i}^2, & \text{otherwise}.
	\end{cases}
	\end{align}
	This leads to the variance estimator
	\begin{align}
		\tilde{V}_{\cal F}
		&= \sum_{i=1}^n \sum_{j \neq i} \big( U_{ij} - V_{ij}^2\big) \cdot G_{ij} \cdot \widetilde{\sigma_i^2 \sigma_j^2}
		+ \sum_{i=1}^n \sum_{j\neq i} \sum_{k\neq i}  V_{ij}  \dot y_j \cdot V_{ik}  \dot y_k \cdot G_{i,-jk} \cdot \tilde \sigma_{i,-jk}^2,
	\end{align}
	where the indicators $G_{ij}$ and $\tilde G_{i,-jk}$ remove biased estimators with negative weights:
	\begin{align}
	G_{ij} &= H_{ij} \mathbf{1}\!\left\{ U_{ij} - V_{ij}^2 <0 \right\}\!,
%	
%	\begin{cases}
%	0,	& \text{if } \widetilde{\sigma_i^2 \sigma_j^2} = \dot y_i^2 \tilde \sigma_{j,-i}^2 \text{ and } U_{ij} - V_{ij}^2 <0, \\
%	1, & \text{otherwise},
%	\end{cases} \\
	& \tilde G_{i,-jk} &= H_{i,jk} \mathbf{1}\!\left\{ \sum_{j \neq i} \sum_{k \neq i} V_{ij}  \dot y_j \cdot V_{ik}  \dot y_k \cdot H_{i,jk} < 0 \right\}
%	\tilde G_{i,-jk} &=\begin{cases}
%	0, & \text{if } \tilde \sigma_{i,-jk}^2 = \dot y_i^2 \text{ and } \sum_{j \neq i} \sum_{k \neq i} V_{ij}  \dot y_j \cdot V_{ik}  \dot y_k \cdot \mathbf{1}\!\left\{ \tilde \sigma_{i,-jk}^2 = \dot y_i^2  \right\}< 0,  \\
%	1, & \text{otherwise}.
%	\end{cases}
	\end{align}
	for $H_{ij}$ and $H_{i,jk}$ as introduced in the proof of Theorem~\ref{thm:sizeLeave1Out}.
	
	While $\tilde{V}_{\cal F}$ is positive with probability approaching one in large samples, it may be negative in small samples. When this occurs we instead rely on a variance estimator that estimates all error variances unconditionally. This guarantees positivity of the variance estimator:
	\begin{align}
		\tilde {V}^+_{\cal F}
%		&= \mathbf{1}\!\left\{  \tilde {\V}_0\big[{\cal F} - \tilde{\E}_{0}[{\cal F} ] \big]\!>0 \right\}\!
%		\tilde {\V}_0\big[{\cal F} - \tilde{\E}_{0}[{\cal F} ] \big]\! \\
%		&+ \mathbf{1}\!\left\{  \tilde {\V}_0\big[{\cal F} - \tilde{\E}_{0}[{\cal F} ] \big]\! \le 0 \right\}\!
%		\sum\limits_{i=1}^n \sum\limits_{j \neq i} \big( U_{ij} - V_{ij}^2\big)_+ \dot y_i^2 \dot y_j^2
%		+ \sum\limits_{i=1}^n \left( \sum\nolimits_{j\neq i} V_{ij}  \dot y_j \right)^2 \dot y_i^2
%		\\
		&= \!\begin{cases}
		\tilde{V}_{\cal F},& \text{if } \tilde{V}_{\cal F}>0, \\
		\sum\limits_{i=1}^n \sum\limits_{j \neq i} \big( U_{ij} - V_{ij}^2\big)_+ \dot y_i^2 \dot y_j^2
			+ \sum\limits_{i=1}^n \left( \sum_{j\neq i} V_{ij}  \dot y_j \right)^2 \dot y_i^2, & \text{otherwise}.
		\end{cases}
	\end{align}
	The test considered in the simulations rejects when
	\begin{align}
		F >  \frac{1}{r\hat \sigma^2_{\varepsilon}} \left(\tilde{E}_{\cal F}+({\tilde{V}_{\cal F}^+)^{1/2}} \frac{\hat q_{1-\alpha}(\bar F_{\tilde {\boldsymbol w},n-m})-1}{\sqrt{2\sum_{\ell=1}^r \tilde w_\ell^2+2/(n-m)}} \right)\!.
		%\tilde {\cal T} = %\frac{{\cal F } - \tilde{\E}_{0}[{\cal F} ] }{\tilde {\V}^+_0\!\left[{\cal F} - \tilde{\E}_{0}[{\cal F} ] \right]^{1/2}}
		%\left(\tilde {V}^+_{\cal F}\right)^{-1/2}\left({\cal F } - \tilde{E}_{{\cal F}}\right).
	\end{align}
	and $\hat q_{1-\alpha}(\bar F_{\tilde {\boldsymbol w},n-m})$ is the $(1-\alpha)$-th quantile among $49,999$ independent draws of $\frac{\sum_{\ell=1}^r \tilde w_\ell Z_\ell}{Z_0/(n-m)}.$

\section{Simulation evidence}\label{app:sim}

\subsection{Simulation design}

For $k \in\{2,\dots,m-1\}$, let the $k$-th continuous regressor for observation $i$ have the representation $x_{ik}=\left( \frac{1}{2}+u_{i}\right)\! x_{ik}^{0},$ where $u_{i}\sim \mathrm{IID}\,U\!\left[ 0,1\right]\! $ with $\E\!\left[ \frac{1}{2}+u_{i}\right]\! =1$, $\E[\left( \frac{1}{2}+u_{i}\right) ^{2}]=\frac{13}{12},$ and $x_{ik}^{0}\sim \mathrm{IID}\,LN$ with  $\E\!\left[ x_{ik}^{0}\right]\!=e^{1/2}$ and $\V\!\left[ x_{ik}^{0}\right]\! =e\left( e-1\right) .$ For the mixed design, let additionally $d_{i\ell }$ be the $\ell$-th discrete regressor for observation $i$, where $\ell =1,\dots,r$.

To report $\E\!\left[ \boldsymbol S_{xx}\right] $, we note that for the continuous regressors $\E\!\left[ x_{ik}^{2}\right] =\frac{13}{12}e^{2}$,  $\E\!\left[x_{ik}\right]\! =e^{1/2}$, and $\E\!\left[ x_{ik}x_{ik'}\right]\! =\frac{13}{12}e$ for $1\neq k\neq k' \neq 1$.  For discrete regressors, we have
\begin{align}
s_{dd,\ell }
&\equiv \E\!\left[ d_{i\ell }^{2}\right]\!
=\Pr \left\{ \frac{\ell-1}{r+1}\leq \frac{u_{i}+u_{i}^{2}}{2}<\frac{\ell }{r+1}\right\}
=\sqrt{\frac{1}{4}+2\frac{\ell }{r+1}}-\sqrt{\frac{1}{4}+2\frac{\ell -1}{r+1}},
\end{align}
$\E\!\left[d_{i\ell }\right] =s_{dd,\ell },$ and $\E\!\left[ d_{i\ell}d_{i\ell' }\right]\! =0$ for $1\leq \ell\neq \ell' \leq r$. For the cross-moments between continuous and discrete regressors, we have
\begin{align}
s_{xd,\ell }
&\equiv \E\!\left[ x_{ik}d_{i\ell }\right]
=e^{1/2}\E\!\left[ \left(\frac{1}{2}+u_{i}\right) \mathbf{1}\!\left\{ \frac{\ell -1}{r+1}\leq \frac{u_{i}+u_{i}^{2}}{2}<\frac{\ell }{r+1}\right\}\! \right]
=\frac{e^{1/2}}{r+1}\equiv s_{xd},
\end{align}
which does not depend on $\ell$.

The matrix $\E\!\left[ \boldsymbol S_{xx}\right] $ is therefore structured as follows. In the continuous design,
\begin{align}
\E\!\left[ \boldsymbol S_{xx}\right]\! =n
\begin{bmatrix}
1 & e^{1/2}{\boldsymbol\iota}_{m-1}^{\prime } \\
e^{1/2}\iota _{m-1} & \frac{13}{12}e\cdot {\boldsymbol\iota}_{m-1}{\boldsymbol\iota}_{m-1}^{\prime }+%
\frac{13}{12}e\left( e-1\right)\boldsymbol I_{m-1}%
\end{bmatrix},
\end{align}
while in the mixed design,%
\begin{align}
\E\!\left[ \boldsymbol S_{xx}\right]\! =n
\begin{bmatrix}
1 & e^{1/2}{\boldsymbol\iota}_{m-r-1}^{\prime } & (s_{dd,1},\dots,s_{dd,r}) \\
e^{1/2}{\boldsymbol\iota}_{m-r-1} & \frac{13}{12}e\cdot {\boldsymbol\iota}_{m-r-1}{\boldsymbol\iota}
_{m-r-1}^{\prime }+\frac{13}{12}e\left( e-1\right)\boldsymbol I_{m-r-1} & s_{xd}\cdot
{\boldsymbol\iota}_{m-r-1}{\boldsymbol\iota}_{r}^{\prime } \\
(s_{dd,1},\dots,s_{dd,r})' & s_{xd}\cdot {\boldsymbol\iota}
_{r}{\boldsymbol\iota}_{m-r-1}^{\prime } & \mathrm{diag}\{s_{dd,\ell }\}_{\ell =1}^{r}%
\end{bmatrix}.
\end{align}

The link between the value of the regression coefficients under the null, $\varrho$, and the $\mathrm{R}^{2}$ is
\begin{align}
\varrho =\frac{1}{\sqrt{m-1}}\sqrt{\frac{\mathrm{R}^{2}}{1-\mathrm{R}^{2}}\frac{12}{13e^{2}+\left( m-14\right)\! e}}
\end{align}
for the continuous design. In the mixed design, $m$ is replaced by $m-r$ as all coefficients for the discrete regressors are equal to zero under the null.

\subsection{Simulation results}

\begin{table}[b!]
	\centering
	{\footnotesize
		\caption{Empirical size (in percent)}
		\label{tab:ASize}
		\begin{threeparttable}
			\begin{widetable}{.95\columnwidth}{llcrrrrrcrcrrrrrcrcrrrrrcr}
				\toprule
				\multicolumn{2}{l}{Test} &
				&LO&EF&$\text{W}_\text{1}$&$\text{W}_\text{K}$&$\text{W}_\text{L}$&& \textsc{neg}
				&&LO&EF&$\text{W}_\text{1}$&$\text{W}_\text{K}$&$\text{W}_\text{L}$&& \textsc{neg}
				&&LO&EF&$\text{W}_\text{1}$&$\text{W}_\text{K}$&$\text{W}_\text{L}$&& \textsc{neg} \\
				\midrule
				\multicolumn{2}{l}{Baseline}
				&& \multicolumn{7}{c}{}	
				&& \multicolumn{7}{c}{$\mathrm{R}^2=\frac{1}{6}$, $\frac{r}{m}=\frac{9}{12}$, $\frac{m}{n}=\frac{4}{5}$}
				&& \multicolumn{7}{c}{}  \\
				\cmidrule(r){1-2} \cmidrule(lr){12-18}
				$n=80$     &  && &&&& && &&    $5$ & $47$ &$100$ & $52$ & $15$ && $12.8$   &&  &&&& && \\
				$n=160$   &  &&  &&&& && &&    $5$ & $61$ &$100$ & $50$ & $16$ && $3.7$  &&  &&&& && \\
				$n=320$   & &&  &&&& && &&    $6$ & $81$ &$100$ & $49$ & $18$  && $0.9$  &&  &&&& && \\
				$n=640$   & &&  &&&& && &&    $5$ & $96$ &$100$ & $49$ & $19$  && $0.1$  &&  &&&& && \\
				$n=1280$ & &&  &&&& && &&   $5$ & $100$ & $100$ & $48$ & $19$   && $0.0$     &&  &&&& && \\			 
				\midrule
				&
				&& \multicolumn{7}{c}{$\mathrm{R}^2=\frac{0}{6}$}	
				&& \multicolumn{7}{c}{$\mathrm{R}^2=\frac{2}{6}$}
				&& \multicolumn{7}{c}{$\mathrm{R}^2=\frac{3}{6}$}  \\
				\cmidrule(lr){4-10} \cmidrule(lr){12-18} \cmidrule(lr){20-26}
				$n=80$     &  && 5&47&100&51&16 &&11.5 &&    4&47&100&52&16 &&15.3 &&  3&47&100&51&15 && 18 \\
				$n=160$   &  &&  6&61&100&50&18 &&3.1 &&    5&61&100&50&17 &&4.4  &&  6&61&100&50&17 && 5.3 \\
				$n=320$   & &&  6&81&100&49&19 &&0.7 &&    5&81&100&50&17 &&1.0  &&  6&81&100&50&18 &&1.2 \\
				$n=640$   & &&  5&96&100&48&19 && 0.2 &&    5&96&100&49&18 &&0.2  &&  5&96&100&49&18 &&0.2 \\
				$n=1280$ & &&  5&100&100&49&20 && 0.0 &&   5&100&100&48&19 &&0.0     &&  5&100&100&49&18 &&0.0 \\			
				\midrule
				&
				&& \multicolumn{7}{c}{$\frac{r}{m}=\frac{3}{12}$}	
				&& \multicolumn{7}{c}{$\frac{r}{m}=\frac{5}{12}$}
				&& \multicolumn{7}{c}{$\frac{r}{m}=\frac{7}{12}$}  \\
				\cmidrule(lr){4-10} \cmidrule(lr){12-18} \cmidrule(lr){20-26}
				$n=80$     &  && 5&18&66&34&18 &&9.8 &&    5&26&91&40&16 &&10.6   &&  5&36&99&46&16 &&12.2 \\
				$n=160$   &  &&  6&24&78&34&21 &&2.5 &&    5&34&98&41&18 &&2.7  &&  6&49&100&46&17 &&3.4 \\
				$n=320$   & &&  6&31&90&33&23 &&0.5 &&    5&49&100&42&20 &&0.6  && 7&68&100&46&18 &&0.7  \\
				$n=640$   & &&  5&46&98&35&27 &&0.0 &&    5&72&100&41&22 &&0.2  &&  5&88&100&45&20 &&0.1 \\
				$n=1280$ & &&  5&67&100&35&27 &&0.0  &&   5&92&100&41&23 &&0.0    &&  5&99&100&45&21 && 0.0\\			
				\midrule
				&
				&& \multicolumn{7}{c}{$\frac{m}{n}=\frac{1}{5}$}	
				&& \multicolumn{7}{c}{$\frac{m}{n}=\frac{2}{5}$}
				&& \multicolumn{7}{c}{$\frac{m}{n}=\frac{3}{5}$}  \\
				\cmidrule(lr){4-10} \cmidrule(lr){12-18} \cmidrule(lr){20-26}
				$n=80$     &  && 7&78&81&86&35 &&2.5 &&    6&72&96&76&32 && 3.6   &&  6&64&100&61&25 &&5.3 \\
				$n=160$   &  &&  6&85&87&92&45 &&1.0 &&    6&82&99&76&39 &&1.5  &&  5&77&100&60&26 &&1.8 \\
				$n=320$   & &&  5&91&94&97&57 &&0.5 &&    5&92&100&76&43 &&0.4  &&  5&92&100&60&29 && 0.5 \\
				$n=640$   & &&  5&97&99&99&68 &&0.1 &&    5&99&100&78&46 &&0.1  &&  5&99&100&58&30 && 0.1 \\
				$n=1280$ & &&  5&100&100&100&78 &&0.0 &&   5&100&100&79&48 &&0.0   &&  5&100&100&59&32 && 0.0 \\			
				\bottomrule
			\end{widetable}
			\begin{tablenotes}
				\scriptsize %\footnotesize
				\item \textsc{NOTE}: All size results are for $5\%$ nominal size under the continuous design with heteroskedasticity as described in Section \ref{sec:sim} of the paper. The first panel repeats baseline results from Table \ref{tab:Size} with a coefficient of determination $\mathrm{R}^2$ at $1/6$, a fraction of tested restrictions relative to number of regressors $r/m$ at $9/12$, and a fraction of regressors relative to sample size $m/n$ at $4/5$. The remaining three panels make ceteris paribus deviations by varying either $\mathrm{R}^2$, $r/m$, or $m/n$. LO: leave-out test, EF: exact F test, $\text{W}_\text{1}$: heteroskedastic Wald test with degrees-of-freedom correction, $\text{W}_\text{K}$: heteroskedastic Wald test with \cite{cattaneo2017inference} correction, $\text{W}_\text{L}$: heteroskedastic Wald test with \cite{kline2018leave} correction; \textsc{neg}: fraction of negative variance estimates for LO (in percent). Results from 10000 Monte-Carlo replications.
			\end{tablenotes}
		\end{threeparttable}
	}
\end{table}

Tables A1-A2 contain results from additional simulations where we, starting from the baseline continuous regressor design, for which the results are reported in the main text, vary parameters related to the strength of the signal relative to the noise: the coefficient of determination $\mathrm{R}^{2}$, and the relative numerosities of regressors and restrictions $r/m$ and $m/n$. Table A1 contains figures on empirical size and positivity failure rate of $\hat{V}_{\cal F}$ for the heteroskedastic setup, and Table A2 contains figures on empirical power for the homoskedastic setup with dense deviations from the null.

\begin{table}[bt!]
	\centering
	{\footnotesize
		\caption{Empirical power (in percent) corresponding to 5\% size}
		\label{tab:APower}
		\begin{threeparttable}
			\begin{widetable}{.9\columnwidth}{llcrrcrrcrr}
				\toprule
				Test &  &&  LO & EF  && LO & EF && LO & EF    \\
				\midrule
				\multicolumn{2}{l}{Baseline} \\
				\cmidrule(lr){1-2}
				{$n=80$}  &&  && & & $5$ & $15$ & &&  \\
				{$n=160$}&&  && & &$12$ & $21$ & &&  \\
				{$n=320$}&&  && & &$26$ & $36$ & &&  \\
				{$n=640$}&&  && & &$44$ & $57$ & &&  \\
				{$n=1280$}&&  && & &$68$ & $84$ & &&  \\
				\midrule
				&
				&& \multicolumn{2}{c}{$\mathrm{R}^2=\frac{0}{6}$}	
				&& \multicolumn{2}{c}{$\mathrm{R}^2=\frac{2}{6}$}
				&& \multicolumn{2}{c}{$\mathrm{R}^2=\frac{3}{6}$}  \\
				\cmidrule(lr){4-5} \cmidrule(lr){7-8} \cmidrule(lr){10-11}
				{$n=80$}  &&  &7&15 & &4&15 & &3&15  \\
				{$n=160$}&&  &16&21 & &11&21 & &10&21  \\
				{$n=320$}&&  &29&36 & &23&36 & &21&36  \\
				{$n=640$}&&  &48&57 & &41&57 & &39&57  \\
				{$n=1280$}&&  &73&84 & &65&84 & &63&84  \\
				\midrule
				&
				&& \multicolumn{2}{c}{$\frac{r}{m}=\frac{3}{12}$}	
				&& \multicolumn{2}{c}{$\frac{r}{m}=\frac{5}{12}$}
				&& \multicolumn{2}{c}{$\frac{r}{m}=\frac{7}{12}$}  \\
				\cmidrule(lr){4-5} \cmidrule(lr){7-8} \cmidrule(lr){10-11}
				{$n=80$}  &&  &9&17 & &7&15 & &6&15  \\
				{$n=160$}&&  &19&25 & &16&22 & &15&22  \\
				{$n=320$}&&  &33&39 & &28&36 & &27&35  \\
				{$n=640$}&&  &55&61 & &48&58 & &46&58  \\
				{$n=1280$}&&  &81&87 & &74&84 & &70&83  \\
				\midrule
				&
				&& \multicolumn{2}{c}{$\frac{m}{n}=\frac{1}{5}$}	
				&& \multicolumn{2}{c}{$\frac{m}{n}=\frac{2}{5}$}
				&& \multicolumn{2}{c}{$\frac{m}{n}=\frac{3}{5}$}  \\
				\cmidrule(lr){4-5} \cmidrule(lr){7-8} \cmidrule(lr){10-11}
				{$n=80$}  &&  &37&72 & &23&48 & &14&28  \\
				{$n=160$}&&  &71&93 & &47&71 & &28&45  \\
				{$n=320$}&&  &96&100 & &81&93 & &52&70  \\
				{$n=640$}&&  &100&100 & &98&100 & &81&91  \\
				{$n=1280$}&&  &100&100 & &100&100 & &98&100  \\
				\bottomrule
			\end{widetable}
			\begin{tablenotes}
				\scriptsize
				\item \textsc{NOTE}:  All power results are for $5\%$ nominal size under the continuous design with homoskedasticity and dense deviations as described in Section \ref{sec:sim} of the paper. The first panel repeats baseline results from Table \ref{tab:Power} with a coefficient of determination $\mathrm{R}^2$ at $1/6$ under the null, a fraction of tested restrictions relative to number of regressors $r/m$ at $9/12$, and a fraction of regressors relative to sample size $m/n$ at $4/5$. The remaining three panels make ceteris paribus deviations by varying either $\mathrm{R}^2$, $r/m$, or $m/n$. LO: leave-out test, EF: exact F test. Results from 10000 Monte-Carlo replications.
			\end{tablenotes}
		\end{threeparttable}
	}
\end{table}

\newpage

\end{appendices}

\end{document}